\newcommand\bcmdtab{\noindent\bgroup\tabcolsep=0pt%
  \begin{tabular}{@{}p{10pc}@{}p{20pc}@{}}}
\newcommand\ecmdtab{\end{tabular}\egroup}
  \title[Theory and Practice of Logic Programming]
        {Restricted Chase Termination for Existential Rules: a Hierarchical Approach and Experimentation}
  \author[A. Karimi, H. Zhang, J-H. You]
         {Arash Karimi{$^1$}, Heng Zhang{$^2$}, Jia-Huai You{$^1$}
         	\\
         {$^1$}Department of Computing Science, University of Alberta, Edmonton, Canada\\{$^2$}School of Software Engineering, Tianjin University, 
         Tianjin, 
         China\\
         \email{akarimi@ualberta.ca}}
\algnewcommand{\algorithmicgoto}{\textbf{goto}}%
\algnewcommand{\Goto}[1]{\algorithmicgoto~\ref{#1}}%
\newcommand{\WA}{\text{WA}}
\newcommand{\aGRD}{\text{aGRD}}
\newcommand{\JA}{\text{JA}}
\newcommand{\SWA}{\text{SWA}}
\newcommand{\kSAFE}{\text{$k$-$\mathsf{safe}$}}
\newtheorem{thm}{Theorem}
\newtheorem{cor}[thm]{Corollary}
\newtheorem{prop}[thm]{Proposition}
\newtheorem{defn}{Definition}
\newtheorem{rem}{Remark}
\newtheorem{exm}{Example}
\DeclareMathAlphabet{\mathbfit}{OML}{cmm}{b}{it}
\long\def\comment#1{}
\def\BState{\State\hskip-\ALG@thistlm}
\begin{document}

\maketitle

  \begin{abstract}
    The chase procedure for existential rules is an indispensable tool for several database applications, where its termination guarantees the decidability of these tasks. Most previous studies have focused on the skolem chase variant and its termination analysis. It is known that the restricted chase variant is a more powerful tool in termination analysis provided a database is given. But all-instance termination presents a challenge since the critical database and similar techniques do not work.
    In this paper, we develop a novel technique to characterize the activeness of all possible cycles of a certain length for the restricted chase, which leads to the formulation of a parameterized class of the finite restricted chase, called $k$-$\mathsf{safe}(\Phi)$.
    This approach applies to any class of finite skolem chase identified with a condition of acyclicity. 
    More generally, we show that the approach can be applied to the hierarchy of {\em bounded rule sets} previously only defined for the skolem chase.
   Experiments on a collection of ontologies from the web show the applicability of the proposed methods on real-world ontologies. Under consideration in Theory and Practice of Logic Programming (TPLP).
  \end{abstract}

  \begin{keywords}
    Existential Rules, Ontological Reasoning, Termination Analysis, Complexity of Reasoning
  \end{keywords}

\section{Introduction}

The advent of emerging applications of knowledge representation and ontological reasoning has been the motivation of
recent studies on rule-based languages, known as tuple-generating dependencies (TGDs) \cite{beeri1984proof}, existential rules \cite{baget2011rules} or Datalog$^{\pm}$ \cite{cali-lics10}, which have been considered as a powerful modeling language for applications in data exchange, data integration, ontological querying, and so on. A major advantage of this approach is that the formal semantics based on first-order logic facilitates reasoning in an application, where answering conjunctive queries over a database extended with a set of existential rules is a primary task, but unfortunately an undecidable one in general \cite{BV81}. The {\em chase procedure} is a bottom-up algorithm that extends a given database by applying specified rules. 
If such a procedure terminates, given an input database $I$, a finite rule set $R$ and a conjunctive query, we can answer the query against $R$ and $I$ by simply evaluating it on the result of the chase.
In applications such as in data exchange scenarios, we need the result that the chase terminates for all databases. Thus, determining if the chase of a rule set terminates is crucial in these applications.

Existential rules in this context are implications of the form ${\forall {\mathbf{x}}\forall{\mathbf{y}}~ (\phi({\mathbf{x}},{\mathbf{y}})\rightarrow\exists{\mathbf{z}} ~\psi({{\mathbf{x},\mathbf{z}})})}$, where $\phi$ and $\psi$ are conjunctions of atoms. 

For example, that {\em every student has a classmate who is also a student} can be expressed by 
$$\mathrm{Student}(x) \rightarrow \exists z ~ \mathrm{Classmate}(x,z), \mathrm{Student}(z)$$
where universal quantifiers are omitted.

We can remove existential quantifiers by skolemization where existential variables are replaced by skolem terms.
For the above example, the resulting skolemized rule is $$\mathrm{Student}(x) \rightarrow \mathrm{Classmate}(x,f_z(x)), \mathrm{Student}(f_z(x))$$
Given a database, say $I = \{\mathrm{Student}(a)\}$, the atom in it triggers the application of the rule, which will first add in $I$ the atoms $\mathrm{Classmate}(a,f_z(a))$, $\mathrm{Student}(f_z(a))$; repeated applications will further add $\mathrm{Classmate}(f_z(a), f_z(f_z(a)))$, $\mathrm{Student}(f_z(f_z(a)))$, and so on. In this example, the chase produces an infinite set.

Note that a set of skolemized rules is a Horn logic program.

Four main variants of the chase procedure have been considered in the literature, which are called {\em oblivious} \cite{fagin2005data}, {\em skolem} \cite{marnette2009generalized} ({\em semi-oblivious}),\footnote{The chase using skolemized rules can be expressed equivalently by introducing fresh nulls. The chase under these two different notations are considered equivalent due to a one-to-one correspondence between generated skolem terms and introduced fresh nulls.} {\em restricted} (a.k.a. {\em standard}) \cite{fagin2005data} and the {\em core} chase \cite{deutsch2008chase}.

What is common to all these chase variants is the property that, for any database instance $I$, a finite rule set $R$ and a Boolean conjunctive query $q$, $q$ is entailed by $R$ and $I$ if and only if it is entailed by the result of the chase on $R$ and $I$. However, they behave differently concerning termination. The oblivious chase is weaker than the skolem chase, in the sense that whenever the oblivious chase terminates, so does the skolem chase, but the reverse does not hold in general. In turn, the skolem chase is weaker than the restricted chase, which is itself weaker than the core chase.

The core chase is defined based on the restricted chase combined with the notion of {\em cores of relational structures} \cite{hell1992core}. This chase variant is theoretically interesting as it captures all universal models of a given rule set and instance.\footnote{Given an instance $I$ and a rule set $R$, an instance $J$ is a model of $R$ and $I$ if $J$ satisfies all rules in $R$ and there is a homomorphism from $I$ to $J$. Moreover, a model $U$ is universal for $R$ and $I$ if it has homomorphism into every model of $R$ and $I$. Models of $R$ and $I$ are not unique, but universal models of $R$ and $I$ are unique up to homomorphism.}
Given a rule set $R$ and an instance $I$, whenever there is a universal model of $R$ and $I$, the core chase produces the smallest such model.

As the cost of each step of the core chase is DP-complete, this chase variant is a bit more complicated than the other main chase variants and to the best of our knowledge, there are no known efficient algorithms to compute the core when the instances under evaluation are of non-trivial sizes.

In this paper, we focus on the skolem and the restricted versions of the chase, which have been the most investigated in the literature, as the core chase is computationally costly in practice (cf. \cite{benedikt2017benchmarking} for more details).


Despite the existence of many notions of acyclicity in the literature (cf. \cite{cuenca2013acyclicity} for a survey), there are natural examples from real-world ontologies that are non-terminating under the skolem chase but terminating under the restricted chase. However, finding a suitable characterization to ensure restricted chase termination is a challenging task, and in the last decade, to the best of our knowledge, only a few conditions have been discovered.
In \cite{carral2017restricted}, the classes of {\em restricted joint acyclicity} (RJA), {\em restricted model-faithful acyclicity} (RMFA) and {\em restricted model-summarizing acyclicity} (RMSA) of finite all-instance, all-path restricted chase are introduced which generalize the corresponding classes under the skolem chase, namely (by removing the letter R in the above names) {\em joint-acyclicity} (JA) \cite{krotzsch2011extending}, {\em model-faithful acyclicity} (MFA) and {\em model-summarizing acyclicity} (MSA) \cite{cuenca2013acyclicity}, respectively. Intuitively, the classes for the restricted chase introduce a {\em blocking criterion} to check if the head of each rule is already entailed by the derivations when constructing the arena for checking the corresponding acyclicity conditions for JA, MFA, and  MSA, respectively.
Here, we extend their work in two different directions.
First, we provide a highly general theoretical framework to identify strict superclasses of all existing classes of finite skolem chase that we are aware of, and second, we show a general critical database technique, which works uniformly for all bounded finite chase classes.

With the curiosity on the intended applications of some of the practical ontologies that we collected from the web (which will be used in our experimentation to be reported later in this paper) and the question why the restricted chase may help identify classes of terminating rule sets, 
we analyze some of them to get an understanding. Here, let us introduce a case-study of policy analysis for access control, which is abstracted from a practical ontology from the considered collection. 
This example shows how the user may utilize the approach we have developed in this paper to model and reason with a particular access policy.

Consider a scenario involving several research groups in a given lab located in a department.
Each one of these groups may have some personnel working in labs. Also, each person may possess keys which are access cards to the labs of that department.
The set of rules $R=\{r_1,r_2,r_3,r_4,r_5\}$ below is intended to model the access policy to the labs:
any member of any research lab must be able to enter their lab that is assigned to the research group ($r_1$);
for each person $x$ who has a key to a room $y$ there is a lab $u$ such that $x$ can enter $u$ and the key $y$ opens the door of that lab ($r_2$);
and if a person can enter a lab, he or she must have a matching key that opens the lab ($r_3$).

An employee of the department is responsible for granting the keys to labs ($r_4$). Once an employee grants a key to a person, the grantee is assumed to be in the possession of the key ($r_5$). 

$$\begin{array}{ll}
\!\!r_1:  \mathrm{MemOf}(x,y) \rightarrow \mathrm{Enters}(x,y)\\
\!\!r_2:  \mathrm{HasKey}(x,y) \rightarrow \exists u\ \mathrm{Enters}(x,u), \mathrm{KeyOpens}(y,u)\\
	\!\!r_3:  \mathrm{Enters}(x,y) \rightarrow \exists v\ \mathrm{HasKey}(x,v), \mathrm{KeyOpens}(v,y)\\
	\!\!r_4:  \mathrm{HasKey}(x,y) \rightarrow \exists w\ \mathrm{Grants}(w,x,y), \mathrm{Emp}(w)\\
	\!\!r_5:  \mathrm{Grants}(t,x,y) \rightarrow \mathrm{HasKey}(x,y)
\end{array}
$$

The intended meanings of the predicates are: 
$\mathrm{MemOf}(x,y)$ represents that $x$ is a member of (lab) $y$;
$\mathrm{Enters}(x,y)$ says that (person) $x$ enters (lab) $y$; $\mathrm{HasKey}(x,y)$ affirms that (person) $x$ has a key card to (room) $y$; $\mathrm{KeyOpens}(y,u)$ means that the key to (room) $y$ opens (lab) $u$; Furthermore, by $\mathrm{Grants}(w,x,y)$, we declare that (employee) $w$ grants (person) $x$ access to (room) $y$; Finally, $\mathrm{Emp}(w)$ confirms that $w$ is an employee of the department.

The rules in $R$ can be applied cyclically.
For example, an application of $r_4$ triggers an application of $r_5$ which triggers $r_4$ again. But even under the skolem chase variant, these two rules do not produce an infinite derivation sequence.
Let us consider the path $\pi_1=(r_4,r_5)$ and show the skolem chase derivations of $sk(\pi_1)$ from $\{\mathrm{HasKey}\big(a,b\big)\}$.
Recall that the skolem chase considers the skolemized version of the rules.
	$$
	\begin{array}{ll}
	I_0 = \{\mathrm{HasKey}\big(a,b\big)\} \xlongrightarrow{\langle sk(r_4),\{x/a, y/b\}\rangle}\\
	I_1 = I_0 \cup \big\{\mathrm{Grants}\big(f_w(a,b),a,b\big), \mathrm{Emp}\big(f_w(a,b)\big)\big\} \xlongrightarrow{\langle sk(r_5), \{t/f_w(a,b),x/a,y/b\}\rangle}\\
	I_2 = I_1 
	\end{array}
	$$
where $\xrightarrow{\langle sk(r), \tau\rangle}$ denotes that rule $sk(r)$ is applied using substitution $\tau$.

The sequence of derivations for the path $\pi_2=(r_5,r_4)$ can be obtained similarly.
From these derivations, we can observe that any path of rules that only consist of $r_4$ and $r_5$ is terminating under the skolem chase.

However, the cyclic applications of $r_2$ and $r_3$ lead to an infinite skolem chase. 
To illustrate, let us construct a skolem chase sequence starting from the application of $r_2$ on a singleton database $\{\mathrm{HasKey}(a,b)\}$ as follows (where the existential variable $u$ in $r_2$ is skolemized to $f_u(x,y)$ and $v$ in $r_3$ is skolemized to $f_v(x,y)$):
	$$
	\begin{array}{ll}
	I_0 = \{\mathrm{HasKey}\big(a,b\big)\} \xlongrightarrow{\langle sk(r_2),\{x/a, y/b\}\rangle}\\
	I_1 = I_0 \cup \big\{\mathrm{Enters}\big(a, f_u(a,b)\big), \mathrm{KeyOpens}\big(b,f_u(a,b)\big)\big\} \xlongrightarrow{\langle sk(r_3), \{x/a, y/f_u(a,b)\}\rangle}\\
	I_2 = I_1 \cup \big\{\mathrm{HasKey}\big(a, f_v(a,f_u(a,b))\big), \mathrm{KeyOpens}\big(f_v(a,f_u(a,b)),f_u(a,b)\big)\big\}\\
	\xlongrightarrow{\langle sk(r_2), \{x/a, y/f_v(a,f_u(a,b))\}\rangle}\\
	I_3 = I_2 \cup \big\{\mathrm{Enters}\big(a, f_u(a,f_v(a,f_u(a,b)))\big), \mathrm{KeyOpens}\big(f_v(a,f_u(a,b)),f_u(a,f_v(a,f_u(a,b)))\big)\}\\
	\dots
	\end{array}
	$$

	On the other hand, in each valid derivation of a restricted chase sequence, we must ensure that each rule $r_i$ that is used in the derivation is not already satisfied by the current conclusion set, which is the set of all derivations generated so far right before application of $r_i$.

Though the skolem chase leads to an infinite sequence, the restricted chase does terminate. 
Utilizing fresh nulls, denoted by $n_i$, for the representation of unknowns,\footnote{For the clarity of illustration, we use fresh nulls instead of skolem terms \-- there is a one-to-one correspondence between these two kinds of representations of unknown elements.} we have the following sequence of restricted chase derivations for this rule set, where $\theta$ is a substitution which maps $n_3$ to $n_1$ and other symbols to themselves.
From this derivation sequence, it can be seen that $I_3$ is not a new instance, and therefore, $(r_2, r_3, r_2)$ is not an active path, i.e., the one that leads to a (valid) restricted chase sequence.
	$$
	\begin{array}{ll}
	I_0 = \{\mathrm{HasKey}\big(a,b\big)\} \xlongrightarrow{\langle r_2,\{x/a, y/b\}\rangle}\\
	I_1 = I_0 \cup \big\{\mathrm{Enters}\big(a, n_1\big), \mathrm{KeyOpens}\big(b, n_1\big)\big\} \xlongrightarrow{\langle r_3, \{x/a, y/n_1\}\rangle}\\	
	I_2 = I_1 \cup \big\{\mathrm{HasKey}\big(a, n_2\big), \mathrm{KeyOpens}\big(n_2, n_1\big)\big\}
	\xlongrightarrow{\langle r_2, \{x/a, y/n_2\}\rangle}\\
	I_3 = I_2 \cup \big\{\mathrm{Enters}\big(a, n_3\big), \mathrm{KeyOpens}\big(n_2, n_3\big)\big\}
	\xLongrightarrow{\theta=\{n_3/n_1\}}
	\theta(I_3) \subseteq I_2
	\end{array}
	$$

From the above sequence of derivations, it can be seen that when we attempt to apply $r_2$ on $I_2$, its head can be instantiated to $\mathrm{Enters}(a,\_)$ and $\mathrm{KeyOpens}(n_2,\_)$, where we place an underline to mean that the existential variable $v$ in $r_3$ can be instantiated to form atoms that are already in $I_2$, which halts the derivation under the restricted chase.

In this paper, we will show that we can run such tests on cyclic rule applications of a fixed nesting depth, which we call $k$-cycles ($k > 0$), with the databases, which we call {\em restricted critical databases}, to define a hierarchy of classes of the finite restricted chase.

In addition, we show how to extend {\em $\delta$-bounded ontologies}, which were introduced in the context of the skolem chase variant \cite{zhang2015existential}, uniformly to $\delta$-bounded rule sets under the restricted chase variant, where $\delta$ is a bound function for the maximum depth of chase terms in a chase sequence.
Furthermore, as a concrete case of $\delta$, we consider functions constructed from an exponential tower of the length $\kappa$ (called $\mathsf{exp}_{\kappa}$ in this paper), for some given integer $\kappa$, and then we obtain the membership as well as reasoning complexities with these rule sets.

The main contributions of this paper are as follows:
\begin{enumerate}
	\item 
We show that while the traditional critical database technique \cite{marnette2009generalized}
	does not work for the restricted chase, a kind of ``critical databases" exist by which any finite restricted chase sequence can be faithfully simulated.
This is shown by Theorem \ref{key} for rules whose body contains no repeated variables (called {\em simple rules}) and by Theorem \ref{key0} for arbitrary rules.
	\item As the above results provide sufficient conditions to identify classes of the finite restricted chase, 
	we define a hierarchy of such classes, which can be instantiated to a concrete class of finite chase, given an acyclicity condition.
	This is achieved by Theorem \ref{thm:kSAFE} based on which various acyclicity conditions under the skolem chase can be generalized to introduce classes of finite chase beyond finite skolem chase.
	\item
	We show that the hierarchy of $\delta$-bounded rule sets under the skolem chase \cite{zhang2015existential}
	can be generalized by introducing $\delta$-bounded sets under the restricted chase.
	\item
	Our experimental results on a large set of ontologies collected from the web show practical applications of our approach to real-world ontologies. In particular, 
	in contrast with the current main focus of the field on acyclicity conditions for termination analysis, our experiments show that many ontologies in the real-world involve cycles of various kinds but indeed fall into the finite chase.
\end{enumerate}

The paper is organized as follows. The next section provides the preliminaries of the paper, including notations, some basic definitions, and a motivating example. 
Section \ref{previous} describes previous work on chase termination, which allows us to compare with the work of this paper during its development.  Then Section \ref{activeness} sets up the foundation of this work, namely on how to simulate restricted chase for any database by restricted chase with restricted critical databases.
We then define in Section \ref{k-safe} a hierarchy of classes of the finite restricted chase, called $k$-${\mathsf{safe}(\Phi)}$ rule sets for a given cycle function $\Phi$, by testing cycles of increasing nesting depths. In Section \ref{bounded} we apply a similar idea to $\delta$-bounded rule languages of 
\cite{zhang2015existential} and study membership checking and reasoning complexities. We implemented membership checking and a reasoning engine for $k$-${\mathsf{safe}(\Phi)}$ rule sets and conducted experiments. These are reported in Section \ref{experiments}. We then provide a further discussion on related work in Section \ref{8}. Finally, Section \ref{conclusion} concludes the paper with future directions.

This paper is a substantial revision and extension of a preliminary report of the work that appeared in \cite{karimi2018restricted}.

\section{Preliminaries}	
We assume the disjoint countably infinite sets of {\em constants} $\mathsf{C}$, 
({\em labelled}) {\em nulls} $\mathsf{N}$, 
{\em function symbols} $\mathsf{F}$, {\em variables} $\mathsf{V}$ and {\em predicates} $\mathsf{P}$. 
A {\em schema} is a finite set $\mathcal{R}$ of relation (or predicate) symbols. 
Each predicate or function symbol $Q$ is assigned a positive integer as its arity which is denoted by $arity(Q)$.
{\em Terms} are elements in $\mathsf{C}\cup\mathsf{N}\cup \mathsf{V}$.
An {\em atom} is an expression of the form $Q(\mathbf{t})$, where $\mathbf{t}\in (\mathsf{C}\cup \mathsf{V} \cup \mathsf{N})^{arity(Q)}$ and $Q$ is a predicate symbol from $\mathcal{R}$.
A {\em general instance} (or simply an {\em instance}) $I$ is a set of atoms over the schema ${\cal R}$; $term(I)$ denotes the set of terms occurring in $I$. 
A {\em database} is a finite instance $I$ where terms are constants from  $\mathsf{C}$.
A {\em substitution} is a function $h: \mathsf{C}\cup\mathsf{V}\cup\mathsf{N}\rightarrow \mathsf{C}\cup\mathsf{V}\cup\mathsf{N}$ such that (i) for all $c\in \mathsf{C}$, $h(c)=c$; (ii) for all $n\in \mathsf{N}$, $h(n)\in \mathsf{C}\cup \mathsf{N}$, and (iii) for all $v\in \mathsf{V}$, $h(v)\in \mathsf{C}\cup \mathsf{N}\cup \mathsf{V}$.
Let $S_1$ and $S_2$ be sets of atoms over the same schema. A substitution $h: S_1\rightarrow S_2$ is called a {\em homomorphism} from $S_1$ to $S_2$ if $h(S_1)\subseteq S_2$ where $h$ naturally extends to atoms and sets of atoms. 
In this paper, when we define a homomorphism $h: S_1\rightarrow S_2$, if $S_1$ and $S_2$ are clear from the context, we may just define such a homomorphism as a mapping from terms to terms.

A rule (also called a {\em tuple-generating dependency}) is a first-order sentence $r$ of the form: ${\forall {\mathbf{x}}\forall{\mathbf{y}}~ (\phi({\mathbf{x}},{\mathbf{y}})\rightarrow\exists{\mathbf{z}} ~\psi({{\mathbf{x},\mathbf{z}})})}$, where $\mathbf{x}$ and $\mathbf{y}$ are sets of universally quantified variables (in writing, we often omit the universal quantifier) and $\phi$ and $\psi$ are conjunctions of atoms constructed from relation symbols from ${\cal R}$, variables from $\mathbf{x} \cup \mathbf{y}$ and $\mathbf{x} \cup \mathbf{z}$, and constants from $\mathsf{C}$.
The formula $\phi$ (resp. $\psi$) is called the {\em body} of $r$, denoted $body(r)$ (resp. the {\em head} of $r$, denoted $head(r)$). 
In this paper, a rule set is a finite set of rules. These rules are also called {\em non-disjunctive rules} as compared to studies on disjunctive rules (see, e.g., \cite{bourhis2016guarded,carral2017restricted}).

We implicitly assume all rules are {\em standardized apart} so that no variables are shared by more than one rule, even if, for convenience, we reuse variable names in examples of the paper.
A  rule is {\em simple} if variables do not repeat locally inside the body of the rule.
A {\em simple rule set} is a finite set of simple rules.

Given a rule $r = \phi(\mathbf{x},\mathbf{y})\rightarrow\exists\mathbf{z} \, \psi({\mathbf{x},\mathbf{z}})$, a {\em skolem function symbol} $f_z$ is introduced for each variable $z\in \mathbf{z}$, where $arity(f_z)=|\mathbf{x}|$. 
This leads to consider complex terms, called {\em skolem terms}, built from skolem functions and constants. However, in this paper, we will regard skolem terms as a special class of nulls (i.e., skolem terms will be seen as a way of naming nulls).

{\em Ground terms} in this context are constants from $\mathsf{C}$ or skolem terms, and atoms in a general instance may contain skolem terms as well. A {\em ground instance} in this context is a general instance involving no variables. 
The {\em functional transformation} of $r$, denoted $sk(r)$, is the formula obtained from $r$ by replacing each occurrence of $z\in \mathbf{z}$ with $f_{z}(\mathbf{x})$.
The {\em skolemized version} of a rule set $R$, denoted $sk(R)$, is the set of rules $sk(r)$ for all $r\in R$. 

Given a rule $r = \phi(\mathbf{x},\mathbf{y})\rightarrow\exists\mathbf{z} \, \psi({\mathbf{x},\mathbf{z}})$, we use 
$var_{u}(r)$, $var_{fr}(r)$, $var_{ex}(r)$, and $var(r)$, respectively, to refer to
the set of {\em universal} ($\mathbf{x}\cup \mathbf{y}$), {\em frontier} ($\mathbf{x}$), {\em existential} ($\mathbf{z}$), and {\em all} variables appearing in $r$. Given a rule set $R$, the {\em schema} of $R$ is denoted by $sch(R)$.
Given a ground instance $I$ and a rule $r$, an {\em extension} $h'$ of a homomorphism $h$ from $body(r)$ to $I$, denoted $h' \supseteq h$, is a homomorphism from $body(r)\cup head(r)$ to $I$, that assigns, in addition to the mapping $h$, ground terms to existential variables of $r$.
A {\em position} is an expression of the form $P[i]$, where $P$ is an $n$-ary predicate and $i$ ($1\le i\le n$) is an integer. We are interested only in positions associated with frontier variables \--  for each $x\in var_{fr}(r)$, $pos_{B}(x)$ (resp. $pos_{H}(x)$) denotes the set of positions of $body(r)$ (resp. $head(r)$) in which $x$ occurs.

We further define: a {\em path} $(r_1, r_2,\dots )$ (based on $R$) is a nonempty (finite or infinite) sequence of rules from  $R$; a {\em  cycle} $(r_{1}, \dots, r_{n})$ ($n\geq 2$) is a finite path whose first and last elements coincide  (i.e., $r_{1}=r_{n}$); a {\em  $k$-cycle} ($k \geq 1$)  is a cycle in which at least one rule has $k+1$ occurrences and all other rules have $k+1$ or less occurrences. 
Given a path $\pi$, $\mathsf{Rule}({\pi})$ denotes the set of distinct rules appearing in $\pi$.

For a set or a sequence $W$, the cardinality $|W|$ is defined as usual. The size of an atom $p(\mathbf{x})$ is $|\mathbf{x}|$ and given a rule set $R$, with $||R||$, we denote the sum of the sizes of atoms in $R$.

\subsection{Skolem and Restricted Chase Variants}
The chase procedure is a construction that accepts as input a database $I$ and a rule set $R$ and adds atoms to $I$.
In this paper, our main focus is on the skolem and the restricted chase variants.

We first define triggers, active triggers, and their applications.
The skolem chase is based on triggers, while the restricted chase applies only active triggers.

\begin{defn}
	Let $R$ be a rule set, $I$ an instance, and $r \in R$.
	A pair $(r,h)$ is called {\em a trigger for $R$ on $I$} (or simply {\em a trigger on $I$}, as $R$ is always clear from the context) if $h$ is a homomorphism from $body(r)$ to $I$. If in addition there is no extension $h' \supseteq h$, where  $h': body(r)\cup head(r) \rightarrow I$, then $(r,h)$ is called {\em an active trigger on $I$}.
	
	An {\em application} of a trigger $(r,h)$ on $I$ returns $I'=I \cup h(sk(head(r)))$. We write a trigger application by $I\langle r, h\rangle I'$, or alternatively by $I \xlongrightarrow{\langle r, h \rangle} I'$. We call atoms in $h(body(r))$ the {\em triggering atoms w.r.t. $r$ and $h$}, or simply {\em triggering atoms} when $r$ and $h$ are clear from the context.
\end{defn}

Intuitively, a trigger $(r,h)$ is active if given $h$, the implication in $r$ cannot be satisfied by any extension $h' \supseteq h$ that maps existentially quantified variables to terms in $I$. 

\begin{defn}
\label{Skolem-def}
	Given a database $I$ and a rule set $R$, we define the skolem chase based on a breadth-first fixpoint construction as follows: 
	we let {\small$\mathsf{chase}_{sk}^0(I,R)=I$} and, for all $i>0$, let {\small $\mathsf{chase}_{sk}^i(I,R)$} be the union of {\small $\mathsf{chase}_{sk}^{i-1}(I,R)$} and $h(head(sk(r)))$ for all rules $r \in R$ and all homomorphisms $h$ 
	such that $(r,h)$ is a trigger on {\small $\mathsf{chase}_{sk}^{i-1}(I,R)$}. Then, we let {\small $\mathsf{chase}_{sk}(I,R)$} be the union of {\small$\mathsf{chase}_{sk}^i(I,R)$}, for all $i \ge 0$.
\end{defn}

Sometimes we need to refer to a {\em skolem chase sequence}, which is a sequence of instances that starts from a database $I_0$ and continues by applying triggers for the rules in a given path on the instance constructed so far.
The term {\em skolem chase sequence} in this case is independent of whether such a sequence can be extended to an infinite sequence or not.
We can also distinguish the two cases where the chase is terminating or not.

A finite sequence of rule applications from a path $(r_1, \dots, r_n)$ produces a finite sequence of instances $I_0, I_1, \dots, I_n$ such that (i) $I_{i-1}\langle r_i, h_i\rangle I_i$, where $(r_i, h_i)$ is a trigger on $I_{i-1}$ for all $1 \le i \le n$, (ii) there is no trigger $(r, h)$ on $I_n$ such that $(r, h)\notin \{(r_i, h_i)\}_{0\le i\le n-1}$,
and (iii) for each $1 \le i < j \le n$, assuming that $I_{i-1} \langle r_i, h_i\rangle I_i$ and $I_{j-1} \langle r_j, h_j\rangle I_j$, $r_i=r_j$ implies $h_i\neq h_j$ (i.e., homomorphism $h_i$ is different from $h_j$).
The result of the chase sequence is $I_n$.

An infinite sequence $I_0, I_1, \dots$ of instances is said to be a {\em non-terminating skolem chase sequence} if (i) for all $i\ge 1$, there exists a trigger $(r_i,h_i)$ on $I_{i-1}$ such that $I_{i-1}\langle r_i, h_i\rangle I_i$, (ii) for each $i,j\ge 1$ such that $i\ne j$, assuming that $I_{i-1} \langle r_i, h_i\rangle I_i$ and $I_{j-1} \langle r_j, h_j\rangle I_j$, $r_i=r_j$ implies $h_i\neq h_j$.\footnote{In the literature, in addition to (i) and (ii), another condition known as the {\em fairness condition for the skolem chase} is imposed: for each $i\ge 0$, and each trigger $(r_i,h_i)$ on $I_{i-1}$, there exists some $j\ge i$ such that $I_{j-1}\langle r_i, h_i\rangle I_j$. This last condition guarantees that all the triggers are eventually applied. We remove this requirement, as for the case of the skolem chase, this condition is immaterial, cf. \cite{gogacz2019all}.}
In this case, the result of the chase sequence is $\bigcup_{i \geq 0} I_i$.

From \cite{marnette2009generalized}, we know that if some skolem chase sequence of a rule set $R$ and a database $I_0$ terminates, then all instances returned by {\em any} skolem chase sequence of $I_0$ and $R$ are terminating, and are the same.

On the other hand, the restricted chase is known to be order-sensitive. For this reason, it is defined only on sequences of rule applications.

Similar to a skolem chase sequence, the main idea of a restricted chase sequence (based on a given rule set $R$) is starting from a given database and applying triggers for the rules in a path based on $R$ on the instance constructed so far.
However, unlike the skolem chase sequence, only active triggers are applied.
Similar to the case of the skolem chase, we distinguish the two cases where the chase is terminating or not.

\begin{defn} \label{restricted}
Let $R$ be a rule set and $I_0$ a database.
	\begin{itemize}
	\item
	A finite sequence $I_0, I_1,\dots, I_n$ of instances is called {\em a terminating restricted chase sequence (based on $R$)} if (i) for each $1\le i\le n$ there exists an active trigger $(r_i,h_i)$ on $I_{i-1}$ such that $I_{i-1}\langle r_i, h_i\rangle I_i$; and (ii) there exists no active trigger on $I_n$.
	The result of the chase sequence is $I_n$.
		\item
		An infinite sequence $I_0,I_1,\dots$ is called a {\em non-terminating (or infinite) restricted chase sequence (based on $R$)}
		if
		\begin{itemize}
			\item[(i)] for each $i \geq 0$, there exists an active trigger $(r_i,h_i)$ on $I_{i-1}$ such that $I_{i-1}\langle r_i, h_i\rangle I_i$;
			and
			\item[(ii)] it satisfies the {\em fairness condition}: for all $i \geq 1$ and all active triggers $(r_i,h_i)$ on $I_{i-1}$, where $r_i \in R$, there exists $j \geq i$ such that either $I_{j-1} \langle r_i,h_i\rangle I_j$ or the trigger $(r_i,h_i)$ is not active on $I_{j-1}$.
		\end{itemize}
		The result of the chase sequence is $\bigcup_{i \geq 0} I_i$.
	\end{itemize}
\end{defn}

\begin{exm}
	Let us consider instance $I=\{P(a,b), P(b,c), P(c,a), Q(a,b)\}$ and rule $r$:
	$$\begin{array}{ll}
	r: P(x,y), P(y,z), P(z,x)\rightarrow \exists u\, Q(x,u)
	\end{array}$$
	Homomorphism $h_1=\{x/a, y/b, z/c\}$ maps $body(r)$ to $I$. The pair $(r, h_1)$ is a trigger on $I$ and $I\langle r, h_1\rangle I\cup \{Q(a,f_u(a))\}$ where $f_u$ is a skolem function constructed from $u$. However, $(r, h_1)$ is not active for $I$.	
	On the other hand, homomorphism $h_2=\{x/c, y/a, z/b\}$ maps $body(r)$ to $I$ and $(r, h_2)$ is an active trigger on $I$ since there is no extension $h'_2$ of $h_2$ such that $h'_2(head(r))\subseteq I$. So, we have $I\langle r, h_2\rangle I\cup \{Q(c,f_u(c))\}$.
	Therefore, $(r, h_1)$ can be applied for the skolem chase but not for the restricted chase while 
$(r, h_2)$	 can be applied for both chase variants.
\end{exm}

Note that the fairness condition essentially says that any active trigger is eventually either applied or becoming inactive. Furthermore, an infinite restricted chase sequence $\mathcal{I}$ cannot be called non-terminating if the fairness condition is not satisfied for $\mathcal{I}$.
Recently, in \cite{gogacz2019all}, it has been shown that for rules with {\em single heads} (i.e., where the head of a rule consists of a single atom), the fairness condition can be safely neglected.

A rule set $R$ is said to be {\em (all-instance) terminating} under the restricted chase, or simply {\em restricted chase terminating} if it has no infinite restricted chase sequence w.r.t.\,all databases; otherwise, $R$ is non-terminating under the restricted chase; this is the case where there exists at least one non-terminating restricted chase sequence w.r.t.\,some database. 
	
The classes of rule sets whose chase terminates on all paths (all possible derivation sequences of chase steps) independent of the given databases (thus all instances) are denoted by $\mathsf{CT}_{\forall\forall}^{\triangle}$, where $\triangle\in\{\mathsf{sk},\mathsf{res}\}$ ($\mathsf{sk}$ for the skolem chase and $\mathsf{res}$ for the restricted chase).

Since a chase sequence is generated by a sequence of rule applications, sometimes it is convenient to talk about a chase sequence in terms of a sequence of rules that are applied. On the other hand, to each path can be assigned a chase sequence (which is not unique).

For convenience, given a finite path $\pi = (r_1, \dots, r_n)$ based on $R$ and database $I_0$, we say that {\em $\pi$ leads to a weakly restricted chase sequence (of $R$ and $I_0$)} if there are active triggers $(r_i, h_i)$ on $I_{i-1}$ $(1 \le i \le n)$ such that $I_{i-1}\langle r_i, h_i\rangle I_i$. Note that the condition is independent of whether there exists an active trigger on $I_{n}$ or not; so, we do not qualify the sequence $I_0, I_1, \dots, I_n$ as terminating or non-terminating.  Furthermore, the condition only requires the existence of active triggers and does not mention whether the fairness condition is satisfied or not in case the sequence can be expanded to an infinite one. By abuse of terminology, we will drop the word {\em weakly} in the rest of this paper when no confusion arises; this is not a technical concern related to deciding on the finite restricted chase in our approach since our approach is based on certain types of terminating restricted chase sequences.

Finally, a {\em conjunctive query} (CQ) $q$ is a formula of the form $q(\mathbf{x}) := \exists \mathbf{y} \, \Phi(\mathbf{x}, \mathbf{y})$, where $\mathbf{x}$ and $\mathbf{y}$ are tuples of variables and $\Phi(\mathbf{x},\mathbf{y})$ is a conjunction of atoms with variables in $\mathbf{x}\cup \mathbf{y}$. A {\em Boolean conjunctive query} (BCQ) is a CQ of the form $q()$.
It is well-known that, for all BCQs $q$ and for all databases $I$, $I\cup R\models q$ (under the semantics of first-order logic) if and only if $q$ is entailed by the result of the chase on $R$ and $I$ for either the semi-oblivious or the restricted chase variant \cite{fagin2005data}.

\subsection{A Concrete Example}
To illustrate the practical relevance of the restricted chase and also use it as a running example, let us consider modeling a secure communication protocol where two different signal types can be transmitted: type $A$ for inter-zone communication and type $B$ for intra-zone communication. Let us consider a scenario where a transmitter from one zone requests to establish secure communication with a receiver from another zone in this network.
There is an unknown number of trusted servers. Before a successful communication between two users can occur, following a handshake protocol, the transmitter must send a type $A$ signal to a trusted server in the same zone and receive an acknowledgment back. Then, that trusted server sends a type $B$ signal to a trusted server in the receiver zone.

Figure \ref{fig:Msg_Scenario} illustrates the above data transmission scenario where there are just two cells in each of which there are several users (solid dark circles) and base stations (under blue boxes). If a transmitter $\mathsf{t}$ in cell 1 requests to transmit a data message to a receiver $\mathsf{r}$ in cell 2, then $\mathsf{t}$ must establish a handshake protocol to some base station (e.g., $\mathsf{b1}$) in the same cell (sending and receiving to/from $\mathsf{b1}$). After a handshake protocol is established, $\mathsf{b1}$ sends a data message to some base station in cell 2 ($\mathsf{b2}$ in the figure) to complete the required communication before $\mathsf{t}$ sends a data message to $\mathsf{r}$.

	\begin{figure}[hbt!]
		\vspace{-25pt}
		\begin{center}
			\includegraphics[width=280pt]{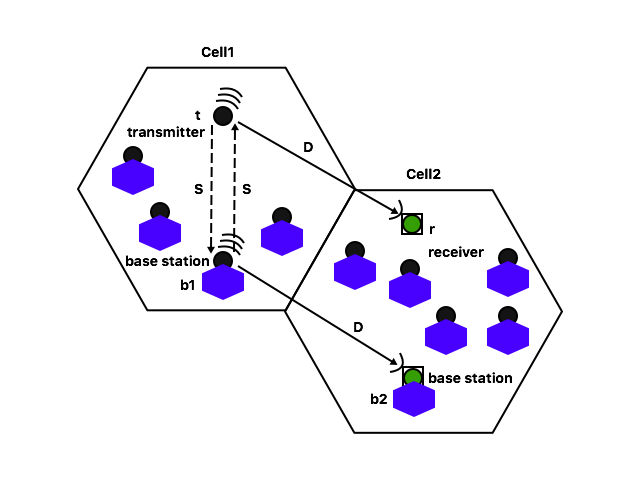}
		\end{center}
		\vspace{-30pt}
		\caption[Data transmission scenario]
		{Data transmission scenario.\label{fig:Msg_Scenario}}
	\end{figure}

Below, we use existential rules to model the required communication protocol (the modeling here does not include the actual process of transmitting signals).
Let us assume by default that every server is trusted. 

\begin{exm} \label{exm:simplified}
	Consider the rule set $R_1=\{r_1,r_2\}$ below
	 and its skolemization, where 
	$\mathrm{TypeA}(x,y)$ denotes a request to send a type $A$ signal from $x$ to $y$ and $\mathrm{TypeB}(x,y)$ a request to send a type $B$ signal from $x$ to $y$.
	$$\begin{array}{ll}
	r_1:  \mathrm{TypeB}(x,y) \rightarrow \exists u\, \mathrm{TypeA}(x,u), \mathrm{TypeA}(u,x)\\
	r_2: \mathrm{TypeB}(x,y), \mathrm{TypeA}(x,z), \mathrm{TypeA}(z,x) \rightarrow \exists v\,\mathrm{TypeB}(z,v)\\
	\\
	sk(r_1):  \mathrm{TypeB}(x,y) \rightarrow \mathrm{TypeA}(x,f_u(x)), \mathrm{TypeA}(f_u(x),x)\\
	sk(r_2): \mathrm{TypeB}(x,y), \mathrm{TypeA}(x,z), \mathrm{TypeA}(z,x) \rightarrow \mathrm{TypeB}(z,f_v(z))\\
	\end{array}
	\vspace{.1in}
	$$
where $f_u$ and $f_v$ are skolem functions constructed from $u$ and $v$, respectively.
	
	With database $I_0 = \{\mathrm{TypeB}(t,r)\}$, after applying $sk(r_1)$ and $sk(r_2)$ under the restricted chase, we get:
	$$
	\begin{array}{ll}
	I_0 = \{\mathrm{TypeB}(t,r)\} \xlongrightarrow{\langle sk(r_1),\{x/t, y / r\}\rangle}\\
	I_1 = I_0 \cup \{\mathrm{TypeA}(t, f_u(t)), \mathrm{TypeA}(f_u(t),t)\} \xlongrightarrow{\langle sk(r_2), \{x/t, y/r, z/ f_u(t)\}\rangle}\\
	I_2 = I_1 \cup \{\mathrm{TypeB}(f_u(t), f_v(f_u(t)))\}
	\end{array}
	\vspace{.1in}
	$$
That is, path $\pi_1 = (sk(r_1), sk(r_2))$ leads to a restricted chase sequence. But this is not the case for the path $\pi_2 = (sk(r_1), sk(r_2), sk(r_1))$, since the trigger for applying the last rule on the path is not active \-- with $ \mathrm{TypeB}(f_u(t),f_v(f_u(t)))$ as the triggering atom for the body of rule $sk(r_1)$,
its head can be satisfied by already derived atoms in $I_2$, namely, $\mathrm{TypeA}(f_u(t), t)$ and $\mathrm{TypeA}(t, f_u(t))$ (i.e., the existential variable $u$ in $sk(r_1)$ can be instantiated to $t$ so that the rule head is satisfied by $I_2$). 

To illustrate more subtleties, let us consider a slightly enriched rule set $R_2=\{r_3,r_4\}$. 
The difference from $R_1$ is that here we use a predicate $\mathrm{TrustedServer}(a)$ to explicitly specify that $a$ is a trusted server.
	$$\begin{array}{ll}
	r_3:  \mathrm{TypeB}(x,y) \rightarrow \exists u\, \mathrm{TrustedServer}(u), \mathrm{TypeA}(x,u), \mathrm{TypeA}(u,x)\\
	r_4: \mathrm{TypeB}(x,y), \mathrm{TypeA}(x,z), \mathrm{TypeA}(z,x) \rightarrow \exists v\, \mathrm{TrustedServer}(v), \mathrm{TypeB}(z,v)\\
	\\
	sk(r_3):  \mathrm{TypeB}(x,y) \rightarrow \mathrm{TrustedServer}(f_u(x)), 
	\mathrm{TypeA}(x,f_u(x)), \mathrm{TypeA}(f_u(x),x)\\
	sk(r_4): \mathrm{TypeB}(x,y), \mathrm{TypeA}(x,z), \mathrm{TypeA}(z,x) \rightarrow \mathrm{TrustedServer}(f_v(z)), \mathrm{TypeB}(z,f_v(z))
	\end{array}
	$$
	
	With the same input database $I_0$, 
	we can verify that any non-empty prefix of the 2-cycle
	$\sigma = (sk(r_3), sk(r_4), sk(r_3), sk(r_4), sk(r_3))$ leads to a restricted chase sequence except $\sigma$ itself. 
	Let us provide some details. 
	$$
	\begin{array}{ll}
	I_0 = \{\mathrm{TypeB}(t,r)\} \xlongrightarrow{\langle sk(r_3),\{x/t, y / r\}\rangle}\\
	I_1 = I_0 \cup \{\mathrm{TypeA}(t, f_u(t)), \mathrm{TypeA}(f_u(t),t)\} \xlongrightarrow{\langle sk(r_4), \{x/t, y/r, z/ f_u(t)\}\rangle}\\
	I_2 = I_1 \cup \{\mathrm{TypeB}(f_u(t), f_v(f_u(t)))\}
	\end{array}
	\vspace{.1in}
	$$	
	Observe that at this stage, since $t$ is not known as a trusted server (i.e., we do not have $\mathrm{TrustedServer}(t)$ in the given database), unlike the case of $R_1$, we are not able to instantiate the existential variable $u$ to $t$ to have the rule head satisfied. Thus, the restricted chase continues:
	$$
	\begin{array}{ll}
	I_3 = I_2 \cup \{\mathrm{TrustedServer}(f_u^2(t)), \mathrm{TypeA}(f_u(t), f_u^2(t)), \mathrm{TypeA}(f_u^2(t), f_u(t))\} \xlongrightarrow{\langle sk(r_4), \{x/f_u(t), y/f_v(f_u(t)), z/f_u^2(t)\}\rangle} \\
	I_4 = I_3 \cup \{\mathrm{TrustedServer}(f_v(f_u^2(t)), \mathrm{TypeB}(f_u^2(t)), f_v(f_u^2(t)))\}
	\end{array}
	$$
	Now, the pair $(sk(r_3), \{x/f_u^2(t), y/f_v(f_u^2(t))\})$ is a trigger on $I_4$. However, since the existential variable $u$ in $r_3$ can be instantiated to the skolem term $f_u(t)$ so that the head of $r_3$ is satisfied, the trigger is not active on $I_4$ and thus the chase terminates.
\end{exm}

Figures \ref{fig:SkChaseOnR2} and \ref{fig:ResChaseOnR2} illustrate the skolem and the restricted chase on the rule set $R_2$, where an arrow over a relation symbol indicates a newly derived atom, or an existing atom used to satisfy the rule head so that the restricted chase terminates.  In contrast, while $R_2$ is non-terminating under the skolem chase, it can be shown that it is all-instance terminating under the restricted chase. 

\begin{figure}[hbt!]
\label{fig0}
		\vspace{3mm}
		\centering
		\hspace{-2.5cm}
		\begin{minipage}[t]{4cm}
			\centering
			\includegraphics[scale=0.38]{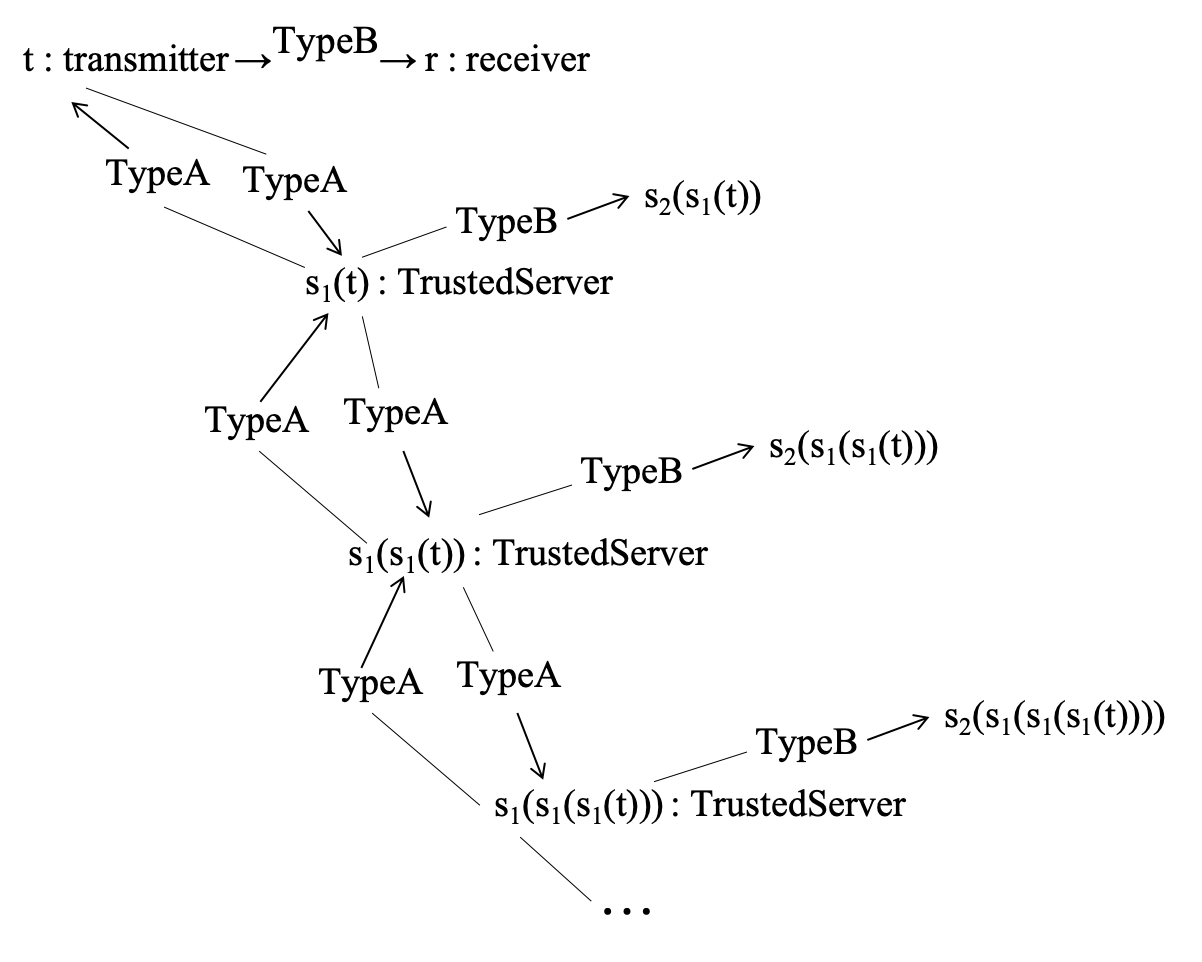}
			\caption{\hspace{-0.25mm}\small{Skolem chase on $R_2$}}
			\label{fig:SkChaseOnR2}
		\end{minipage}
		\hspace{2cm}
		\begin{minipage}[t]{4cm}
			\centering
			\vspace{-6.2cm}
			\includegraphics[scale=0.35]{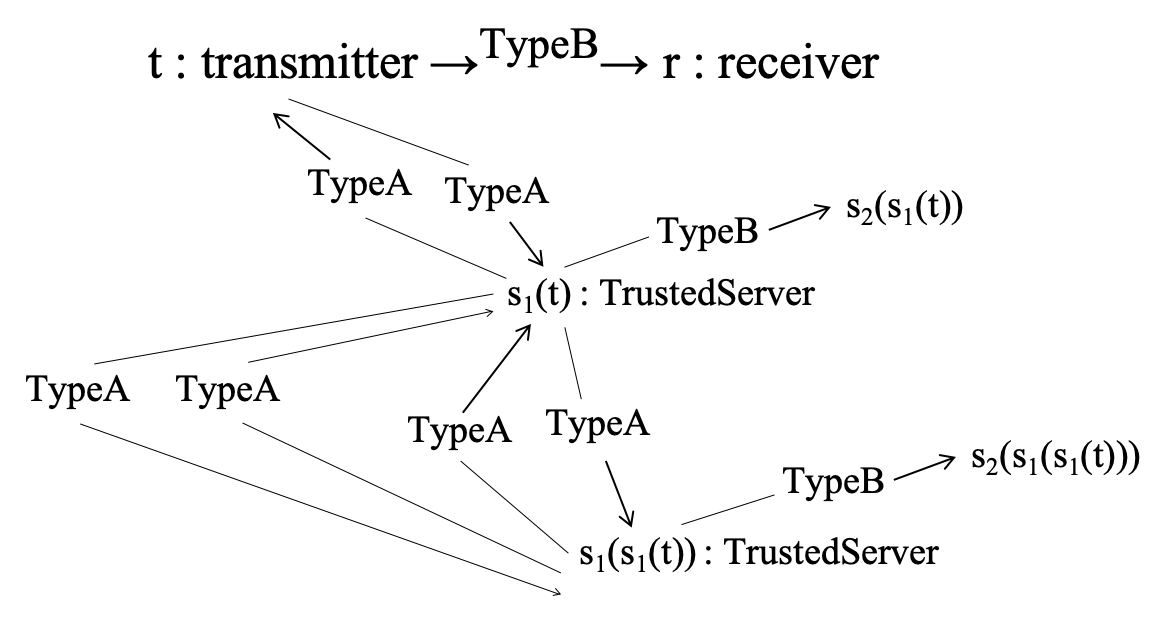}
			\vspace{2cm}
			\caption{\hspace{-0.25mm}\small{Restricted chase on $R_2$}}
			\label{fig:ResChaseOnR2}
		\end{minipage}
	\end{figure}

\section{Previous Development and Related Work}
\label{previous}

Since our technical development is often related to, or compared with, the state-of-the-art, let us introduce some key classes of the finite chase here and comment on the latest developments related to our work. 
Note that all acyclicity conditions that are given below ensure the termination of the skolem chase, and therefore, of the restricted chase, except for RMFA and RJA which ensure the termination of the restricted chase and allow to identify more terminating rule sets.

{\em Weakly-acyclic} (WA) \cite{fagin2005data}, roughly speaking, tracks the propagation of terms in different positions. A rule set is WA if there is no position in which skolem terms including skolem functions can be propagated cyclically, possibly through other positions.

{\em Joint-acyclic} (JA) \cite{krotzsch2011extending} generalizes WA as follows. Let $R$ be a rule set. For each variable $y\in var_{ex}(R)$, let $Move(y)$ be the smallest set of positions such that (i) $pos_H(y)\subseteq Move(y)$; and (ii) for each rule $r\in R$ that $var_{ex}(r)\neq \emptyset$ and for all variables $x\in var_u(r)$, if $pos_B(x)\subseteq Move(y)$, then $pos_H(x)\subseteq Move(y)$. The {\em JA dependency graph} $\JA(R)$ of $R$ is defined as: the set of vertices of $\JA(R)$ is $var_{ex}(R)$, and there is an edge from $y_1$ to $y_2$ whenever the rule that contains $y_2$ also contains a variable $x\in var_u(R)$ such that $pos_H(x)\neq \emptyset$ and $pos_B(x)\subseteq Move(y_1)$. $R\in \JA$ if $\JA(R)$ does not have a cycle.

A rule set $R$ belongs to the {\em acyclic graph of rule dependencies} (aGRD) class of acyclic rules if there is no cyclic dependency relation between any two (not necessarily different) rules of $R$, possibly through other dependent rules of $R$.
To define the rule dependency graph \cite{baget2004improving,baget2011rules} of a rule set $R$, we introduce the rule dependency relation $\prec\!\ \subseteq R\times R$ as follows. Consider two rules $r_1, r_2\in R$ such that $r_1=body(r_1)\rightarrow \exists \mathbf{z_1}\!\ head(r_1)$ and  $r_2=body(r_2)\rightarrow \exists \mathbf{z_2}\!\ head(r_2)$. Let $sk(r_1)= body(r_1)\rightarrow sk(head(r_1))$ and $sk(r_2)= body(r_2)\rightarrow sk(head(r_2))$. Then, $r_1\prec r_2$ if and only if there exists an instance $I$, substitutions $\theta_1$ (resp. $\theta_2$), for all variables in $sk(r_1)$ (resp. $sk(r_2)$) such that $\theta_1(body(r_1))\subseteq I$, $\theta_2(body(r_2))\subseteq I\cup \theta_1(sk(head(r_1)))$, and $\theta_2(body(r_2))\nsubseteq I$. $R$ has an acyclic graph of rule dependencies if $\prec$ on $R$ is acyclic. In this case, $R$ is called aGRD.

Note that the original definition of aGRD in \cite{baget2004improving} considers fresh nulls as opposed to skolem terms, which based on \cite{grau2013acyclicity} does not change the resulting relation $\prec$.\footnote{We will have more remarks on rule dependency and the important role it plays in our approach, 
after Definition \ref{depends-on}.}

{\em Model-faithful acyclic} (MFA) \cite{cuenca2013acyclicity} is a semantic acyclicity class of the skolem chase which generalizes all the skolem acyclicity classes mentioned above.
A rule set $R$ is MFA if in the skolem chase of $R$ w.r.t. the critical database of $R$ (i.e., the database which contains all possible ground atoms based on predicates of $R$ and the single constant symbol $\ast$ without any occurrence in $R$), there is no cyclic skolem term (a term with at least two occurrences of some skolem function). 

Also, {\em restricted model-faithful acyclicity} (RMFA) \cite{carral2017restricted} generalizes MFA as follows.
Let $R$ be a non-disjunctive rule set. For each rule $r\in R$ and each homomorphism $h$ such that $h$ is a homomorphism on $body(r)$, $C_{h,r}$ is defined as the union of $h(body(r))$, where each occurrence of a constant is renamed so that no constant occurs more than once, and $F_t$ for each skolem term $t$ in $h(body(r))$, where $F_t$ is the set of ground atoms involved in the derivation of atoms containing $t$.
Let RMFA($R$) be the least set of ground atoms such that it contains the critical database of $R$ and let $r\in R$ be a rule and $h$ a homomorphism from $body(r)$ to $\text{RMFA}(R)$. Let $v\in var_{ex}(r)$ be some existential variable of $r$. If $\exists v. h(head(r))$ is not logically entailed by the exhaustive application of non-generating (Datalog) rules on the set of atoms $C_{h,r}$, then $h(sk(head(r)))\subseteq \text{RMFA}(R)$. We define $R \! \in \! \text{RMFA}$ if $\text{RMFA}(R)$ contains no cyclic skolem terms.

In \cite{carral2017restricted}, a notion known as {\em restricted model-faithful acyclicity} (RMFC) has been introduced which provides a sufficient condition for deciding non-termination of the restricted chase of a given rule set for all databases. Intuitively, RMFC is based on detecting cyclic functional terms in the result of the exhaustive application of {\em unblockable} rules on the grounded version of $body(r)\cup sk(head(r))$ for some generating rule $r$, such that in the mapping used for the grounding, each variable $x$ is replaced by some fresh constant $c_x$.

To characterize a sufficient condition of termination of a given rule set for arbitrary databases, for any chase variant, it would be useful to have a special database that can serve as a {\em witness} for proving termination. Let us refer to it as a {\em critical database} $I^{\ast}$. Having such a critical database in place guarantees that given a rule set $R$, if there is some database that witnesses the existence of an infinite chase derivation of $R$, then $I^{\ast}$ is already such a witness database. If we know that such a critical database exists for some chase variant, then we can focus on sufficient conditions to decide the chase termination of a rule set w.r.t. $I^{\ast}$. 

From \cite{marnette2009generalized}, it is known that such a critical database exists for the oblivious and skolem chase variants. The construction of such a critical database for those chase variants is also easy:
Let $R$ be a rule set. Let $C$ denote the set of constants appearing in $R$ and let $\ast$ be a special constant with no occurrence in $R$. A database is a (skolem) critical database if each relation in it is a full relation on the domain $C\cup \{\ast\}$.
With this measure in place, it is then easy to show why all the known classes of terminating rule sets under skolem and oblivious chase variants (such as the aforementioned acyclicity conditions) work well. The reason is that they rely on this critical database.

However, for the restricted chase, no critical database exists. Note that for the terminating conditions of RJA and RMFA \cite{carral2017restricted} that are the only known concrete criteria for the termination of restricted chase rules, the introduced ``critical databases" are ad-hoc in a way that they do not provide a {\em principled way} to construct such a database that may lead to more general classes of terminating rule sets under the restricted chase. 
In fact, due to the nature of the problem, which is not recursively-enumerable \cite{DBLP:journals/fuin/GrahneO18}, as also pointed out in \cite{gogacz2019all}, finding such a critical database even for subsets of rules with syntactic (or semantic) restrictions is very challenging.
More recently, termination of linear rules under the restricted chase has been considered in \cite{leclere2019single}, where the body and the head of rules are composed of singleton atoms (called {\em single-body} and {\em single-head} rules). As part of this work, the existence of such a critical database is proved by simply showing a database consisting of a single atom.

Also, for single-head {\em guarded} and {\em sticky} rules, the same problem has been considered in \cite{gogacz2019all}, where the authors characterize non-termination of restricted chase sequences constructed from the aforementioned rule sets using sophisticated objects known as {\em chaseable sets} which are infinite in size. For this purpose, they show that the existence of an infinite chaseable set characterizes the existence of an infinite restricted chase derivation. In particular, for guarded rule sets, the latter can be strengthened with the fact that we can focus on acyclic databases to show the decidability of restricted chase termination for guarded TGDs. 

Furthermore, for sticky TGDs, this is shown via the existence of a {\em finitary caterpillar} which is an infinite {\em path-like} restricted chase derivation of some database the existence of which can be checked via a deterministic {\em Büchi automaton}. 
Their work is focused only on single-head rules and, to the best of our knowledge, no characterization exists for multi-head rules.
This is unlike the skolem chase for which the notion of $\delta$-bounded ontologies have been defined uniformly using the (skolem) critical database technique \cite{zhang2015existential}. 

The decision problem of termination of the oblivious and the skolem chase variants have been considered for linear and guarded rules in \cite{calautti2015chase}, and this problem is shown to be \textsc{PSpace}-complete and 2\textsc{ExpTime}-complete, respectively, for linear and guarded rules.
More recently, the same problem has been considered for sticky rules in \cite{calautti2019oblivious}, and it has been shown to be \textsc{PSpace}-complete. This shows that for these rules, sufficient and necessary conditions can be established to decide termination.

It is worthy to mention that similar to our work, in \cite{baget2014extending}, a tool was introduced to extend different (skolem) acyclicity conditions ensuring chase termination. However, unlike our approach, their extension never extends a skolem chase terminating rule set to a terminating one under the restricted chase. Their extension is also without increasing the complexity upper bound of the membership checking problem of the original rules.

Also related to this work, the notion of {\em $k$-bounded rules} was introduced in \cite{delivorias2018k} for oblivious, skolem and restricted chase variants. The $k$-boundedness problem they considered in that work checks whether, independently from any database, there is a fixed upper bound of size $k$ on the number of breadth-first chase steps for a given rule set, where $k$ is an integer.  For arbitrary values of $k$, this problem is already known to be undecidable for Datalog rules (TGDs without existential variables, also known as range-restricted TGD \cite{abiteboul1995foundations}), as established in \cite{hillebrand1995undecidable,marcinkowski1999achilles}.

The breadth-first chase procedure in \cite{delivorias2018k} refers to chase sequences in which rule applications are prioritized. Their prioritization is in a way that those rule applications which correspond to a particular breadth-first level occur before those that correspond to a higher breadth-first level.
Under the assumption that $k$ is excluded from the input, and only the rule set is given as the input, they prove an \textsc{ExpTime} upper bound for checking $k$-boundedness for the oblivious and the skolem chase variants and \textsc{2ExpTime} upper bound for the restricted chase.\footnote{Note, however, that if $k$ is part of the input, i.e., when the problem is: {\em given a rule set $R$ and a unary-encoded integer $k$, whether $R$ is $k$-bounded for the considered chase}, the complexity of the problem is in \textsc{2ExpTime} and \textsc{3ExpTime} for the aforementioned chase variants, respectively.}

Notice that as discussed in \cite{delivorias2018k}, TGDs with $k$-boundedness property are {\em union of conjunctive queries-rewritable} (or {\em UCQ-rewritable}, also known to belong to {\em finite unification sets} of TGDs (or {\em fus}) \cite{baget2011rules}).
It is worth mentioning that this latter work has a different scope from ours in that, unlike $k$-bounded TGDs of \cite{delivorias2018k}, the $k$-$\mathsf{safe}(\Phi_{\Delta})$ rule sets that result from the current paper, where $\Delta$ is some skolem acyclicity condition, already generalize Datalog rule sets (for any value of $k\ge 0$), and therefore, are not UCQ-rewritable. Besides, there is no characterization of any critical database for the restricted chase variant in \cite{delivorias2018k} which is a key issue and the focus of the current paper.

\section{Finite Restricted Chase by Activeness} \label{activeness}
In this section, we tackle the question of what kinds of tests we can do to provide sufficient conditions to identify classes of the finite restricted chase. 
With this goal in mind, we present the notion of the restricted critical database for a given path and show that any ``chained" restricted chase sequence for a given path w.r.t. an arbitrary database can be simulated by using the restricted critical database for simple rules and by using an {\em updated restricted critical database} via renaming for arbitrary rules. 

\subsection{Restricted Critical Databases and Chained Property}
A primary tool for termination analysis of the skolem chase is the technique of critical database \cite{marnette2009generalized}.
Recall that, given a rule set $R$, if $C$ denotes the set of constants which occur in $R$, the {\em critical database} (or {\em skolem critical database}) of $R$, denoted $I^R$, is a database defined in a way that each relation in $I^R$ is a full relation on the domain $C\cup \{\ast\}$, in which $\ast$ is a special constant with no occurrence in $R$.
The critical database can be used to faithfully simulate termination behavior of the skolem chase \-- a rule set is all-instance terminating if and only if it is terminating w.r.t. the skolem critical database. 
However, this technique does not apply to the restricted chase.

\begin{exm} \label{exm:I_P}
	Given a rule set $R = \{E(x_1,x_2)\rightarrow\exists z\,  E(x_2,z)\}$ and its critical database $I^{R}=\{E({\ast},{\ast})\}$, where ${\ast}$ is a fresh constant, the skolem chase does not terminate w.r.t.\,$I^R$, which is a faithful simulation of the termination behavior of $R$ under the skolem chase. But the restricted chase of $R$ and $I^{R}$ terminates in zero step, as no active triggers exist. However, the restricted chase of $R$ and database $\{E(a,b)\}$ does not terminate.
\end{exm}

	The above example is not at all a surprise, as the complexity of membership checking in the class of rule sets that have a finite restricted chase, namely whether a rule set is in $\mathsf{CT}_{\forall\forall}^\mathsf{res}$, is $\mathsf{coRE}$-hard \cite{DBLP:journals/fuin/GrahneO18}, which implies that in general there exists no effectively computable (finite) set of databases which can be used to simulate termination behavior w.r.t.\,all input databases, as otherwise the membership checking for $\mathsf{CT}_{\forall\forall}^\mathsf{res}$ would be recursively enumerable, a contradiction to the $\mathsf{coRE}$-hardness result of 
	\cite{DBLP:journals/fuin/GrahneO18}.
			
	To check for termination, one natural consideration is the notion of cycles based on a given rule set. Firstly, a chase that terminates w.r.t. a database $I$ on all $k$-cycles implies chase terminating w.r.t. $I$ on all $k'$-cycles, for all $k' >k$. This is because a chase that goes through a $k'$-cycle must go through at least one $k$-cycle. Secondly, since a non-terminating chase must apply at least one rule infinitely many times, if the termination is guaranteed for all $k$-cycles for a fixed $k$, then an infinite chase becomes impossible. Thus, testing all $k$-cycles can serve as a means to decide classes of the finite chase.
Furthermore, cycles are recursively enumerable with increasing lengths and levels of nesting.  
We can test $(k+1)$-cycles for a possible decision of the finite restricted chase when such a test failed for $k$-cycles. We, therefore, may find larger classes of terminating rule sets with an increased computational cost. We have demonstrated this approach in Example \ref{exm:simplified}, where the rule set $R_2$ is terminating on all $2$-cycles but not on some $1$-cycles. 
However, a challenging question is {\em which databases to check against}. In the following, we tackle this question.

Given a path, our goal is to simulate a sequence of restricted chase steps with an arbitrary database by a sequence of restricted chase steps with a fixed database. 
On the other hand, since in general we can only expect sufficient conditions for termination, such a simulation should at least capture all infinite derivations by a rule set with an arbitrary database.
On the other hand, we only need to consider the type of paths that potentially lead to cyclic applications of the chase. In the following, we will address this question first. 

\begin{exm} \label{exm:counter}
	Consider the singleton rule set $R$ with rule
	$
	r: 	~T(x,y),P(x,y)\rightarrow \exists z \, T(y,z)$ and its skolemization $sk(r): T(x,y),P(x,y)\rightarrow T(y,f_z(y)).$
With $I_0=\{T(a,b),P(a,b)\}$, we have:
	{\small $\mathsf{chase}_{sk}(I_0,R)$} $=I_0\cup \{T(b,f_z(b))\}$.
	After one application of $r$, no more triggers exist and thus the skolem chase of $R$ and $I_0$ terminates (so does the restricted chase of $R$ and $I_0$). This is because the existential variable $z$ in the rule head is instantiated to the skolem term $f_z(b)$, which is passed to variable $y$ in the body atom $P(x,y)$.  As the skolem term $f_z(b)$ is fresh, no trigger to $P(x,y)$ may be available right after the application of $r$. 
\end{exm}

Note that $r$ in Example \ref{exm:counter} depends on itself based on the classic notion of unification. 
To rule out similar false dependencies, we consider a dependency relation under which the cycle $(r,r)$ in the above example is not identified as a dangerous one.
Towards this goal, let us recall the notion of rule dependencies \cite{baget2004improving}\footnote{which was provided earlier for the definition of \aGRD~in Section \ref{previous}} and contrast it with its strengthened version for this section.

\begin{defn}\label{depends-on}
Let $r$ and $r'$ be two arbitrary rules. Recall that $sk(r)$ and $sk(r')$ denote their skolemizations.
\begin{itemize}
\item [(i)] Given an instance $I$, we say that $r'$ {\em depends on $r$ w.r.t.\,$I$} if there is a homomorphism $h: var_u(r)\rightarrow term(I)$ and a homomorphism $g:var_u(r') \rightarrow term(I)\cup term(h(head(sk(r))))$, such that $g(body(r'))\not\subseteq I$.
\item [(ii)] We say that $r'$ {\em depends on $r$} if there is an instance $I$ such that $r'$ depends on $r$ w.r.t. $I$.
\end{itemize}

If the condition in (ii) is not satisfied, we then say that $r'$ {\em does not depend on} $r$, or there is {\em no dependency from $r'$ to $r$}.
Similarly, if the condition in (i) is not satisfied, we then say that $r'$ {\em does not depend on $r$ w.r.t. $I$}, or there is {\em no dependency from $r'$ to $r$ w.r.t. $I$}.
\end{defn}

The definition in (ii) is adopted directly from \cite{baget2004improving}, which is what a general notion of rule dependency is expected, independent of any instance: $r'$ depends on $r$ if there is a way to apply $r$ to derive some new atoms that are used as part of a trigger to $r'$.  That $g$ is not a homomorphism from $body(r')$ to $I$ requires at least one new atom derived by $r$, given $I$.  Since instance $I$ can be arbitrary while satisfying the stated condition, no dependency from $r'$ to $r$ means that no matter what the initial database is and what 
the sequence of derivations is, up to the point of applying $r$, such $I$ that satisfies the stated condition does not exist.

By employing an extended notion of unification, the notion of {\em piece-unification} allows removal of 
a large number of $k$-cycles as irrelevant. We will discuss the details in Section \ref{experiments} when we present our experimentation.

The technical focus of rule dependency in this section is the definition in (i), which is strengthened from (ii) by fixing instance $I$. 
This is needed because our simulations of the restricted chase are generated from some particular, fixed databases. 

Next, we extend the relation of rule dependency to a (non-reflexive) transitive closure. This is needed since a termination analysis often involves sequences of derivations where rule dependencies yield a transitive relation. Given a path $\pi = (r_1, \dots, r_n)$, we are interested in a {\em chain} of dependencies among rules in $\pi$ such that the derivation with $r_n$ ultimately depends on a derivation with $r_1$, possibly via some derivations from rules in between. As a chase sequence may involve independent derivations from other rules in between, in the following, we define the notion of projection to reflect this.

Terminology:
Given a tuple $V = (v_1, \dots, v_n)$ ($n \geq 2$), a {\em projection of $V$ preserving end points}, denoted 
$V' = (v'_1, \dots, v'_m)$, is a projection of $V$ (as defined in usual way), with the additional requirement that the end points are preserved (i.e., $v'_1 = v_1$ and $v'_m=v_n$). By abuse of terminology, $V'$ above will simply be called a {\em projection} of $V$. 

\begin{defn} \label{def:chained}
	Let $R$ be a rule set, $\pi=(r_1, \dots, r_n)$ ($n \geq 2$) a path, and $I_0$ a database. 
Suppose ${\cal I} : I_0, I_1, \dots, I_n$ is a sequence of instances and $H = (h_1, \dots, h_n)$ is a tuple of homomorphisms such that $I_{i-1} \langle r_i, h_i\rangle I_i$ ($1 \leq i \leq n $).
${\cal I}$ is called {\em chained} for $\pi$ if there exists a projection 
${\cal I}' : I_0, I'_1, \dots, I'_m$ of ${\cal I}$, along with the corresponding projections $H'= (h'_1, \dots, h'_m)$ of $H$ and $\pi' = (r'_1, \dots, r'_m)$ of $\pi$, such that for all $1\leq i < m$, $r'_{i+1}$ depends on $r'_i$ w.r.t.\,$I$, where $I = I_0$ if $i = 1$ and $I = I'_{i}\setminus h'_i(head(sk(r'_{i})))$, otherwise.
If ${\cal I}$ is chained for $\pi$, we also say that ${\cal I}$ has the {\em chained property}; for easy reference, we sometime also associate the chained property with the corresponding $H$ and say $H$ is chained, or $H$ is a chained tuple of homomorphisms, w.r.t.\,$I_0$.
\end{defn}

Note that in the definition above, by $I = I'_{i}\setminus h'_i(head(sk(r'_{i})))$, the triggering atoms to $r'_{i+1}$ must include at least one new head atom derived from $r'_i$.

We now address the issue of which databases to check against for termination analysis of the restricted chase.
	For this purpose, let us define a mapping
	$e_i: \mathsf{V}\cup \mathsf{C} \rightarrow \langle\mathsf{V},i\rangle\cup \mathsf{C}$,
	where constants in $\mathsf{C}$ are mapped to themselves and each variable $v\in \mathsf{V}$ is mapped to $\langle v,i\rangle$. 

\begin{defn}\label{def:criticalDB}
	Given a path $\pi=(r_1,r_2,\dots,r_n)$ of a simple rule set, we define:
	$I^{\pi} = \{e_i(body(r_i)): 1\leq  i < n + 1\}$, which is called a {\em restricted critical database of $\pi$}.
\end{defn}

A pair $\langle x,i\rangle$ in $I^{\pi}$ is intended to name a {\em fresh constant} to replace variable $x$ in the body of a rule $r_i$. The atoms in $I^{\pi}$ that are built from these pairs and the constants already appearing in the body of a rule are independent of any given database. The goal is to use these atoms to simulate triggering atoms when necessary, in a derivation sequence from a given database. 
Let us call these pairs {\em indexed constants} and atoms with indexed constants {\em indexed atoms}.
Let us 
use the shorthand $v^i$ for $\langle v, i \rangle$. 

Note that due to the structure of $I^{\pi}$, a trigger for each rule in $\pi$ is automatically available and therefore, without the notion of chained property, a path can rather trivially lead to a restricted chase sequence. To see this, 
we can construct a restricted chase sequence $I_0,I_1,\dots,I_n$ based on $R$ as follows. For each $1\le i\le n$, we construct a trigger $(r_i,h_i)$, where for each variable $v\in var(body(r_i))$, $h_i: v\rightarrow \langle v,i\rangle$. Since indexed constants are fresh, such a trigger is active.

\begin{exm}
Consider the rule set $R$ of Example \ref{exm:counter} and a path $\pi=(r,r)$. For this rule set we have: $I^{\pi}=\{T(x^1, y^1),$ $ P(x^1, y^1),T(x^2, y^2), P(x^2, y^2)\}$.  We see that there does not exist any chained tuple of homomorphisms for $\pi$ w.r.t.\,$I^\pi$. 
In fact, the claim holds for any instance $I$ since there is no rule dependency from $r$ to $r$ (cf. Definition \ref{depends-on}).
\end{exm}

In a restricted critical database that we have seen so far, each body variable is bound to a distinct constant indexed in the order in which rules are applied. 
Later on, we will motivate and introduce the notion of updated restricted critical databases, where distinct indexed constants may be collapsed into the same indexed constant. 

\subsection{Activeness for Simple Rules}
We are ready to define the notion of activeness and show its role in termination analysis for simple rules. 

\begin{defn} ({\bf Activeness})
	\label{defn:activeness}
	Let $R$ be a rule set and $I_0$ a database. 
	A path $\pi=(r_1 \dots,r_n)$ based on $R$ is said to be {\em active} w.r.t.\,$I_0$, if there exists a chained restricted chase sequence ${\cal I}:I_0, \dots, I_n$ for $\pi$. 
\end{defn}

The activeness of a path $\pi$ requires two conditions to hold. First, $\pi$ must lead to a restricted chase sequence and second, the sequence must have the chained property.
In other words, if $\pi$ is not active w.r.t.~$I_0$, either some rule in $\pi$ does not apply due to lack of an active trigger, or the last rule in $\pi$ does not depend on the first in $\pi$ transitively in all possible derivations from $I_0$ using rules in $\pi$ in that order.

Our goal is to simulate a given chained restricted chase sequence w.r.t.\,an arbitrary database by a chained restricted chase sequence w.r.t.\,some fixed databases, while preserving rule dependencies. Such a simulation is called {\em  tight} or {\em dependency-preserving}. For presentation purposes, we will present the results in two stages, first for simple rules for which the restricted critical database $I^{\pi}$ for a path $\pi$ is sufficient. Then, in the next subsection, we present the result for arbitrary rules using updated restricted critical databases. 

\begin{thm} \label{key}
	Let $R$ be a rule set with simple rules and $\pi=(r_1,\dots,r_n)$ a path based on $R$.
	Then, $\pi$ is active w.r.t. some database if and only if $\pi$ is active w.r.t.\,the restricted critical database $I^{\pi}$.
\end{thm}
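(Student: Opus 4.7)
The ``if'' direction is immediate, since $I^{\pi}$ is itself a (finite) database: activeness of $\pi$ w.r.t.\,$I^{\pi}$ directly witnesses activeness w.r.t.\,some database. All the substance lies in the ``only if'' direction.

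For the ``only if'' direction, I would start from a witnessing chained restricted chase sequence $\mathcal{I}:I_0,I_1,\dots,I_n$ with homomorphisms $H=(h_1,\dots,h_n)$ and a projection $\pi'=(r'_1,\dots,r'_m)$ exhibiting the chain, and inductively build a companion sequence $\mathcal{J}:J_0=I^{\pi},J_1,\dots,J_n$ together with triggers $(r_i,g_i)$ and a folding map $\tau$ defined by $\tau(c)=c$ on constants of $R$, $\tau(v^j)=h_j(v)$ on indexed constants, and $\tau$ extended homomorphically to skolem terms. The design of $g_i$ follows the provenance of the original triggering atoms: for each body atom $a\in body(r_i)$, if $h_i(a)\in I_0$, set $g_i$ on the variables of $a$ so that $g_i(a)=e_i(a)\in I^{\pi}$; if instead $h_i(a)$ was first produced at some earlier step $j<i$ as part of $h_j(head(sk(r_j)))$, set $g_i$ on the variables of $a$ so that $g_i(a)$ is the corresponding atom of $g_j(head(sk(r_j)))$. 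Simplicity of the rules keeps this per-atom assignment consistent, gives $g_i(body(r_i))\subseteq J_{i-1}$ and $\tau\circ g_i=h_i$ on all variables of $r_i$ (extended to existential variables via the induced correspondence of skolem terms), and lets the chained property of $\mathcal{J}$ along the same projection $\pi'$ be inherited from $\mathcal{I}$, because each transition in the simulation from $r'_j$ to $r'_{j+1}$ uses a triggering atom drawn from $g'_j(head(sk(r'_j)))$, which is new w.r.t.\,$J'_{j-1}$ since $g'_j$ introduces simulated skolem terms absent from $J'_{j-1}$.

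The main obstacle is showing that each simulated trigger $(r_i,g_i)$ is \emph{active} on $J_{i-1}$. The natural argument proceeds by contradiction: if $g'_i\supseteq g_i$ mapped $head(r_i)$ into $J_{i-1}$, then $\tau\circ g'_i$ would extend $h_i$ and send the head atoms into $\tau(J_{i-1})$, from which activeness of $(r_i,h_i)$ would give a contradiction \emph{provided} the image lies in $I_{i-1}$. The subtlety is that $I^{\pi}$ contains $e_k(body(r_k))$ for all $k$, including ``future'' indices $k>i$, whose $\tau$-images belong to $I_{k-1}$ and may fall outside $I_{i-1}$. The saving structural fact is that fresh indexed constants $v^k$ with $k>i$ occur in $J_{i-1}$ only inside $I^{\pi}$, never in atoms produced by earlier simulated applications; combined with the provenance-preserving definition of $g_i$, this restricts any hypothetical blocking extension so that its uses of ``future'' atoms must come only through existential variables of $r_i$, subject to position-wise constraints. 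A careful case analysis, exploiting simplicity to avoid variable-aliasing conflicts, then shows that any such match either collapses (frontier positions of the head being forced to indexed constants of an incompatible rule) or can be re-routed to an atom in the $\tau$-safe portion of $J_{i-1}$ whose image lies in $I_{i-1}$, yielding the desired contradiction. Formalizing this re-routing, rather than the inductive construction itself, is the technically delicate step.
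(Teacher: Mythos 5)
Your construction follows the same route as the paper's proof: an inductive, provenance-based simulation of the given chained sequence over $I^{\pi}$, a folding map ($\tau$ in your notation, the many-to-one function $h$ in the paper) sending indexed constants back to the terms they simulate, preservation of the chained property by construction, and activeness of each simulated trigger obtained by pushing a hypothetical blocking extension through the folding map to contradict activeness of $(r_i,h_i)$. Up to the activeness step, your plan matches the paper's argument essentially line for line.

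The gap is exactly where you flag it: the treatment of the ``future'' atoms $e_k(body(r_k))$ with $k>i$. You correctly observe that these atoms are the obstacle to concluding $\tau(J_{i-1})\subseteq I_{i-1}$, but the proposed resolution --- that any blocking match against such an atom ``either collapses or can be re-routed'' to an atom whose $\tau$-image lies in $I_{i-1}$ --- is never carried out, and the dichotomy is not self-evident. Concretely, if some head atom of $r_i$ has only existential variables in its argument positions (so no frontier position forces an incompatible indexed constant) and a later rule $r_k$ has a body atom over the same predicate, then $e_k(body(r_k))$ supplies a match that neither collapses nor need have any alternative target in the $\tau$-safe part of $J_{i-1}$; nothing in your case analysis rules this configuration out. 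Since this is the entire content of the activeness claim --- everything else in the induction is bookkeeping --- the proposal as written does not close the ``only if'' direction. For what it is worth, the paper's own proof does not confront this point either: it asserts $h(I^{\ast}_0)\subseteq I_0$ ``by definition,'' which is justified only for those indexed atoms that are actually used as images of database triggering atoms, not for the remaining indexed atoms of $I^{\pi}$ on which $h$ is never defined. So you have isolated a genuine weak point of the argument, but your sketch does not repair it, and the claimed re-routing would need either a proof or an additional hypothesis on the shape of the rule heads.
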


\begin{proof}
	($\Leftarrow$) Immediate since $I^{\pi}$ is such a database.
	\\
	($\Rightarrow$)
	Let $I$ be a database w.r.t. which $\pi$ is active, i.e., there exists a chained tuple of homomorphisms $H=(h_1,\dots,h_n)$ for $\pi$ such that $(r_i,h_i)$ ($0<i\le n$) is an active trigger on $I_{i-1}$ and $I_{i-1}\langle r_i,h_i\rangle I_i$. So, there exists a sequence
	\begin{eqnarray}
	{\cal A} : I = I_0, I_1,  \dots, I_n
	\end{eqnarray}
	satisfying the condition: for all $1\le i\le n$, 
there is a homomorphism $h_i:var_u(r_i)\rightarrow term(I_{i-1})$, where $r_i\in R$, such that
	\begin{eqnarray}
	h_i(body(r_i)) \subseteq I_{i-1}, 
	\label{eqn:active-1}~~~~~~~\\
	\forall h_i' \supseteq h_i: h_i'(head(r_i)) \nsubseteq I_{i-1}, ~ \mbox{and} \label{eqn:active-2}
	\\
	I_i = I_{i-1} \cup h_i'(head(r_i)). \label{eqn:active-3}
	\end{eqnarray}
	We will construct a chained restricted chase sequence of $R$ w.r.t. $I^{\pi}$ based on a simulation of derivations in $\cal A$.
Let us denote this sequence by 
	\begin{eqnarray}
	{\cal B} : I^{\pi} = I^{\ast}_0, I^{\ast}_1, \dots, I^{\ast}_n.
	\end{eqnarray}
	Then, we need to have properties (\ref{eqn:active-1}), (\ref{eqn:active-2}) and (\ref{eqn:active-3}) 
	for $\mathcal{B}$ with $h_i$ and $I_{i-1}$ replaced by some homomorphism $g_i$ and instance $I^{\ast}_{i-1}$ respectively, for all $1\le i\le n$.
	
	To show the existence of such a sequence $\cal B$, we show how to construct a tuple of homomorphisms $G = (g_1, g_2, \dots, g_n)$ inductively, such  that $I^{\ast}_{i-1} \langle r_i, g_i\rangle I^{\ast}_i$, for all $1 \leq i \leq n$.
	This ensures that $\cal B$ is a skolem chase sequence. We will then show that
all the triggers are active, and along the way, show that $G$ is a chained sequence. We then conclude that $\cal B$ is, in fact, a chained restricted chase sequence.

Note that instances $I_i$ contain constants from the given database $I$ and instances $I^{\ast}_i$ contain indexed constants. Both may contain some constants appearing in rules in $\pi$.

We construct $g_i$ along with the construction of a many-to-one function $h$ that maps indexed constants appearing in $g_i$ to constants appearing in $h_i$. This provides a relation between $g_i$ and $h_i$. For any atom $a \in body(r_i)$, we call atom $h_i(a)$ an {\em image} of $g_i(a)$.  The function $h$ is many-to-one because distinct indexed constants in $g_i$ may need to be related to a constant in $h_i$ in simulation (in generating sequence ${\cal A}$, distinct variables may be bound to the same constant; but in generating sequence $\cal B$, distinct variables can only be bound to distinct indexed constants).

For $i=1$, we let $g_1(body(r_1)) \subseteq I^\pi$ with the index in indexed constants being $1$.  Such  $g_1$ uniquely exists. 
 As $(r_1, g_1)$ is clearly a trigger, we 
have $I^{\ast}_0 \langle r_1, g_1\rangle I^{\ast}_1$ under the skolem chase. For function $h$, clearly we can let $h$ be such that $h(g_1(a)) =h_1(a)$ for each atom $a \in body(r_1)$.

For any $1 < i \leq n$, we construct $g_i$ as follows. Let $a \in body(r_i)$. If $h_i (a) \in I$, i.e., the triggering atom $h_i(a)$ is from database $I$, then we let
$g_i$ map $a$ to the corresponding indexed atom in $I^\pi$ with index $i$.
If $h_i (a) \not \in I$, i.e., $h_i(a)$ is a derived atom, we then let $g_i(a)$ be any atom whose image is $h_i(a)$.\footnote{Recall that $h$ is in general many-to-one. So, we may have multiple atoms whose image is $h_i(a)$. Since the rules are assumed to be simple, choosing any of these atoms can lead to the construction of a desired tuple of homomorphisms $G$ as well as the function $h$.} Then, we can extend function $h$ by $h(g_i (a)) = h_i(a)$. Note that this is always possible for simple rules since $body(r_i)$ has no repeated variables. By construction, that $(r_i,h_i)$ is a trigger on $I_{i-1}$ implies that $(r_i,g_i)$ is a trigger on $I^{\ast}_{i-1}$.

	We now show that all triggers $(r_i, g_i)$ $(1 \leq i \leq n)$ are active, i.e.,
	\begin{eqnarray}
	\forall g_i' \supseteq g_i: g_i' (head(r_i))\nsubseteq I_{i-1}^{\ast}   \label{eqn:active},~~ 1 \leq i \leq n
	\end{eqnarray}
	To relate homomorphisms $g_i$ with $\cal B$ to $h_i$ with $\cal A$, from above we have $h(g_i(x)) = h_i(x)$, for all $x \in var_u (r_i)$. Then, it follows 
	\begin{eqnarray} 
	h(I_{i-1}^{\ast}) \subseteq I_{i-1},~~1 \leq i \leq n
	\label{eqn:mapping-subset}
	\end{eqnarray}
	which can be shown by induction: for the base case, we have $h(I_0^{\ast}) \subseteq I_0$ by definition, and for the induction step, for each $k \geq 1$, that $h(I_{k-1}^{\ast}) \subseteq I_{k-1}$ implies $h(I_{k}^{\ast}) \subseteq I_{k}$ is by the construction of homomorphism $g_k$ in $\cal B$.
	
	To prove (\ref{eqn:active-2}), for the sake of contradiction, assume that
	it does not hold,  i.e., $\exists g_i' \supseteq g_i$ s.t. $g_i' (head(r_i))\subseteq I_{i-1}^{\ast}$. This together with (\ref{eqn:mapping-subset}) implies  $h(g_i' (head(r_i)))\subseteq h(I_{i-1}^{\ast})\subseteq I_{i-1}$.
	Now let $h_i' (x)=h(g_i' (x))$. It follows 
	$h_i' (head(r_i))\subseteq I_{i-1}$, a contradiction to (\ref{eqn:active-2}). Therefore, all triggers applied in $\cal B$ are active and $\pi$ thus leads to a restricted chase sequence of $R$ and $I^{\pi}$.

Finally, ${\cal B}$ is chained because the {\em depends-on} relation in ${\cal A}$ is preserved for ${\cal B}$.
For the path $\pi=(r_1,\dots,r_n)$, assume that $r_j$ depends on $r_i$ w.r.t. $I_{i-1}$ $(1 \leq i < j \leq n)$. As ${\cal A}$ is a restricted chase sequence, 
we have homomorphisms $h_i: body(r_i) \rightarrow I_{i-1}$ and $h_j: body(r_j) \rightarrow I_{j-1}$. That $r_j$ depends on $r_i$ w.r.t. $I_{i-1}$ 
ensures that 
$h_j$ is not a homomorphism from $body(r_j)$ to $I_{j-1} \setminus h_i(head(sk(r_i)))$. 
We have already shown the existence of homomorphisms  $g_i:  body(r_i) \rightarrow I^{\ast}_{i-1}$ and $g_j:  body(r_j) \rightarrow I^{\ast}_{j-1}$. 
Since $h_j$ is not a homomorphism from $body(r_j)$ to $I_{j-1}\setminus h_i(head(sk(r_i)))$, it follows by construction that 
$g_j$ is not a homomorphism from $body(r_j)$ to $I^{\ast}_{j-1} \setminus  g_i(head(sk(r_i)))$.  We, therefore, conclude that $r_j$ depends on $r_i$ w.r.t. $I^{\ast}_{i-1}$ $(1 \leq i < j \leq n)$.
We are done.
\end{proof} 

\subsection{Activeness for Arbitrary Rules}
For non-simple rules, a tight simulation using the restricted critical database $I^{\pi}$ for a given path $\pi$ is not always possible. The following example demonstrates that not all active paths can be simulated.
	\begin{exm}\label{exm:counter2}
	Consider the following rule set $R=\{r_1,r_2,r_3\}$, where
	$$\begin{array}{ll}
	r_1: P(x,y)\rightarrow Q(x,y)\\
	r_2: R(x,y)\rightarrow T(x,y)\\
	r_3: Q(x,y), T(x,y)\rightarrow \exists z \, P(z,x), R(z,x)\\
	\end{array}$$
$R$ is not all-instance terminating since for database $I = \{P(a,b), R(a,b)\}$, there is a non-terminating restricted chase sequence starting from $I$ (assuming that the existential variable $z$ is skolemized to $f_z(x)$):%
$$\begin{array}{ll}
I_0 = I ~~~~~~~~~~~~~~~~~~~~~~~~~~
I_1 = I_0 \cup \{Q(a,b)\} ~~~~~~~\\
I_2 = I_1 \cup \{T(a,b)\} ~~~~~~~
I_3 = I_2 \cup \{P(f_z(a),a), R(f_z(a),a)\} ~~~\\......
\end{array}
$$
where the corresponding active triggers $(r_1,h_1), (r_2,h_2), (r_3,h_3)$ can be easily identified. 
However, as illustrated below, a tight simulation for any path $\pi = (r_1,r_2, \dots)$ is not possible for the restricted critical database $I^{\pi}$. 
For example, given $\pi_1 =(r_1,r_2,r_3)$, with restricted critical database $I^{\pi_1} = \{P(x^1,y^1), R(x^2,y^2),Q(x^3,y^3), T(x^3,y^3)\}$, it is easy to verify that $\pi_1$ is not active w.r.t. $I^{\pi_1}$.
To see why this is the case, consider the following derivation which is obtained after having applied the triggers $(r_1,g_1)$ and $(r_2,g_2)$ to produce
$$\begin{array}{ll}
I^{\ast}_0 = I^{\pi_1}, ~~~~~~
I^{\ast}_1 = I^{\ast}_0  \cup \{Q(x^1,y^1)\}, ~~~~~~
I^{\ast}_2 = I^{\ast}_1 \cup \{T(x^2,y^2)\}
\end{array}
$$
The reason that $\pi_1$ is not active w.r.t. $I^{\pi_1}$ is that multiple occurrences of constants $a$ and $b$ in the triggering atoms on $I_2$, i.e., $Q(a,b)$ and $T(a,b)$, are originated from the given database from {\em different} sources (atoms).  
For termination analysis, we must provide a simulation of any restricted chase sequence. Below, we discuss two possible solutions using the above example.

\begin{itemize}
\item Solution 1: Trigger $(r_3, \{x/x^3, y/y^3\})$ on $I^{\ast}_2$ is already available since $Q(x^3,y^3), T(x^3,y^3) \in I^{\ast}_2$, which can be applied to continue the chase.
\item Solution 2: Let $rn$ be a renaming function that renames indexed constants $x^2$ and $y^2$ appearing in $I^{\pi_1}$ to $x^1$ and $y^1$ respectively, i.e., $rn(I^{\pi_1}) =  \{P(x^1,y^1), R(x^1,y^1), Q(x^3,y^3), T(x^3,y^3)\}$, so that $(r_3, \{x/x^1, y/y^1\})$ is a trigger on $rn(I^{\ast}_2)$.
\end{itemize}
\end{exm} 

Solution 1 is rather weak since it allows the simulation of a chained sequence to be ``broken" without preserving rule dependency, whereas Solution 2 leads to a tight simulation, i.e., a simulation that preserves the dependency relation of the sequence being simulated.
In this paper, we formalize and develop results for Solution 2. 

Given a path $\pi$ and critical database $I^\pi$, let $\Pi_{I^\pi}$ be the set of indexed constants appearing in $I^\pi$. We define 
a {\em renaming function for $I^\pi$} to be a mapping from $\Pi_{I^\pi}$ to  $\Pi_{I^\pi}$. 
For technical clarity, we eliminate symmetric renaming functions by imposing a restriction: an indexed constant with index $i$ 
can only be renamed to an indexed constant with index $j$, where $1 \leq j < i$. In other words, an indexed constant with index $i$ in $I^\pi$ can only be renamed to one which appears in a rule in $\pi$ earlier than $r_i$.

\begin{thm} \label{key0}
	Let $R$ be a rule set and $\pi=(r_1,\dots,r_n)$ a path based on $R$.
	Then, $\pi$ is active w.r.t.\,some database if and only if there exists a renaming function $rn^{\ast}$ for $I^\pi$ such that $\pi$ is active 
w.r.t. $rn^{\ast}(I^\pi)$, where $rn^{\ast}$ is composed of at most $n$ renaming functions.
\end{thm}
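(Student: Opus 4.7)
The $(\Leftarrow)$ direction is immediate, since $rn^{\ast}(I^\pi)$ is itself a database witnessing activeness. The work lies in $(\Rightarrow)$. Suppose $\pi$ is active w.r.t.\,a database $I$ via a chained sequence ${\cal A}: I = I_0, I_1, \dots, I_n$ with homomorphisms $(h_1,\dots,h_n)$ satisfying (\ref{eqn:active-1})--(\ref{eqn:active-3}). The plan is to mimic the inductive construction from the proof of Theorem \ref{key}, but to accommodate non-simple bodies by identifying indexed constants of $I^\pi$ through a sequence of renamings $rn_1, \dots, rn_n$ (each possibly the identity), and to set $rn^{\ast} = rn_n \circ \cdots \circ rn_1$.

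I would construct, by induction on $i$, the renaming $rn_i$ and the homomorphism $g_i$, together with an auxiliary many-to-one map $h$ from the indexed constants surviving the renamings back to terms of $\mathcal{A}$, such that $h(g_i(x)) = h_i(x)$ for every $x \in var_u(r_i)$ and $h(I^{\ast}_{i-1}) \subseteq I_{i-1}$, where $I_0^{\ast} = rn^{\ast}(I^\pi)$ and the sequence ${\cal B}: I_0^{\ast}, \dots, I_n^{\ast}$ is produced by the triggers $(r_i, g_i)$. For each $a \in body(r_i)$, $g_i(a)$ is chosen as in Theorem \ref{key}: if $h_i(a) \in I$, pick the indexed atom $e_i(a) \in I^\pi$; otherwise trace $h_i(a)$ back to an atom in the head of some earlier rule application in $\mathcal{A}$ and select the corresponding atom produced in ${\cal B}$. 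The new ingredient is that whenever non-simple structure of $body(r_i)$ (a variable occurring twice, or two body atoms sharing a variable whose instantiations originate from different derived atoms) forces two indexed constants $c_1, c_2$ of indices $i_1 > i_2$ to represent the same term, we extend $rn_i$ to map $c_1$ to $c_2$. This respects the index-decreasing restriction imposed on renamings, and the collapsing is consistent because every identification reflects an actual term equality in ${\cal A}$, so $h$ descends to a well-defined map on the renamed database.

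Activeness of each trigger $(r_i, g_i)$ on $I_{i-1}^{\ast}$ then follows as in Theorem \ref{key} by contrapositive: any extension $g'_i \supseteq g_i$ with $g'_i(head(r_i)) \subseteq I^{\ast}_{i-1}$ would yield, via $h$, an extension $h'_i \supseteq h_i$ with $h'_i(head(r_i)) \subseteq I_{i-1}$, contradicting (\ref{eqn:active-2}). The chained property of ${\cal B}$ is inherited from that of $\mathcal{A}$: if $r_j$ depends on $r_i$ w.r.t.\,$I_{i-1}$, then the construction forces $g_j$ to exhibit the same dependency w.r.t.\,$I^{\ast}_{i-1}$, exactly as in the simple-rule case.

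The main obstacle I anticipate is ensuring that a renaming $rn_i$ introduced at step $i$ does not retroactively spoil the activeness of an earlier trigger $(r_j, g_j)$ for $j<i$: after applying $rn_i$, a head atom that was previously not in $I^{\ast}_{j-1}$ must not accidentally appear there. This is again handled by $h$: any collapsing-induced match would correspond, via $h$, to an already-existing match in $I_{j-1}$, contradicting the activeness of $(r_j,h_j)$ in $\mathcal{A}$. Since at most one renaming is introduced per rule application and renamings decrease indices, $rn^{\ast}$ is a composition of at most $n$ renaming functions, as required.
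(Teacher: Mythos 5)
Your proposal is correct and follows essentially the same route as the paper's proof: simulate the given chained sequence as in Theorem \ref{key}, introduce one (possibly identity) renaming per step when repeated variables or shared variables across body atoms force distinct indexed constants to coincide, and transfer activeness and the chained property back through the many-to-one map $h$ by the same contrapositive argument. Your observation that each identification must reflect an actual term equality in $\mathcal{A}$ plays exactly the role of the paper's minimality requirement on each $rn_i$, and your resolution of the retroactivity worry coincides with the paper's inductive proof that $h(rn_i(I^{\ast}_{j-1}))\subseteq I_{j-1}$.
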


\begin{proof}
($\Leftarrow$) Immediate since $rn^{\ast}(I^\pi)$ is such a database.
\\
($\Rightarrow$)
The proof follows the same structure as for Theorem \ref{key} except for the case where the tight simulation of a chase step fails to provide a trigger due to repeated variables in a rule body.

As in the proof of Theorem \ref{key}, we assume that path $\pi = (r_1, \dots, r_n)$ is active w.r.t.\,some database $I$, so that there is a chained restricted chase sequence
\begin{eqnarray}
	{\cal A} : I = I_0, I_1,  \dots, I_n
	\end{eqnarray}
generated by active triggers $(r_1, h_1), \dots, (r_n, h_n)$. We show that there exist a renaming function $rn^{\ast}$ for $I^\pi$ and a chained restricted chase sequence w.r.t. $rn^{\ast}(I^{\pi})$
	\begin{eqnarray} \label{B}
	{\cal B} : rn^{\ast}(I^{\pi}) = rn^{\ast}(I^{\ast}_0), rn^{\ast}(I^{\ast}_1), \dots, rn^{\ast}(I^{\ast}_n).
	\end{eqnarray}
generated by active triggers 
$(r_1, rn^{\ast}\circ g_1), \dots, (r_n, rn^{\ast}\circ g_n))$. We prove the existence of ${\cal B}$ by constructing $g_i$'s (and its renamed counterparts) along with the construction of a many-to-one function $h$ that relates indexed constants in $g_i$ (and its renamed counterparts) to constants in $h_i$. We apply the same argument repeatedly to show the existence of a composed renaming function $rn^{\ast}$. Let us start by constructing the first renaming function, $rn_1$.

The construction of $g_1$ is the same as in the proof of Theorem \ref{key} \-- we let $g_1(body(r_1)) \subseteq I^\pi$ with the index in indexed constants being $1$ and 
let $h(g_1(body(r_1))) =h_1(body(r_1))$.
For the inductive case ($1 < i \leq n$), we construct $g_i$ as follows. Let $a\in body(r_i)$. If $h_i (a)\in I$, i.e., the triggering atom $h_i(a)$ is from database $I$, then we let $g_i$ map $a$ to a corresponding indexed atom in $I^\pi$ with index $i$.
If $h_i (a)\not \in I$, i.e., $h_i(a)$ is a derived atom, we then consider all body atoms of $r_i$ including $a$ that form a {\em connected component} in that any two of which share at least one variable. There are in general one or more such connected components in $body(r_i)$.
For simplicity and w.l.o.g., let us assume that $body(r_i)$ consists of only one such connected component.
If $body(r_i)$ for some $1\le i\le n$ consists of more than one connected component, then we can apply the same techniques used below to construct a sequence of renaming functions \-- as long as the required properties for the construction of these functions are met for each component (cf. Case (ii) below), the same argument is applicable.

Now let us attempt to construct a mapping $g_i$ by letting $g_i(body(r_i))$ be the set of atoms whose images are precisely those in $h_i(body(r_i))$. 
There are two cases.

\smallskip \smallskip
\noindent
{\bf Case (i)} $g_i$ is a homomorphism from $body(r_i)$ to $I^{\ast}_{i-1}$. In this case, function $h$ can be extended by $h(g_i (body(r_i))) = h_i(body(r_i))$. By construction, $(r_i, g_i)$ is a trigger on $I^{\ast}_{i-1}$, and the proof that $(r_i,g_i)$ 
is active remains the same as for Theorem \ref{key}.

\smallskip \smallskip
\noindent
{\bf Case (ii)} Otherwise $g_i$ fails to be a homomorphism from $body(r_i)$ to $I^{\ast}_{i-1}$. Assume $g_i$ is the first such failure in the construction of sequence $\cal B$ so far. Note that the failure is because $g_i$ constructed this way must be a one-to-many mapping \-- $g_i$ must map a variable to distinct indexed constants because
multiple occurrences of a variable in $body(r_i)$ are instantiated to a common constant by $h_i$  but to simulate that, $g_i$ must map the same variable to distinct indexed constants. 

The failure can be remedied by a renaming function for $I^\pi$, denoted $rn_1$, by which some different indexed constants are renamed to the same one
so that $(r_i, rn_1 \circ g_i)$
 is a trigger on $rn_1(I^{\ast}_{i-1})$. Clearly, such a renaming function exists. 
We require that $rn_1$ be {\em minimal} in that the number of indexed constants that are renamed to {\em different ones} is minimized.\footnote{In other words, that an indexed constant is renamed to a different one only when it is necessary.} It is easy to see that the existence of a renaming function for $I^\pi$ implies the existence of such a minimal renaming function for $I^\pi$.  We now want to show that the sequence
\begin{eqnarray}\label{rn1}
	rn_1(I^{\pi}) = rn_1(I^{\ast}_0), rn_1(I^{\ast}_1), \dots, rn_1(I^{\ast}_{i-1}), rn_1(I^{\ast}_i)
	\end{eqnarray}
is a chained restricted chase sequence generated by triggers $(r_1, rn_1 \circ g_1), \dots, (r_i, rn_1 \circ g_i)$. The function $h$ that relates indexed constants to constants in $h_j$ $(1 \leq j \leq i)$ is updated correspondingly as  $h(rn_1 \circ g_j(body(r_j))) =h_j(body(r_j))$.

That $(r_i, rn_1 \circ g_i)$ is a trigger on $rn_1(I^{\ast}_{i-1})$ is by the construction of $rn_1$.
For each $rn_1 \circ g_j$ $(1\leq j < i)$, since for $rn_1 \circ g_j$ the only update of $g_j$ is that some different indexed constants are replaced by the same one; that $g_j$ is a homomorphism from $body(r_j)$ to $I^{\ast}_{j-1}$ implies that
$rn_1 \circ g_j$ is a homomorphism from $body(r_j)$ to $rn_1(I^{\ast}_{j-1})$. 
We now show that triggers $(r_j, rn_1 \circ g_j)~(1\leq j \leq n)$ are all active. 

The intuition behind this part of the proof is that in case (i) when we use distinct indexed constants for distinct variables, we do not introduce any possibility of ``recycled" atoms (i.e., atoms which can also be used in later derivations). Therefore, the activeness of $(r_j, h_j)$ implies activeness of $(r_j,g_j)$. On the other hand, although the above statement may not hold for case (ii), a renaming function that is minimal ensures that we do not introduce more than what is needed, i.e., $rn_1\circ g_j$ requires no more mappings to the same constants than $h_j$. This again ensures that the activeness of trigger $(r_j, h_j)$ implies activeness of trigger $(r_j, rn_1\circ g_j)$. 

More formally, the activeness of $(r_j, rn_1 \circ g_j)~(1 \leq j \leq n)$ means that the following conditions hold: for each $1 \leq j \leq n$

	\begin{eqnarray}
	\forall g_j' \supseteq rn_1 \circ g_j: g_j' (head(r_j))\nsubseteq rn_1(I_{j-1}^{\ast}) \label{eqn:active},~~ 1 \leq j \leq i
	\end{eqnarray}
		
	We let $h(rn_1 \circ g_j(x)) = h_j(x)$, for all $x \in var_u (r_j)$. Then, by induction we show that
	\begin{eqnarray} 
	h(rn_1(I_{j-1}^{\ast})) \subseteq I_{j-1},~~1 \leq j \leq i
	\label{eqn:mapping-subset}
	\end{eqnarray}

	For the base case, we have $h(rn_1(I_0^{\ast})) \subseteq I_0$, which holds due to the minimality of $rn_1$. For the induction step, for each $k \geq 1$, let us assume $h(rn_1(I_{k-1}^{\ast})) \subseteq I_{k-1}$. Then we need to show $h(rn_1(I_{k}^{\ast})) \subseteq I_{k}$. The latter can be done by the construction of $rn_1\circ g_k$ in (\ref{rn1}).
	
	We then proceed to prove (\ref{eqn:active}). For this purpose, assume that it does not hold,  i.e., $\exists g_j' \supseteq rn_1\circ g_j$ s.t. $g_j' (head(r_j))\subseteq rn_1(I_{j-1}^{\ast})$. This together with (\ref{eqn:mapping-subset}) implies  $h(g_j' (head(r_j)))\subseteq h(rn_1(I_{j-1}^{\ast}))\subseteq I_{j-1}$.
	Now let $h_j' (x)=h(g_j' (x))$. It follows 
	$h_j' (head(r_j))\subseteq I_{j-1}$, 
	which is a contradiction to our assumption that $(r_j, h_j)$ for $1\le j\le i$ active.
	This shows that all triggers applied in (\ref{rn1}) are active.

We then apply the same argument to continue the construction of sequence $\cal B$ of (\ref{B}): 
\begin{eqnarray}
	rn_1(I^{\pi}) = rn_1(I^{\ast}_0), rn_1(I^{\ast}_1), \dots, rn_1(I^{\ast}_{i-1}), rn_1(I^{\ast}_i), \dots
	\end{eqnarray}
generated by active triggers $ (r_j, rn_1 \circ g_j)$ $(1 \leq j \leq i)$ from the updated restricted critical database $rn_1(I^{\pi})$.
If case (i) applies for the simulation of a chase step in $\cal A$, then let us use the identity renaming function (which is minimal by definition). Thus, the simulation of each chase step results in a minimal renaming function. 
It follows that $rn^{\ast} = rn_n \circ \dots \circ rn_1$ and, as the chained property immediately holds by construction, sequence $\cal B$ is indeed a chained restricted chase sequence. We then conclude that $\pi$ is active w.r.t. the updated restricted critical database $rn^{\ast}(I^\pi)$. 
We are done. 
\end{proof}

In the sequel, given a path $\pi$, $I^\pi$ and $rn^{\ast}(I^\pi)$ for all renaming functions $rn^{\ast}$ are all called a restricted critical database. For clarity, we may qualify the latter as an updated restricted critical database.

The development of this section leads to the following conclusion, which can be considered the foundation of our approach to defining classes of the finite restricted chase in the paper. 

\begin{thm}\label{terminate}
Let $R$ be a rule set. For any $k > 0$, if
no $k$-cycle $\sigma$ is active w.r.t.\,$rn^{\ast}(I^\sigma)$, 
for all renaming functions $rn^{\ast}$ for $I^\sigma$, then $R$ is all-instance terminating under the restricted chase. 
\end{thm}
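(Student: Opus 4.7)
The plan is to argue by contrapositive: assume that for some database $I_0$, $R$ admits an infinite restricted chase sequence $\mathcal{I}: I_0, I_1, I_2, \ldots$ generated by active triggers $(r_1, h_1), (r_2, h_2), \ldots$; I will then exhibit, for the given $k > 0$, a $k$-cycle $\sigma$ active with respect to some updated restricted critical database $rn^{\ast}(I^\sigma)$. By Theorem \ref{key0} it suffices to show that $\sigma$ is active with respect to some database whatsoever, and the whole strategy reduces to manufacturing that database from the infinite chase.

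First I would extract an infinite chained dependency chain from $\mathcal{I}$. Assign to each step $i$ a depth $d(i)$ defined inductively: $d(i) = 1$ if every atom in $h_i(body(r_i))$ already lies in $I_0$; otherwise $d(i) = 1 + \max d(j)$ taken over those earlier steps $j$ whose head $h_j(head(sk(r_j)))$ contributes at least one atom to $h_i(body(r_i))$. Such steps $j$ serve as parents of $i$, and a depth-realising ancestor chain corresponds precisely to a chained projection in the sense of Definition \ref{def:chained}. The claim is that depths must be unbounded: if $d(i) \le M$ for all $i$, then by induction on $\ell \le M$ the set of atoms produced at depth $\le \ell$ is finite (the base case uses the finiteness of $I_0$ and $R$; the inductive step uses that the number of triggers over a finite atom set is finite), whence only finitely many distinct active triggers are available over $\mathcal{I}$, contradicting its infinity.

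Having arbitrarily long chained chains at our disposal, I would apply a pigeonhole argument to isolate a $k$-cycle. In a chained chain of length $N = |R| \cdot k + 1$, some rule $r^\ast \in R$ occurs at least $k+1$ times; shrinking to the shortest initial segment in which any rule first attains its $(k+1)$-th occurrence, we obtain a chained chain in which $r^\ast$ appears exactly $k+1$ times and every other rule at most $k$ times. Restricting further to the segment between the first and last occurrences of $r^\ast$ yields a chained sub-chain whose rule sequence $\sigma$ is a cycle in which $r^\ast$ has exactly $k+1$ occurrences and every other rule has at most $k+1$ occurrences, hence $\sigma$ is a genuine $k$-cycle. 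Reading the accompanying sub-sequence of $\mathcal{I}$ (re-initialised at the instance preceding the first $r^\ast$) as a chained restricted chase sequence witnesses that $\sigma$ is active with respect to some database, and Theorem \ref{key0} then furnishes a renaming function $rn^\ast$ making $\sigma$ active with respect to $rn^\ast(I^\sigma)$, contradicting the hypothesis.

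The main obstacle is this first step of producing an infinite chained chain inside an arbitrary infinite restricted chase sequence: the depth-counting argument hinges on showing that atoms produced up to any finite depth form a finite set and that the number of distinct active triggers on this finite set is therefore also finite. Some care is needed because parents in the dependency structure must be chosen so that the resulting ancestor chain is actually a chained projection of $\mathcal{I}$ in the sense of Definition \ref{def:chained}, rather than merely a path in an abstract dependency DAG; a subtler point is verifying that the ``depends on w.r.t.\ $I$'' relation of Definition \ref{depends-on} is faithfully witnessed by each parent--child pair once the appropriate atoms are excised from the intermediate instance, which is precisely the content of the clause $I = I'_i \setminus h'_i(head(sk(r'_i)))$ in the definition of chained sequences.
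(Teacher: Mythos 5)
Your proposal follows essentially the same route as the paper's proof: assume an infinite restricted chase sequence exists, extract from it an active chained $k$-cycle $\sigma$, and then invoke Theorem \ref{key0} (in contrapositive form) to contradict the hypothesis that no $k$-cycle is active w.r.t.\ $rn^{\ast}(I^\sigma)$ for all renaming functions. Your depth-counting and pigeonhole arguments merely make explicit the extraction step that the paper compresses into the single assertion that, since $I_0$ is finite, only finitely many independent rule applications are possible and hence the infinite sequence must contain a chained restricted chase sequence for some $k$-cycle.
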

\begin{proof}
Assume that $R$ is not all-instance terminating under restricted chase. Then for some database $I_0$ there is a non-terminating restricted chase sequence ${\cal I}: I_0,  \dots , I_j, ... $.
Since $I_0$ is finite, there can only be a finite number of independent applications of any rule. It follows that ${\cal I}$ must contain one chained restricted chase sequence for some $k$-cycle $\sigma$. 
W.l.o.g., assume that $\sigma$
appears immediately after an initial, finite segment of $\cal I$, say $I_0, \dots, I_{i}$. It follows that the non-terminating sequence ${\cal I}$ without this initial finite segment ${\cal I'}: I_i, \dots, I_j, ...$ is a non-terminating chained restricted chase sequence.

By the contraposition of the only if statement of Theorem \ref{key0}, the assumption that $\sigma$ is not active w.r.t.\,$rn^{\ast}(I^\sigma)$ for all renaming function $rn^{\ast}$ for $I^\sigma$, implies that $\sigma$ is not active w.r.t. any database, i.e., a chained restricted chase sequence for $\sigma$ does not exist, for any database, which results in a contradiction.
\end{proof}

As we have seen up to this point that renaming enables a tight simulation for termination analysis based on testing $k$-cycles. A question is whether renaming is a necessary condition in general for our termination analysis. The question is raised due to the following observation. 
\begin{exm}\label{exm:counter2_observation} Consider Example \ref{exm:counter2} again. We have seen that path $\pi_1=(r_1, r_2, r_3)$ requires renaming in order to obtain a tight simulation. Now consider $\pi_2=(r_3, r_2, r_1)$, which is a permutation of $\pi_1$. It can be shown that unlike $\pi_1$ which is not active w.r.t. restricted critical database $I^{\pi_1}$, 
$\pi_2$ is active w.r.t. restricted critical database $I^{\pi_2}$.
According to Theorem \ref{terminate}, as long as there is one $k$-cycle that is active, we do not conclude that the given rule set is all-instance terminating. For this example, since the 1-cycle $\sigma =(r_3, r_2, r_1, r_3)$ 
is active w.r.t. the restricted critical database $I^{\sigma}$, there is no conclusion that $R$ is all-instance terminating.  This may suggest that if we test all $k$-cycles, the mechanism of renaming may be redundant.
However, the next example shows that this is not the case in general. 
\end{exm}

	\begin{exm}\label{exm:counter3}
		Consider the following rule set $R' =\{r_1,r_2,r_3\}$ modified from rule set $R$ of Example \ref{exm:counter2}, where
		$$\begin{array}{ll}
		r_1: P(x,y,z), K(z)\rightarrow Q(x,y,z)\\
		r_2: R(x,y,z)\rightarrow T(x,y,z)\\
		r_3: Q(x,y,z), T(x,y,z)\rightarrow \exists v \, P(v,x,z), R(v,x,z)\\
		\end{array}$$
	$R'$ is not all-instance terminating which can be verified using the database $$I_0 = \{P(a,b,c), K(c), R(a,b,c)\}$$ 
We have the following chase sequence starting from database $I_0$ (assuming that $f_v$ is used to skolemize the existential variable $v$) by applying the rules in the path $(r_1,r_2,r_3)$ repeatedly.
	$$\begin{array}{ll}
	I_1 = I_0 \cup \{Q(a,b,c)\}, ~~~~
	I_2 = I_1 \cup \{T(a,b,c)\},\\
	I_3 = I_2 \cup \{P(f_v(a,c),a,c), R(f_v(a,c),a,c)\}, ~~~
	I_4 = I_3 \cup \{Q(f_v(a,c),a,c)\}, \\......
	\end{array}
	$$
The question is: by testing all 1-cycles, can we capture this non-terminating behavior without using renaming? As we show below, the answer is negative.

	Similar to the rule set of Example \ref{exm:counter2}, a tight simulation is not possible for any path of the form $\pi=(r_1,r_2, \dots)$ w.r.t. the restricted critical database $I^\pi$. However, unlike the rule set of Example \ref{exm:counter2}, no permutation $\pi'$ of $\pi$ may lead to a tight simulation for $\pi'$ w.r.t. the restricted critical database $I^{\pi'}$.
For example, consider the path $\pi_2=(r_3, r_2, r_1)$ which is a permutation of $\pi_1=(r_1, r_2, r_3)$.
	The restricted critical database of $\pi_2$ is as follows:
	$$I^{\pi_2} = \{Q(x^1,y^1,z^1), T(x^1,y^1,z^1), R(x^2,y^2,z^2), P(x^3,y^3,z^3), K(z^3)\}$$ 
	and we derive the following restricted chase sequence:
	\begin{eqnarray} \label{pi2}
	I^{\ast}_0 = I^{\pi_2}, ~~~~~
	I^{\ast}_1 = I^{\ast}_0  \cup \{P(f_v(x^1,z^1), x^1, z^1), R(f_v(x^1,z^1), x^1, z^1)\}, ~~~~~~\\ \nonumber
	I^{\ast}_2 = I^{\ast}_1 \cup \{T(f_v(x^1,z^1), x^1, z^1)\},~~~~~
	I^{\ast}_3 = I^{\ast}_2 \cup \{Q(x^3, y^3, z^3)\}~~~~~~~
	\end{eqnarray}
	
It is easy to check that after derivation of 	$I^{\ast}_2$, no trigger for $r_1$ exists that uses atoms derived in $I^{\ast}_2$.
Therefore, to derive $I^{\ast}_3$, we have no choice but to pick homomorphism $h=\{x/x^3, y/y^3, z/z^3\}$ to construct trigger $(r_1,h)$ to derive $Q(x^3,y^3,z^3)$. Therefore, the restricted chase terminates since there is no trigger from $I^{\ast}_3$. A similar argument applies to other permutations of $\pi_1$. If we conclude that $R'$ is all-instance terminating based on testing all 1-cycles without renaming, we would get a wrong conclusion.

On the other hand, given a (finite) path $\pi$, if it leads to a chained restricted chase sequence, starting from the updated restricted critical database $rn^{\ast}(I^{\pi})$ for some renaming function $rn^{\ast}$, then there is a tight simulation so that $\pi$ is shown to be active. For example, for $\pi_2=(r_3, r_2, r_1)$ above we can find an updated restricted critical database of $\pi_2$ as follows:
$$rn^{\ast}(I^{\pi_2})=\{Q(x^1, y^1, z^1), T(x^1, y^1, z^1), 
R(x^2, y^2, z^2),
P(x^1,y^1,z^1), K(z^1)\}$$
where indexed constants with index 3 are renamed to those with index 1,
so that $\pi_2$ is active w.r.t. $rn^{\ast}(I^{\pi_2})$.
\end{exm}

\section{$K$-$\mathsf{Safe}(\Phi)$ Rule Sets}\label{k-safe}
We now apply the results of the previous section to define classes of the finite restricted chase.
The idea is to introduce a parameter of {\em cycle function} to generalize various acyclicity notions in the literature, and we will test a path only when it fails to satisfy the given acyclicity condition.

\begin{defn} \label{def:cycle_fun}
	Let $R$ be a rule set and  $\Sigma$ the set of all finite cycles based on $R$. A {\em cycle function} is a mapping $\Phi^R:  \Sigma \rightarrow \{T,F\}$, where $T$ and $F$ denote $true$ and $false$, respectively.
\end{defn}

Let $\Phi$ be the binary function from rule sets and cycles such that
$\Phi(R,\sigma) = \Phi^R(\sigma)$, where $R$ is a rule set and $\sigma$ is a cycle. By overloading, the function $\Phi$ is also called a cycle function. 

We now address the question of how to obtain a cycle function for an arbitrary rule-based acyclicity condition of finite skolem chase e.g., JA \cite{krotzsch2011extending}, aGRD \cite{baget2004improving}, MFA \cite{cuenca2013acyclicity}, etc.

\begin{defn} \label{def:cycle_function_acyclicity}
	Let $\Delta$ denote an arbitrary acyclicity condition of finite skolem chase (for convenience, let us also use $\Delta$ to denote the class of rule sets that satisfy the acyclicity condition expressed by $\Delta$). We define 
	a cycle function $\Phi_\Delta$ as follows: for each rule set $R$ and each cycle $\sigma$ based on $R$, if the acyclicity condition $\Delta$ holds for rules in $\mathsf{Rule}({\sigma})$,\footnote{Recall that $\mathsf{Rule}({C})$ is the set of distinct rules in $C$.} then $\Phi_\Delta$ maps $(R,\sigma)$ to $T$; otherwise  $\Phi_\Delta$ maps $(R,\sigma)$ to $F$.
\end{defn}

That is, $\Phi_\Delta$ maps $(R,\sigma)$ to $T$ whenever the acyclicity condition $\Delta$ for the rule set $\mathsf{Rule}({\sigma})$ is satisfied and to $F$ otherwise. Since any non-terminating restricted chase sequence must involve a cycle of rules, any sufficient condition for acyclicity by definition already guarantees restricted chase termination.

In the sequel, we will use $RS(\Delta)$ to denote the class of rule sets that satisfy the acyclicity condition $\Delta$. 
Also, because of 
Definition~\ref{def:cycle_function_acyclicity}, we will feel free to write 
$\Phi_\Delta(R,\mathsf{Rule}(\sigma))$ for  $\Phi_\Delta(R,\sigma)$.

\begin{exm} \label{exm:cycle_function}
	Consider the rule set $R_1$ from Example \ref{exm:simplified} and assume $\Delta=\text{aGRD}$ in 
Definition~\ref{def:cycle_function_acyclicity}.
Recall that a rule set $R$ belongs to aGRD  (acyclic graph of rule dependencies) if there is no cyclic dependency relation between any two (not necessarily different) rules from $R$, possibly through other dependent rules of $R$. Clearly, the corresponding cycle function $\Phi_{\Delta}$ maps both cycles  $\sigma_1=(r_1,r_2,r_1)$ and $\sigma_2=(r_2,r_1,r_2)$ to $T$.
\end{exm}

We are ready to present our hierarchical approach to defining classes of the finite restricted chase. In the following, we may just write $\Phi$ for $\Phi_{\Delta}$ as a parameter for cycle functions, or as some fixed cycle function, in particular in a context in which an explicit reference to the underlying acyclicity condition $\Delta$ is unimportant. 

\begin{defn}\label{def:k-safe(cyc)}
	{ ({\bf $k$-${\mathsf{safe}(\Phi)}$ rule sets})} 
		Let $R$ be a rule set and $\sigma$ a $k$-cycle $(k\geq 1)$.
		We call $\sigma$ {\em safe} if for all databases $I$, $\sigma$ is not active w.r.t.~$I$.
		Furthermore, $R$ is said to be in $k$-$\mathsf{safe}(\Phi)$, or to belong to $k$-$\mathsf{safe}(\Phi)$ (under cycle function $\Phi$), if for every $k$-cycle $\sigma$ which is mapped to $F$ under $\Phi^R$, $\sigma$ is safe.
\end{defn}

For notational convenience, for $k=0$ we may write $0$-$\mathsf{safe}(\Phi_{\Delta})$ for $RS(\Delta)$. 
For example, it can be verified that the rule set $R_1$ in Example \ref{exm:simplified} is in $k$-$\mathsf{safe}(\Phi_{\Delta})$ for any $k\geq 1$ and any cycle function $\Phi_{\Delta}$ based on some skolem acyclicity condition $\Delta$ in the literature such as weak-acyclicity (WA) \cite{fagin2005data}, Joint-acyclicity (JA) \cite{krotzsch2011extending}, and MFA \cite{cuenca2013acyclicity}, etc.
It is also not difficult to see that the rule set $R_2$ in the same example belongs to $2$-$\mathsf{safe}(\Phi_{\aGRD})$ as well as $2$-$\mathsf{safe}(\Phi_{\WA})$ (but note that they do not belong to
$1$-$\mathsf{safe}(\Phi_{\aGRD})$ or $1$-$\mathsf{safe}(\Phi_{\WA})$).  However, we stress that $R_2$ does not belong to any known class of acyclicity, including RMFA and RJA. That is, rule sets like $R_2$ are recognized as a finite chase only under the hierarchical framework proposed in this paper. 

By Theorem \ref{key0}, $k$-${\mathsf{safe}(\Phi)}$ can be equivalently defined in terms of restricted critical databases. 

\begin{prop} \label{lem:arbitraryDB-renamedDB}
	 For any cycle function $\Phi$, a rule set $R$ is in $k$-${\mathsf{safe}(\Phi)}$ if and only if every $k$-cycle $\sigma$ which is mapped to $F$ under $\phi^R$ is not active w.r.t.\,$rn^{\ast}(I^\sigma)$, for all renaming functions $rn^{\ast}$. 
\end{prop}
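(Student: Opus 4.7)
The plan is to derive this proposition as an almost immediate consequence of Theorem \ref{key0}, by simply unfolding the definition of $k$-$\mathsf{safe}(\Phi)$ and applying the theorem to each relevant $k$-cycle.

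First, I would unfold Definition \ref{def:k-safe(cyc)}. By definition, $R$ belongs to $k$-$\mathsf{safe}(\Phi)$ iff every $k$-cycle $\sigma$ with $\Phi^R(\sigma)=F$ is safe, i.e., not active w.r.t.\ any database $I$. The proposition rephrases ``safety'' by replacing the universal quantification over all databases with a universal quantification over all renaming functions applied to the restricted critical database $I^\sigma$. So the entire content of the proof is to justify this replacement for each individual $k$-cycle $\sigma$ with $\Phi^R(\sigma)=F$.

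Second, I would invoke Theorem \ref{key0} applied to the path $\sigma$. That theorem states that $\sigma$ is active w.r.t.\ some database if and only if there exists a renaming function $rn^{\ast}$ for $I^\sigma$ such that $\sigma$ is active w.r.t.\ $rn^{\ast}(I^\sigma)$. Taking the contrapositive of this biconditional yields: $\sigma$ is not active w.r.t.\ any database if and only if $\sigma$ is not active w.r.t.\ $rn^{\ast}(I^\sigma)$ for all renaming functions $rn^{\ast}$. This is precisely the rewording needed.

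Finally, I would combine the two observations. For the ($\Rightarrow$) direction, if $R \in k\text{-}\mathsf{safe}(\Phi)$ and $\sigma$ is a $k$-cycle with $\Phi^R(\sigma)=F$, then $\sigma$ is safe, so by the contrapositive of Theorem \ref{key0}, $\sigma$ is not active w.r.t.\ $rn^{\ast}(I^\sigma)$ for any renaming $rn^{\ast}$. For the ($\Leftarrow$) direction, assume every such $\sigma$ fails to be active w.r.t.\ $rn^{\ast}(I^\sigma)$ for all renaming functions $rn^{\ast}$; then by Theorem \ref{key0} (in its forward direction, again contrapositively), $\sigma$ is not active w.r.t.\ any database, i.e., $\sigma$ is safe, so $R \in k\text{-}\mathsf{safe}(\Phi)$.

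There is no substantive obstacle here: the result is essentially a corollary of Theorem \ref{key0}, and the only care needed is the bookkeeping of quantifier alternations when taking contrapositives (``active w.r.t.\ some database'' $\leftrightarrow$ ``active w.r.t.\ some $rn^{\ast}(I^\sigma)$'' must be negated on both sides simultaneously to get ``inactive w.r.t.\ every database'' $\leftrightarrow$ ``inactive w.r.t.\ every $rn^{\ast}(I^\sigma)$''). The cycle function $\Phi$ plays no role in the simulation argument; it merely selects which cycles need to be tested, and this selection is identical on both sides of the equivalence, so it factors out trivially.
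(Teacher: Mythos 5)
Your proposal is correct and matches the paper exactly: the paper gives no separate proof of this proposition, introducing it only with the remark that, by Theorem \ref{key0}, $k$-$\mathsf{safe}(\Phi)$ can be equivalently defined in terms of restricted critical databases. Your unfolding of Definition \ref{def:k-safe(cyc)} and the contrapositive application of Theorem \ref{key0} to each $k$-cycle mapped to $F$ is precisely the intended (and sufficient) argument.
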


We are now in a position to show the following theorem.

\begin{thm} \label{thm:kSAFE}
Let $\Phi_\Delta$ be a cycle function.
For all $k \geq 1$,  $(k-1)$-$\mathsf{safe}(\Phi_{\Delta})\subseteq$ $k$-$\mathsf{safe}(\Phi_{\Delta}) \subseteq \mathsf{CT}_{\forall\forall}^\mathsf{res}$.
\end{thm}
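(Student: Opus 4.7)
The plan is to prove the two set inclusions $(k-1)$-$\mathsf{safe}(\Phi_\Delta)\subseteq k$-$\mathsf{safe}(\Phi_\Delta)$ and $k$-$\mathsf{safe}(\Phi_\Delta)\subseteq\mathsf{CT}_{\forall\forall}^\mathsf{res}$ separately, handling the second first because it underpins the argument I want to use for the first.

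For $k$-$\mathsf{safe}(\Phi_\Delta)\subseteq\mathsf{CT}_{\forall\forall}^\mathsf{res}$, I argue by contraposition. Assume $R$ is not all-instance terminating under the restricted chase, so some database $I_0$ admits an infinite restricted chase sequence $\mathcal{I}:I_0,I_1,\dots$. Let $R^\infty\subseteq R$ be the (nonempty) set of rules applied infinitely often in $\mathcal{I}$. After removing a finite prefix, the tail of $\mathcal{I}$ uses only rules from $R^\infty$ and applies each of them infinitely often, so it constitutes an infinite restricted chase sequence of $R^\infty$. Since $\Delta$ is a sufficient condition for all-instance skolem chase termination and the skolem chase dominates the restricted chase, $R^\infty$ cannot satisfy $\Delta$, whence $\Phi_\Delta^R(\sigma)=F$ for every cycle $\sigma$ with $\mathsf{Rule}(\sigma)=R^\infty$.

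I then exhibit a $k$-cycle $\sigma$ with $\mathsf{Rule}(\sigma)=R^\infty$ that is active. Choose $r^*\in R^\infty$ attaining maximum asymptotic frequency in the tail, and consider the window between its $j$-th and $(j+k)$-th applications. A standard averaging/pigeonhole argument shows that, for all sufficiently large $j$, every rule of $R^\infty$ occurs in this window at least once (each has positive density) and at most $k+1$ times (no rule has density exceeding that of $r^*$). Reading the window off as a path yields a $k$-cycle $\sigma$ with $\mathsf{Rule}(\sigma)=R^\infty$, active w.r.t.\ the instance at the window's start because the tail is chained by construction. Theorem~\ref{key0} together with Proposition~\ref{lem:arbitraryDB-renamedDB} then gives that $\sigma$ is active w.r.t.\ $rn^*(I^\sigma)$ for some renaming function $rn^*$. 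As $\Phi_\Delta^R(\sigma)=F$, this contradicts $R\in k$-$\mathsf{safe}(\Phi_\Delta)$.

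For $(k-1)$-$\mathsf{safe}(\Phi_\Delta)\subseteq k$-$\mathsf{safe}(\Phi_\Delta)$, the base case $k=1$ uses the convention $0$-$\mathsf{safe}(\Phi_\Delta)=RS(\Delta)$: if $R\in RS(\Delta)$, then by the downward closure of $\Delta$ no cycle of $R$ is $F$-mapped, so $R$ is vacuously in $1$-$\mathsf{safe}(\Phi_\Delta)$. For $k\geq 2$, I assume $R\in(k-1)$-$\mathsf{safe}(\Phi_\Delta)$ and suppose, for contradiction, that some $F$-mapped $k$-cycle $\sigma$ is active via a chained chase $I=I_0,I_1,\dots,I_n$. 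Picking a rule $r^*\in\mathsf{Rule}(\sigma)$ attaining the $k+1$ occurrences, I extract a sub-cycle $\sigma'$ of $\sigma$ delimited by $k$ consecutive occurrences of $r^*$ whose rule set is the whole of $\mathsf{Rule}(\sigma)$ and in which no rule attains $k+1$ occurrences; $\sigma'$ is then a $(k-1)$-cycle with $\Phi_\Delta^R(\sigma')=F$, and the corresponding sub-sequence of $I_0,\dots,I_n$ witnesses that $\sigma'$ is active, contradicting $R\in(k-1)$-$\mathsf{safe}(\Phi_\Delta)$.

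The main obstacle is combinatorial bookkeeping in both inclusions: for the second, selecting a window whose rule multiset matches a $k$-cycle with rule set exactly $R^\infty$ and still corresponds to a chained chase; for the first, extracting a $(k-1)$-sub-cycle of the same rule set in the (possibly pathological) case where several rules tie for the $k+1$ maximum occurrence count in $\sigma$. Both rest on careful pigeonhole-style choice of the window/sub-cycle boundaries and of the rule playing the role of $r^*$.
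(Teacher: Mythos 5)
Your high-level decomposition mirrors the paper's: the second inclusion ultimately rests on extracting, from a hypothetical infinite restricted chase, an active $F$-mapped $k$-cycle and then appealing to Theorem~\ref{key0} and Proposition~\ref{lem:arbitraryDB-renamedDB} (the paper packages the extraction step once and for all as Theorem~\ref{terminate}), and the first inclusion rests on a $k$-cycle ``passing through'' a cycle of lower nesting depth. The genuine gap is that both of your extraction arguments rely on combinatorial claims that are false, not merely in need of bookkeeping. For the second inclusion: a rule applied infinitely often in the tail need not have positive density (take $r_1$ applied exactly at positions $2^1,2^2,2^3,\dots$ and $r_2$ everywhere else), so for large $j$ the window between the $j$-th and $(j+k)$-th applications of the most frequent rule can miss rules of $R^\infty$ entirely; and asymptotic density does not control occurrence counts inside a single finite window, so the ``at most $k+1$ times'' bound fails as well. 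Without a window whose rule set is exactly $R^\infty$ you only obtain a cycle over some $S\subsetneq R^\infty$, which may well satisfy $\Delta$ and hence be $T$-mapped, and then $k$-safety imposes nothing on it.

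The same defect sinks the first inclusion: in the $2$-cycle $\sigma=(r_1,r_1,r_2,r_2,r_2,r_1)$ no contiguous window delimited by two consecutive occurrences of either rule is a $1$-cycle with rule set $\{r_1,r_2\}$ (the candidates are $(r_1,r_1)$, $(r_2,r_2)$, or windows in which $r_2$ still occurs three times), so your $\sigma'$ may again be $T$-mapped and the contradiction with $(k-1)$-safety evaporates. There is also a secondary issue, which the paper itself glosses over but you do not resolve either: the sub-sequence witnessing the activeness of $\sigma'$ starts from an intermediate instance containing nulls rather than a database, and the chained witness for $\sigma$ (a projection preserving endpoints) does not automatically restrict to one for $\sigma'$. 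The paper argues more coarsely on both counts: for the first inclusion it only uses that inactivity of \emph{all} $k$-cycles w.r.t.\ all restricted critical databases propagates to all $(k+1)$-cycles, and for the second it separates the case where every $k$-cycle is $T$-mapped (so $R$ itself satisfies $\Delta$ and is skolem-terminating) from the case where some cycle is $F$-mapped, which it hands to Theorem~\ref{terminate}. Your explicit attention to keeping the rule set equal to $R^\infty$ (resp.\ $\mathsf{Rule}(\sigma)$) correctly identifies a point the paper leaves implicit, but the pigeonhole arguments you propose cannot deliver it; you would need either the paper's coarser route or a genuinely different device for isolating an $F$-mapped active cycle.
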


\begin{proof}
	For the first subset relation, let us first consider the base case where $k=1$. Since any non-terminating skolem chase goes through at least one 1-cycle based on $R$, if none of the 1-cycles on $R$ violates the corresponding acyclicity condition, i.e., $\Phi_\Delta$ maps any 1-cycle $\sigma$ to $T$, then $R$ trivially belongs to $RS(\Delta)$. Thus, $RS(\Delta) = 0$-$\mathsf{safe}(\Phi_{\Delta})\subseteq$ $1$-$\mathsf{safe}(\Phi_{\Delta})$.  
Then, for all renaming functions $rn^{\ast}$, if there is no chained restricted chase sequence of $R$ and $rn^{\ast}(I^\sigma)$ for a $k$-cycle $\sigma$, then there is no chained restricted chase sequence of $R$ and $rn^{\ast}(I^{\sigma'})$ for any $(k+1)$-cycle $\sigma'$, since the latter goes through at least one $k$-cycle. This shows the first subset relation.

To show the second subset relation, let $R \in k$-$\mathsf{safe}(\Phi_\Delta)$, for any fixed $k \geq 1$. For all $k$-cycle $\sigma$, if $(R,\sigma)$ is mapped to $T$ by $\Phi_{\Delta}$ for every $k$-cycle $\sigma$, then by definition $R\in \mathsf{CT}_{\forall\forall}^\mathsf{sk}\subset \mathsf{CT}_{\forall\forall}^\mathsf{res}$. If for some $k$-cycle $\sigma$ such that $(R,\sigma)$ is mapped to $F$ by $\Phi_{\Delta}$, then by Proposition \ref{lem:arbitraryDB-renamedDB}, $R\in k$-$\mathsf{safe}(\Phi_\Delta)$ implies that $\sigma$ is not active w.r.t.\,$rn^{\ast}(I^\sigma)$ for all renaming functions $rn^{\ast}$ for $I^\sigma$. It then follows from inactiveness (Definition~\ref{defn:activeness}) and Proposition \ref{lem:arbitraryDB-renamedDB} that there are no chained restricted chase sequences of $R$ and $rn^{\ast}(I^\sigma)$. Thus, by Theorem \ref{terminate}, $R$ is restricted chase terminating w.r.t. $rn^{\ast}(I^\sigma)$.
By the first subset relation, for all $k' > k$, all $k'$-cycles are terminating. Therefore, we have $k$-$\mathsf{safe}(\Phi_{\Delta}) \subseteq \mathsf{CT}_{\forall\forall}^\mathsf{res}$.
\end{proof}

Finally, we present Algorithm \ref{alg:kSafe} to determine whether a rule set belongs to the class $k$-$\mathsf{safe}(\Phi_{\Delta})$. The procedure returns $true$ if it is and $false$ otherwise.
\begin{algorithm}
	\caption{$k$-$\mathsf{safe} ~Algorithm$}\label{alg:kSafe}
	\textbf{Input:} A set of rules $R$; An integer $k\geq 0$; A cycle function $\Phi$\\
	\textbf{Output:} Boolean value $IsAcyclic$;
	\begin{algorithmic}[1]
		\Procedure{$k$-$\mathsf{safe}(R,\Phi)$}{}
		\EndProcedure
		\State $bool$ $IsAcyclic \leftarrow true$;
		\For {$\text{each $k$-cycle } \sigma$ based on $R$}\label{marker}
			\If {$\Phi(R,\mathsf{Rule}(\sigma))=F$}
				\State Find the restricted critical database $I^{\sigma}$;
				\For {$\text{each renaming function } rn^{\ast}$}
					\If {$\sigma$ is active w.r.t.\,$rn^{\ast}(I^{\sigma})$}
						\State \Return $\lnot IsAcyclic$;
					\EndIf
				\EndFor
			\EndIf
		\EndFor
	\State \Return $IsAcyclic$;
\end{algorithmic}
\end{algorithm}

\begin{prop}
Given a rule set $R$, a cycle function $\Phi_\Delta$ and an integer $k \geq 1$, $R$ belongs to $k$-$\mathsf{safe}({\Phi}_\Delta)$ if and only if Algorithm $k$-$\mathsf{safe}$ returns $true$.\footnote{The algorithm can be improved by considering only minimal renaming functions, which however would not lower the complexity upper bound. For this reason, we do not pursue the improvement at this level of abstraction.}
\end{prop}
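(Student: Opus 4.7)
The plan is to derive this proposition as an almost immediate consequence of Proposition~\ref{lem:arbitraryDB-renamedDB}, which already recharacterises $k$-$\mathsf{safe}(\Phi_\Delta)$ via activeness on updated restricted critical databases. The algorithm is just a nested enumeration implementing that characterisation, so two things need to be checked: (i) each enumeration in the algorithm ranges over a finite set, so that the procedure terminates and the ``for all'' quantifiers behave as finite conjunctions; and (ii) the control flow of the algorithm matches the quantifier structure of Proposition~\ref{lem:arbitraryDB-renamedDB}.

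First I would establish the finiteness of the outer loop at line~4: by the definition of a $k$-cycle, each rule of $R$ may occur at most $k{+}1$ times, so the length of any $k$-cycle is bounded by $(k{+}1)|R|$, giving only finitely many $k$-cycles up to repetition patterns. Then, for any fixed $k$-cycle $\sigma$ of length $n$, the restricted critical database $I^{\sigma}$ contains only finitely many indexed constants (at most the total number of body-variable occurrences in $\sigma$ with indices from $1,\dots,n$), so there are finitely many renaming functions $rn^{\ast}$ on $\Pi_{I^{\sigma}}$ obeying the lower-index restriction imposed in Section~\ref{activeness}. Finally, the activeness check at line~8 is a finite search: it amounts to looking for a chained tuple of homomorphisms $(g_1,\dots,g_n)$ from the bodies of rules in $\sigma$ into instances obtained by at most $n$ restricted chase steps starting from $rn^{\ast}(I^\sigma)$, and both the candidate homomorphisms and the candidate chained projections range over finite sets because $n$ is fixed.

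Once finiteness is in hand, correctness is a direct match. For the \emph{only if} direction, if $R \in k$-$\mathsf{safe}(\Phi_\Delta)$ then Proposition~\ref{lem:arbitraryDB-renamedDB} says that for every $k$-cycle $\sigma$ with $\Phi_\Delta(R,\mathsf{Rule}(\sigma))=F$ and every renaming function $rn^{\ast}$, $\sigma$ is inactive w.r.t.\,$rn^{\ast}(I^{\sigma})$, so line~9 is never reached and the algorithm returns $true$. For the converse, suppose the algorithm returns $true$; then for every $k$-cycle $\sigma$ either $\Phi_\Delta(R,\mathsf{Rule}(\sigma))=T$ (and the safety condition for $\sigma$ is vacuously satisfied), or the inner loop completed without triggering line~9, which means $\sigma$ is inactive w.r.t.\,$rn^{\ast}(I^{\sigma})$ for every renaming function $rn^{\ast}$. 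In either case the hypothesis of Proposition~\ref{lem:arbitraryDB-renamedDB} is met for $\sigma$, so $R \in k$-$\mathsf{safe}(\Phi_\Delta)$.

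The main obstacle will not be the logical correspondence, which is essentially a tautology given Proposition~\ref{lem:arbitraryDB-renamedDB}, but the careful justification that line~8 is effectively decidable: one needs to argue that ``activeness of $\sigma$ w.r.t.\,a fixed instance'' reduces to searching over a bounded collection of trigger sequences and chained projections of fixed length, and that each elementary check (the existence of an active trigger and the rule-dependency clause in the chained property of Definition~\ref{def:chained}) is itself decidable on a finite instance. A minor point worth flagging is the boundary convention $0$-$\mathsf{safe}(\Phi_\Delta)=RS(\Delta)$ noted earlier; since the statement is restricted to $k\geq 1$ this case does not need to be handled, but mentioning it keeps the claim consistent with the hierarchy established in Theorem~\ref{thm:kSAFE}.
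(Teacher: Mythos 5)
Your proposal is correct and takes essentially the same route as the paper: both directions reduce to the characterisation of $k$-$\mathsf{safe}(\Phi_\Delta)$ via (in)activeness on the databases $rn^{\ast}(I^{\sigma})$ given by Proposition~\ref{lem:arbitraryDB-renamedDB} (the paper's own proof invokes Definition~\ref{def:k-safe(cyc)} for one direction and that proposition for the other, which is the same content). Your additional remarks on the finiteness of the enumerations and the decidability of the activeness check at line~8 go beyond what the paper writes down, but they only add rigor rather than change the argument.
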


\begin{proof}
($\Rightarrow$) Based on Definition \ref{def:k-safe(cyc)}, if $R$ is in $k$-$\mathsf{safe}({\Phi}_\Delta)$, then for all $k$-cycles $\sigma$ either $\Phi(R,\mathsf{Rule}(\sigma))=T$, or for all renaming functions $rn^{\ast}$ for $I^\sigma$, $\sigma$ is not active w.r.t. restricted critical database $rn^{\ast}(I^{\sigma})$. Therefore, Algorithm \ref{alg:kSafe} returns $T$.
\\
($\Leftarrow$)
By Proposition \ref{lem:arbitraryDB-renamedDB}, for all $k$-cycles $\sigma$ and for all renaming functions $rn^{\ast}$ for $I^\sigma$, if $\sigma$ is not active w.r.t. restricted critical database $rn^{\ast}(I^{\sigma})$, then the given rule set belongs to $k$-$\mathsf{safe}({\Phi}_\Delta)$, and by Theorem \ref{thm:kSAFE}, is all-instance terminating.
\end{proof}

\begin{thm} \label{thm:k-safe_comp}
Let $R$ be a given rule set and $k\ge 0$ be a unary-encoded integer.
Assuming that checking $\Delta$ can be done in \textsc{PTime},
the complexity of checking membership in $k$-$\mathsf{safe}(\Phi)$ is in $\textsc{PSpace}$.
\end{thm}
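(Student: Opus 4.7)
The plan is to implement Algorithm~\ref{alg:kSafe} in polynomial space, leaning on two facts: (i) with $k$ in unary, every $k$-cycle $\sigma$ has length at most $(k+1)\cdot|R|$, so $I^{\sigma}$ and every renaming function $rn^{\ast}$ have polynomial description size; and (ii) the activeness test inside the innermost loop can be placed in $\mathsf{NPSPACE}$, hence by Savitch's theorem its negation is in $\mathsf{PSPACE}$.

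I would first enumerate $k$-cycles $\sigma$ by iterating through rule sequences of length at most $(k+1)\cdot|R|$ and checking the $k$-cycle condition on the fly; this uses polynomial space. For each such $\sigma$, the test $\Phi(R,\mathsf{Rule}(\sigma))=F$ is decided in polynomial time by the assumption on $\Delta$, and when it holds we enumerate the renaming functions $rn^{\ast}$ one at a time. The core subroutine then decides whether $\sigma$ is active w.r.t.\,$rn^{\ast}(I^{\sigma})$ by nondeterministically guessing homomorphisms $(h_1,\ldots,h_n)$ step by step along with a projection witnessing the chained property. After each $h_i$ the procedure (a) verifies $h_i(body(r_i))\subseteq I_{i-1}^{\ast}$, (b) enumerates extensions of $h_i$ to the existential head variables of $r_i$ one at a time to confirm that none yields $h'_i(head(r_i))\subseteq I_{i-1}^{\ast}$, and (c) updates $I_{i}^{\ast}$; the chained property is then checked along the guessed projection by applying Definition~\ref{depends-on} w.r.t.\,the prescribed intermediate instance, which reduces to a polynomial-time homomorphism-existence test.

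The main obstacle I expect is that skolem terms built along the simulated chase can have depth $\Theta(n)$ and, stored as trees, exponential size. I would sidestep this by a DAG-style representation: each freshly introduced skolem term is stored as a record $\langle f_z, i, p\rangle$ where $p$ is a tuple of pointers to its argument terms already present in $I_{i-1}^{\ast}$, so every $I_i^{\ast}$ fits in polynomial space and equality of DAG-represented terms is decidable in polynomial time. With this representation the per-step checks all operate on polynomial-size objects, the inner subroutine runs in $\mathsf{NPSPACE}=\mathsf{PSPACE}$, and the outer universal enumeration over cycles and renaming functions preserves the $\mathsf{PSPACE}$ bound.
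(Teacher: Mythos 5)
Your proposal is correct and follows essentially the same route as the paper: a nondeterministic polynomial-space simulation of the chained restricted chase sequence along each $k$-cycle (of length at most $(k+1)\cdot|R|$), with triggers, activeness, and the chained property verified step by step, collapsed to \textsc{PSpace} via Savitch's theorem and closure of nondeterministic space under complementation. Two minor remarks: your DAG representation of skolem terms makes explicit a term-size issue that the paper's space bound of $((k+1)\times|R|-1)\times\beta$ atoms leaves implicit, and the per-step dependency/homomorphism-existence check is \textsc{NP}-complete in general rather than polynomial-time as you state, though neither point affects the \textsc{PSpace} bound.
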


\begin{proof}
Given a rule set $R$ and an acyclicity condition $\Delta$, let us first guess a $k$-cycle $\sigma=(\sigma_1,\dots,\sigma_{(k+1)\times|R|-1})$ based on $R$, and then check whether $\mathsf{Rule}(\sigma)\notin \Delta$.
The guessing part can be done using a non-deterministic algorithm. Furthermore, based on our assumption, the checking part can be done in \textsc{PTime}.

For the guessed $k$-cycle $\sigma$, we then proceed by guessing a renaming function $rn^{\ast}$ and a restricted chase sequence $\mathcal{I} : rn^{\ast}(I^{\sigma}),\dots,I_{(k+1)\times|R|-1}$ constructed from $\sigma$ using a tuple of chained homomorphisms $H=(h_1,\dots,h_{(k+1)\times|R|-1})$, and verifying whether $\mathcal{I}$ is chained by checking whether $\sigma$ is active w.r.t. $rn^{\ast}(I^{\sigma})$, which gives us the complement of the desired membership checking problem.

An iterative procedure is required to construct $\mathcal{I}$.
In each step $i>0$ of this procedure we need to remember each instance $I_{i-1}$ in the constructed sequence, guess a homomorphism $h_i$, and proceed to derive $I_{(k+1)\times|R|-1}$. For this purpose, we need $\textsc{NSpace}(((k+1)\times|R|-1)\times \beta)$ memory space to remember intermediate instances, where $\beta$ is the maximum number of head atoms of rules in $\sigma$. In addition, guessing each homomorphism $h_i$ can be done using an \textsc{NP} algorithm and having access to an \textsc{NP}-oracle, verifying if $h_i$ can be extended a homomorphism $h'_i$ and leads to a chained tuple of homomorphisms is \textsc{NP}-complete \cite{rutenburg1986complexity}.
All these tasks can be maintained within the same $\textsc{NSpace}(((k+1)\times|R|-1)\times \beta)$ complexity bound, giving us a $co\textsc{NSpace}(((k+1)\times|R|-1)\times \beta)$ upper bound for the complexity of membership checking.

As a corollary to Savitch's theorem \cite{savitch1970relationships}, we have \textsc{PSpace}=\textsc{NPSPace}. 
Also, based on Immerman–Szelepcsényi theorem \cite{immerman1988nondeterministic}, non-deterministic space complexity classes are closed under complementation.
Therefore, based on the above analysis, the complexity upper bound for the membership checking problem is in $\textsc{PSpace}$.
\end{proof}

\begin{rem}\label{remark1}
	Based on Theorem \ref{thm:k-safe_comp}, it can be seen that for $\Delta\in \{\WA,\JA,\SWA\}$,\footnote{SWA denotes the {\em super-weak acyclicity} condition of skolem chase terminating rule sets \cite{marnette2009generalized}.} the complexity of checking \kSAFE$(\Phi_{\Delta})$ is in \textsc{PSpace}. This shows that our conditions, when considering skolem acyclicity criteria for which membership checking can be done in \textsc{PTime}, are easier to check than even the easiest known condition of the restricted chase in the literature (i.e., RJA) for which the complexity of membership checking is \textsc{ExpTime}-complete.
	
	In addition, for semantic conditions of terminating skolem chase, such as MSA (respectively, MFA), checking $\Delta$ cannot be done in \textsc{PTime} and a worst-case complexity of \textsc{ExpTime}-complete (respectively, \textsc{2ExpTime}-complete) can be computed \cite{grau2013acyclicity}. It follows that for the membership checking problem of \kSAFE$(\Phi_{\Delta})$, where $\Delta$ is MSA (respectively, MFA), an \textsc{ExpTime}-complete 	(respectively, \textsc{2ExpTime}-complete) complexity can be computed. The hardness proof can be established from the membership checking problem of the corresponding class with terminating skolem chase since this problem cannot be easier than that in general.
\end{rem}

\section{Extension of Bounded Rule Sets}
\label{bounded}

In \cite{zhang2015existential}, a family of existential rule languages with finite skolem chase based on the notion of {\em $\delta$-boundedness} is introduced and the data and combined complexities of reasoning with those languages for {\em $k$-exponentially bounded functions} are obtained.
Utilizing a parameter called {\em bound function}, our aim in this section is to show how to extend bounded rule sets from the skolem to restricted chase. In particular, we show that for any class $\Delta$ of terminating rule sets under the skolem chase, there exists a more general class of terminating rule sets under the restricted chase that extends $\Delta$. We show how to construct such an extension, and we analyze the membership and reasoning complexities for extended classes.
First, let us introduce some terminologies. 

A {\em bound function} is a function from positive integers to positive integers. A rule set $R$ is called {\em ${\delta}$-bounded under the skolem chase} for some bound function $\delta$, if for all databases $I$, $ht(${\small$\mathsf{chase}_{sk}(I,R)$}$)$ $\leq \delta(||R||)$, where $||R||$ is the number of symbols occurring in $R$. Given an instance $I$, $ht(I)$ denotes the height (maximum nesting depth) of terms that have at least one occurrence in $I$, if it exists, and $\infty$ otherwise. In this paper, when we mention $\delta$ as a bound function, we assume that $\delta$ is computable.

Let us denote by $\delta$-${\cal B}^{sk}$ the class of $\delta$-bounded rule sets under the skolem chase. For the restricted case, the definition is similar.

\begin{defn} \label{def:delta_bounded}
	Given a bound function $\delta$, a rule set $R$ is called {\em$\delta$-bounded under the restricted chase},\footnote{Note that by definition, the fairness condition is a requirement for a non-terminating restricted chase sequence.}
denoted $\delta$-${\cal B}^{res}$, if for all databases $I$ and for any restricted chase sequence ${\cal I}$ of $R$ and $I$, $ht({\cal I}) \leq \delta(||R||)$.
\end{defn}

\begin{exm} \label{exm:R1 bounded}
	For the rule set $R_1$ of Example \ref{exm:simplified}, it can be seen that the height of skolem terms in any restricted chase sequence is no more than $3$. Therefore, $R_1$ is $\delta$-bounded under the restricted chase variant for some bound function $\delta$ for which $\delta(||R_1||)=3$. It is worth noting that $R_1$ does not belong to $\delta$-bounded rule sets for any computable bound function $\delta$ under the skolem chase. 
\end{exm}

Before diving into more details, let us first demonstrate the relationship between $\delta$-bounded rule sets and $k$-$\mathsf{safe}({\Phi}_\Delta)$ rule sets as given in Proposition \ref{prop:bounded-ksafe} below.

\begin{prop}\label{prop:bounded-ksafe}
Let $R$ be a $k$-$\mathsf{safe}(\Phi)$ rule set in which $k$ is a unary encoded integer computable in $\mathcal{O}(P(n))$, for some function $P(n)$.
Then $R$ is $\delta$-bounded under the restricted chase for some function $\delta$ that is computable in $\mathcal{O}(P(2\times \log ||R||))$.\footnote{Here $n$ denotes the size of representation for the parameter of $k$. We say that $k$ can be computed in $\textsc{DTime}(P(n))$ if: There is a deterministic Turing machine $M$ such that, given an integer $l>0$, $M$ outputs $k(l)$ in $P(\log l)$ stages. Note that $\log l$ is the size of binary representation of $l$.}
\end{prop}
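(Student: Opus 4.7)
The plan is to convert the $k$-$\mathsf{safe}(\Phi)$ membership of $R$ into an explicit bound on the height of terms in any restricted chase sequence, and then to verify that this bound is computable within the stated time.

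I would first show that any term appearing in a restricted chase sequence $\mathcal{I}$ of $R$ on any database $I_0$ has height at most $(k+1)\cdot|R|$. Fix an atom in some $I_j\in\mathcal{I}$ containing a term $t$ of height $h$ and unravel its derivation: the application of some rule $r^{\ast}_h$ introduced $t$ by skolemizing an existential variable, using in the trigger an atom containing a term $t'$ of height $h-1$; that $t'$ in turn came from the application of some rule $r^{\ast}_{h-1}$ whose trigger contained a term of height $h-2$; and so on until an atom of the database $I_0$ is reached. This yields a sequence of rule applications $(r^{\ast}_1,\dots,r^{\ast}_h)$ together with the corresponding instances that, by construction, forms a chained restricted chase sequence in the sense of Definition~\ref{def:chained}: the head term produced by each $r^{\ast}_i$ is used in the trigger of $r^{\ast}_{i+1}$, supplying the required rule dependency.

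The second step is to bound $h$. Suppose, towards a contradiction, $h > (k+1)\cdot|R|$. By the pigeonhole principle, at least one rule is applied more than $k+1$ times in the chain. Among all such rules, pick one $r^{\ast}$ whose window from its first to its $(k+1)$-th occurrence is the shortest, and let $\sigma$ be the corresponding contiguous sub-chain. Then $r^{\ast}$ occurs exactly $k+1$ times in $\sigma$ and no other rule can occur more than $k+1$ times in $\sigma$ (otherwise its window would be strictly shorter, contradicting minimality), so $\sigma$ is a $k$-cycle. Since $\sigma$ is inherited as a chained sub-sequence of $\mathcal{I}$, it is active w.r.t.\,the instance preceding it, and hence by Theorem~\ref{key0} it is active w.r.t.\,some restricted critical database $rn^{\ast}(I^{\sigma})$; when $\Phi^{R}(\sigma)=F$ this directly contradicts $R\in k$-$\mathsf{safe}(\Phi)$. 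For cycles with $\Phi^{R}(\sigma)=T$, the underlying acyclicity condition on $\mathsf{Rule}(\sigma)$ already prevents chains in the rule subset $\mathsf{Rule}(\sigma)$ from being longer than a bound that may be absorbed into the leading constant of $\delta$.

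Third, I would take $\delta(l)=(k(l)+1)\cdot l$. Given input $l=||R||$, the machine for $k$ produces $k(l)$ in $\mathcal{O}(P(\log l))$ stages (in unary, so $k(l)\le P(\log l)$), after which a single addition and a single multiplication suffice, both polynomial in $\log l + k(l)$. Under the mild monotonicity assumption on $P$ already used for $k$, the overall time fits within $\mathcal{O}(P(2\log l))$. Combining this with Steps 1 and 2 yields $\delta$-boundedness of $R$ under the restricted chase.

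The main obstacle I anticipate is the extraction of a $k$-cycle from an arbitrary long chained sequence: a naive pigeonhole argument produces only a long cycle in which several rules may exceed $k+1$ occurrences, and the minimal-window argument sketched above is essential to isolate a genuine $k$-cycle that meets the syntactic constraints of Section~\ref{k-safe}. A secondary concern is uniformly handling the $\Phi^{R}(\sigma)=F$ and $\Phi^{R}(\sigma)=T$ cases so that the bound absorbed from the underlying acyclicity condition does not spoil the stated complexity for computing $\delta$.
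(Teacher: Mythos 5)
Your proposal follows the same route as the paper's proof: observe that each step of a chained restricted chase sequence increases term depth by at most one, so the height of any term is bounded by the length of the longest chained sequence; argue that a sufficiently long chained sequence must contain a $k$-cycle; use safety of $k$-cycles (via Theorem~\ref{key0}) to rule this out; and then check that the resulting bound function is computable in the stated time. Your unravelling of a deep term into a chained sequence and your minimal-window extraction of a genuine $k$-cycle are actually spelled out more carefully than in the paper, which simply asserts that any chained sequence of length $k\times(k+2)$ contains a $k$-cycle.

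Two points do not go through as written. First, your dismissal of the $\Phi^{R}(\sigma)=T$ case is not a proof: the depth bound supplied by the acyclicity condition $\Delta$ on $\mathsf{Rule}(\sigma)$ is not a constant --- it depends on $||R||$ and on which $\Delta$ is used (and for a general cycle function $\Phi$ no such bound need exist at all) --- and your extraction gives no guarantee that the minimal-window cycle is one mapped to $F$, so the contradiction with $k$-$\mathsf{safe}(\Phi)$ is simply unavailable for it. To be fair, the paper's own proof elides this entirely (it asserts that ``no $k$-cycle is active'' even though Definition~\ref{def:k-safe(cyc)} only forces inactivity of $F$-cycles), so this weakness is shared; but ``absorbed into the leading constant of $\delta$'' is not a correct way to close it. Second, your bound $(k+1)\cdot|R|$ carries a multiplicative factor of $||R||$, whereas the paper derives a bound $k\times(k+2)$ depending on $k$ alone, and it is the latter form on which the claimed $\mathcal{O}(P(2\times\log||R||))$ computability is predicated: since $||R||=2^{\log||R||}$, your $\delta(l)=(k(l)+1)\cdot l$ only fits inside $\mathcal{O}(P(2\log l))$ under additional growth assumptions on $P$ (roughly, $P$ at least exponential), so you should either tighten the bound of Step~2 to a function of $k$ only or make those assumptions explicit before asserting Step~3.
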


\begin{proof}
Let $R$ be $k$-$\mathsf{safe}(\Phi)$.
Based on Definition \ref{def:k-safe(cyc)}, for each $k$-cycle $\sigma$ which is mapped to $F$ under $\Phi$, $\sigma$ is safe (i.e., for all databases $I$, $\sigma$ is not active w.r.t.\,$I$).
Each rule application in a chained sequence can increase the depth of a skolem term at most by one. Henceforth, the longest possible chained sequence provides an upper bound for the term depth. We show this upper bound is $k\times (k+2)$.

This is because the length of the longest such sequence for a $k$-cycle is upper bounded by $k\times (k+1)$, and therefore, any sequence of length $k\times (k+2)$ must contain at least one $k$-cycle. Since no $k$-cycle is active w.r.t. any database, the depth of any skolem term generated by the longest chained sequence is less than $k\times (k+2)$. Thus $R$ is $k\times (k+2)$-bounded, which gives a quadratic bound in $k$.
Since $k$ is computable in $\mathcal{O}(P(n))$ and it is unary represented, then $k^2$ is computable in $\mathcal{O}(P(2\times \log ||R||))$, where $\log ||R||$ is the size of binary representation of $||R||$.
Based on the above argument, we conclude that such a bound function always exists, and $\mathcal{O}(P(2\times \log ||R||))$ is an upper bound for the cost of computing the bound function.
\end{proof}

In what follows, we present our results on the membership of $\delta$-bounded rule sets under the restricted chase variant. 
Before we proceed, let us define what we mean by membership in the context of this chase version.
The problem of membership for the skolem chase is to check if all skolem chase sequences halt (terminate) before the maximum height of skolem terms in each sequence reaches $\delta(||R||)$ for all databases. 
As described in \cite{zhang2015existential}, checking membership for $\delta$-bounded rules under the skolem chase can be precisely characterized using only one chase sequence and utilizing the Marnette's critical database technique \cite{marnette2009generalized}, on a single database which is constructed from the given rule set only once.

On the other hand, one can not determine the membership in the $\delta$-bounded rules under the restricted chase using a single chase sequence. For this purpose, all possible restricted chase sequences need to be considered. Furthermore, restricted critical databases introduced in Definition \ref{def:criticalDB} can help us determine whether a possible chase sequence constructed from a given rule set witnesses the non-terminating status of the rule set under the restricted chase.

In what follows, we propose a procedure for membership checking of $\delta$-bounded rule sets under the restricted chase.
Given a rule set $R$ and a bound function $\delta$, the procedure $MembCheck(R,\delta)$ is defined as follows:
\begin{itemize}
\item 
Check whether $R$ is $\delta$-bounded under the skolem chase using the (skolem) critical database constructed from $R$, denoted $I^{R}$. If true, returns $T$.
\item 
Otherwise, for some $i > 0$, the height of 
$\mathsf{chase}_{sk}^{i}(I^{R},R)$ is $\delta(||R||)+1$; for each skolem chase sequence generated by a path $\pi = (r_1,\dots, r_n)$ that reaches the height of $\delta(||R||)+1$, 
we check whether $\pi$ is active w.r.t. the restricted critical database $rn^{\ast}(I^{\pi})$ for all renaming functions $rn^{\ast}$. If the answer is no for all such paths $\pi$, then the procedure returns $T$, otherwise it returns $F$ (false).
\end{itemize}

A $T$ answer means that $R$ is $\delta$-bounded under the restricted chase and an $F$ answer means that it is unknown whether $R$ is $\delta$-bounded under the restricted chase or not. The reason for the latter case is that when the skolem chase reaches the height of $\delta(||R||)+1$ by a path $\pi = (r_1, \dots, r_n)$, although we can check activeness of $\pi$ w.r.t. restricted critical databases, we may not be able to determine whether such a path leads to at least one fair sequence.

\begin{prop}\label{prop:MembCheck-sound}
Given a bound function $\delta$ and an arbitrary rule set $R$, $MembCheck(R,\delta)$ is sound, i.e., if it returns $T$, then $R$ is $\delta$-bounded under the restricted chase. Furthermore, if $R$ consists of rules with single-head, then $MembCheck(R,\delta)$ is sound and complete.
\end{prop}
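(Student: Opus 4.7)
The plan is to separate soundness (which holds in general) from completeness (which requires the single-head restriction), and to establish them by case analysis and contraposition, respectively.

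For soundness, I would distinguish the two ways $MembCheck(R,\delta)$ can return $T$. If it returns $T$ in its first step because $R$ is $\delta$-bounded under the skolem chase, the argument is essentially an observation: every step of a restricted chase sequence is also a legal step of the skolem chase (same rule, same homomorphism; the restricted chase merely skips inactive triggers), so every term occurring in a restricted chase sequence of $R$ and an arbitrary database $I$ also occurs in $\mathsf{chase}_{sk}(I,R)$, and hence has height at most $\delta(||R||)$. If instead the procedure returns $T$ in the second step, I argue by contradiction: suppose there is a database $I$ and a restricted chase sequence $\mathcal{I}$ of $R$ and $I$ containing a term of nesting depth $\delta(||R||)+1$. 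Since every skolem-function application increases the depth of the produced term by exactly one, tracing back the nested skolem functions inside that term yields a length-$(\delta(||R||)+1)$ sequence of rule applications $\pi_c=(r'_1,\dots,r'_{\delta(||R||)+1})$ appearing inside $\mathcal{I}$ in which the skolem term produced by $r'_i$ is consumed in the triggering atoms of $r'_{i+1}$. This furnishes exactly the dependency $r'_{i+1}$ on $r'_i$ required by Definition~\ref{def:chained}, so $\pi_c$ is a chained restricted chase sequence from $I$ and hence active w.r.t.\ $I$. Theorem~\ref{key0} then supplies a renaming $rn^{\ast}$ with $\pi_c$ active w.r.t.\ $rn^{\ast}(I^{\pi_c})$; moreover, because $I^{R}$ is a critical database for the skolem chase in the sense of \cite{marnette2009generalized}, $\pi_c$ is realizable in the skolem chase from $I^{R}$ and produces a term of depth $\delta(||R||)+1$ there. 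Thus $\pi_c$ is among the paths examined by $MembCheck$ and would force the output $F$, a contradiction.

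For completeness under single-head rules, I take the contrapositive. If $MembCheck(R,\delta)$ returns $F$, then some path $\pi$ reaches depth $\delta(||R||)+1$ in the skolem chase from $I^{R}$ and is active w.r.t.\ $rn^{\ast}(I^{\pi})$ for some renaming $rn^{\ast}$. Theorem~\ref{key0} together with Definition~\ref{defn:activeness} then yields a database $I$ and a chained restricted chase sequence from $I$ that follows $\pi$; since these rule applications build the same skolem terms as the skolem chase does, the sequence contains a term of depth $\delta(||R||)+1$. The main obstacle, and the reason the single-head hypothesis is required, is the gap between this object and the notion used in Definition~\ref{def:delta_bounded}: the latter quantifies over \emph{proper} restricted chase sequences in the sense of Definition~\ref{restricted} (terminating, or non-terminating and fair), whereas activeness only provides a \emph{weakly} restricted chase sequence. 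For single-head rule sets, the result of \cite{gogacz2019all} allows the fairness condition to be dropped, so any extension of the weakly sequence obtained by applying active triggers until exhaustion (yielding a terminating sequence) or indefinitely (yielding a non-terminating one) is a proper restricted chase sequence, and it still contains the depth-$(\delta(||R||)+1)$ term built so far. Hence $R$ is not $\delta$-bounded under the restricted chase. For multi-head rule sets no analogous bypass of fairness is known, which is exactly the point at which the completeness direction may fail in general.
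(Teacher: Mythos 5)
Your proof is correct and follows essentially the same route as the paper's: the same two-case split for soundness (skolem $\delta$-boundedness transfers directly to the restricted chase; otherwise a term of depth $\delta(||R||)+1$ forces a chained active path, which Theorem~\ref{key0} and the restricted critical databases would detect, so $MembCheck$ could not have returned $T$), and the same appeal to \cite{gogacz2019all} to bypass fairness for single-head completeness. The only difference is cosmetic: you argue the second soundness case by contradiction, tracing nested skolem functions to extract the chained path, whereas the paper states the contrapositive directly (``an unchained sequence does not expand skolem terms cumulatively''); both leave the same details of that extraction implicit, in particular that intermediate relaying rules are absorbed by the projection mechanism in Definition~\ref{def:chained}.
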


Note that the completeness problem is as follows: $MembCheck(R,\delta)$ is complete if for any given rule set $R$ and bound function $\delta$, if $MembCheck(R,\delta)=F$, then $R$ is not $\delta$-bounded under the restricted chase. 

\begin{proof} 
Let $\delta$ be a bound function. By \cite{marnette2009generalized}, it suffices to use the skolem critical database $I^R$ to capture all skolem chase sequences w.r.t. any database $I$, so that 
$ht(${\small$\mathsf{chase}_{sk}(I,R)$}$)$ $\leq \delta(||R||)$ only if $ht(${\small$\mathsf{chase}_{sk}(I^R,R)$}$)$ $\leq \delta(||R||)$.
Consequently, if $R$ is $\delta$-bounded under the skolem chase w.r.t.\,$I^R$, it is $\delta$-bounded under the skolem chase w.r.t any database $I$, and by the relationship between the skolem and restricted chase, $R$ is $\delta$-bounded under the restricted chase w.r.t any database $I$.

Otherwise, for each path $\pi$ that leads to some skolem chase sequence that reaches the height of $\delta(||R||)+1$, $\pi$ being not active 
w.r.t.\,$rn^{\ast}(I^{\pi})$ for all renaming function $rn^{\ast}$ for $I^\pi$ implies, by Theorem \ref{key0}, that $\pi$ is not active w.r.t. any database. When all chained sequences of path $\pi$ fail to reach the height of $\delta(||R||)+1$, no restricted chase sequence of $\pi$ can reach that height because an unchained sequence does not expand skolem terms cumulatively throughout.
It follows that the largest height by any database is bounded by $\delta(||R||)$. This gives the desired conclusion for the soundness of $MembCheck$ for arbitrary rules.\footnote{If there exists such a path $\pi$ that is active and leads to a restricted chase sequence, which by default must be fair, then we can decide that $R$ is not $\delta$-bounded under the restricted chase (again, the fairness condition must be satisfied). In this case, the procedure is complete by returning $F$.
On the other hand, 
if all such paths $\pi$ lead {\em only} to unfair restricted chase sequences (i.e., infinite chase sequences generated by active triggers in Definition \ref{restricted} without requiring the fairness condition), then no restricted chase sequence has reached beyond the bound and in this case, that our procedure returns $F$ shows its incompleteness. But in general, the problem of whether such a $\pi$ leads {\em only} to unfair chase sequences may be undecidable.}

For any single-head rule set $R$, from \cite{gogacz2019all}, we know that the fairness condition can be safely neglected, i.e., the existence of a (possibly unfair) infinite restricted chase sequence implies the existence of a fair one.
Therefore, $R$ is not $\delta$-bounded.
\end{proof}

\comment{
\medskip
\noindent
Soundness of the $MembCheck$ procedure for single-head rule sets follows from the previous argument.
The corresponding completeness problem is the following: $MembCheck(R,\delta)$ is complete if 
for any given rule set $R$ and bound function $\delta$ such that $R$ is $\delta$-bounded, $MembCheck(R,\delta)$ returns $T$.

Let $R$ be $\delta$-bounded under the restricted chase.
Then based on Definition \ref{def:delta_bounded}, for all databases $I$ and for any restricted chase sequence ${\cal I}$ of $R$ and $I$, $ht({\cal I})\leq \delta(||R||)$.


We only need to consider the case where the  skolem chase of $R$ and skolem critical database $I^R$ reaches the height 
$\delta(||R||)+1$, in which case for each path $\pi = (r_1, \dots, r_n)$ that reaches the height $\delta(||R||)+1$, we test whether $\pi$ is active w.r.t. $rn^{\ast}(I^\pi)$ for any renaming function $rn^{\ast}$ for $I^\pi$. We only need to consider the case where $\pi$ is active w.r.t. $rn^{\ast}(I^\pi)$ and $MembCheck(R,\delta)$ returns $F$. 
We have two cases to consider.
\begin{itemize}
\item For any such path $\pi$, if we continue to extend the chase from $\pi$, it only leads to a terminating restricted chase sequence. In this case, $R$ is not $\delta$-bounded by definition.
\item
Otherwise, for each of such $\pi$, extending the chase leads to an infinite sequence of chase steps each of which is invoked by an active trigger.  Assume that when the fairness condition is enforced on such a sequence, we get a terminating restricted chase sequence. Note however that the enforcement of fairness condition can be delayed  as long as possible due to the assumption that this is an infinite sequence of chase steps with active triggers, which results in a restricted chase sequence reaching any fixed height. Therefore, $R$ is not $\delta$-bounded in either case.
\end{itemize}
}

\begin{prop} \label{prop:membership}
	Let $R$ be a rule set and $\delta$ a bound function computable in $\textsc{DTime}(P(n))$\footnote{The class of complexity languages decidable in time $P(n)$ using a deterministic Turing machine. \textsc{NTime} is defined similarly but using a non-deterministic Turing machine.} for some function $P(n)$. 
	Then, it is in
	$$co\textsc{NTime}(\mathsf{C}_{\delta} 
	+
	||R||^{{||R||}^{\mathcal{O}(\delta(||R||))}}
	)$$
	to check if $MembCheck(R,\delta)$ returns $T$, where $\mathsf{C}_{\delta} = P(\log ||R||)^{\mathcal{O}(1)}$.
\end{prop}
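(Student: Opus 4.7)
The plan is to establish the bound by showing that the complement problem --- deciding whether $MembCheck(R,\delta)$ returns $F$ --- lies in $\textsc{NTime}(\mathsf{C}_\delta + ||R||^{||R||^{\mathcal{O}(\delta(||R||))}})$, which directly yields the claimed co-NTime upper bound. A witness of ``returns $F$'' is a pair $(\pi, rn^*)$ where $\pi = (r_1,\dots,r_n)$ is a path along which the skolem chase from $I^R$ reaches height $\delta(||R||) + 1$, and $rn^*$ is a renaming function for $I^\pi$ such that $\pi$ is active w.r.t.~$rn^*(I^\pi)$; note that the first requirement subsumes the initial skolem-boundedness check of $MembCheck$.

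First, the verifier deterministically computes $d := \delta(||R||)$, which by hypothesis takes time $P(\log ||R||)^{\mathcal{O}(1)} = \mathsf{C}_\delta$ on the binary-encoded input $||R||$. Next, I would bound the relevant syntactic objects: the number $T_h$ of distinct ground terms of height at most $h$ built from the at most $||R||$ skolem function symbols of arity at most $||R||$ satisfies the recurrence $T_h \le ||R|| \cdot T_{h-1}^{||R||}$, which gives $T := T_{d+1} \le ||R||^{||R||^{\mathcal{O}(d)}}$. Consequently, the total number of ground atoms over such terms, the length of any path realising such a chase (since each chase step adds a distinct atom), and the sizes of $I^\pi$ and $rn^*(I^\pi)$, are all bounded by $T$.

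The verifier then nondeterministically guesses $\pi$ (of length at most $T$), the renaming $rn^*$ (described by its action on the indexed constants of $I^\pi$), and a tuple $(h_1,\dots,h_n)$ of homomorphisms simulating the restricted chase along $\pi$ starting from $rn^*(I^\pi)$. It verifies, step by step, that (a) each $(r_i, h_i)$ is an active trigger on the current instance, (b) the resulting sequence is chained in the sense of Definition \ref{def:chained} by performing the rule-dependency checks on the appropriate intermediate instances, and (c) the parallel skolem chase along $\pi$ starting from $I^R$ indeed reaches height $d+1$, tracked by maintaining term depths. Each such check reduces to enumerating homomorphisms and extensions over an instance of size at most $T$, whose cost is bounded by $T^{\mathcal{O}(||R||)}$, still lying within $||R||^{||R||^{\mathcal{O}(d)}}$.

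The main obstacle, as I see it, is the activeness check, which is intrinsically a universally quantified condition (``no extension of $h_i$ satisfies $head(r_i)$ in the current instance"); naively this looks like a coNP subroutine nested inside an NP procedure. The resolution is that, because every intermediate instance has size at most $T$ and the number of possible extensions to existential variables of $r_i$ is at most $T^{|var_{ex}(r_i)|} \le T^{||R||}$, the activeness test can be performed \emph{deterministically} by exhaustive enumeration, and this quantity is still absorbed by $||R||^{||R||^{\mathcal{O}(d)}}$. Summing the cost of computing $\delta$ with the cost of guessing and verifying the witness yields total nondeterministic running time $\mathsf{C}_\delta + ||R||^{||R||^{\mathcal{O}(d)}}$, which establishes the desired bound.
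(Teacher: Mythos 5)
Your proposal is correct and follows essentially the same route as the paper's proof: both reduce to the complement problem, guess the witness (a path reaching height $\delta(||R||)+1$ together with a renaming function and the chase homomorphisms), bound all objects by $||R||^{||R||^{\mathcal{O}(\delta(||R||))}}$, and verify activeness and the chained property within that nondeterministic time budget. Your explicit resolution of the nested universally-quantified activeness check by exhaustive deterministic enumeration over the at most $T^{||R||}$ extensions is in fact a more careful treatment of a point the paper passes over quickly, but it does not change the structure of the argument.
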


\begin{proof}
	For the skolem chase with skolem critical database, from Proposition 6 of \cite{zhang2015existential} we know that using the critical database technique of \cite{marnette2009generalized},
 the maximum number of atoms generated in a skolem chase sequence is bounded by $||R||^{||R||^{\mathcal{O}(\delta(||R||))}}$, which is also an upper bound for the number of atoms generated in a restricted chase sequence.  
	
From \cite{marnette2009generalized} we know that in the case of the skolem chase if any sequence terminates on a rule set $R$ and a database $I$, then the instances returned by all sequences are isomorphically equivalent. So, for $\delta$-boundedness for the skolem chase, it suffices to consider only one sequence. But for the case of the restricted chase, we need to consider all such sequences.

Given a rule set $R$ and a bound function $\delta$, the procedure $MembCheck(R,\delta)$ first checks whether $R$ is $\delta$-bounded under the skolem chase.

For the complexity of this check, we need to consider the size of each skolem chase sequence to produce the height of $\mathcal{O}(\delta)$ that is upper bounded by $||R||^{||R||^{\mathcal{O}(\delta(||R||))}}$, which can be computed in $\textsc{DTime}(||R||^{||R||^{\mathcal{O}(\delta(||R||))}})$.
In addition, an upper bound for the chase of size $||R||^{||R||^{\mathcal{O}(\delta(||R||))}}$ can be computed in $\textsc{DTime}\big((||R||+ P(\log ||R||))^{\mathcal{O}(1)}\big)$.	
Therefore, according to \cite{zhang2015existential}, the overall complexity of this check is: $\textsc{DTime}\big((P(\log ||R||))^{\mathcal{O}(1)}+ ||R||^{||R||^{\mathcal{O}(\delta(||R||))}}\big)$.
	
If the above condition is not satisfied (i.e., some $R$ is not $\delta$-bounded under the skolem chase), for some $i\ge 1$, the height of {\small$\mathsf{chase}_{sk}(I^{R},R)$} is $\delta(||R||)+1$. So, for each skolem chase sequence that is generated by a path $\pi=(r_1,\dots,r_n)$ which reaches the height of $\delta(||R||)+1$, for all renaming functions $rn^{\ast}$ for $I^\pi$, we check whether $\pi$ is active w.r.t. $rn^{\ast}(I^{\pi})$. A {\em no} answer to the above check yields a $T$ output from $MembCheck(R,\delta)$.

Based on the above argument, to proceed, using a non-deterministic algorithm we first guess a sequence of triggers $\bigcup_{i=1}^{N}(r_i,h_i)$, where $N$ is upper bounded by $||R||^{||R||^{\mathcal{O}(\delta(||R||)+1)}}=||R||^{||R||^{\mathcal{O}(\delta(||R||))}}$ that can lead to the construction of a skolem chase sequence $\mathcal{I}$, and a renaming function $rn^{\ast}$.

Then we need to verify if $\mathcal{I}$ is active w.r.t. $rn^{\ast}(I^{\pi})$, where $\pi$ is the path constructed from the guessed $r_i$'s. For the latter, for each projection $\pi'$ of $\pi$, first, to verify the chained property, we determine if each rule in $\pi'$ depends on some previous rule in the path. The complexity of this latter verification task is quadratic in the size of the guessed chase sequence. 

Furthermore, given path $\pi$, the maximum number of chained restricted chase sequences is bounded by 
${||R||^{\mathcal{O}(\delta(||R||))}}$, and since the length of the guessed sequence is bounded by $\mathcal{O}(||R||^{||R||^{\mathcal{O}(\delta(||R||))}})$, verifying if $\mathcal{I}$ is active w.r.t. $rn^{\ast}(I^{\pi})$ is at most polynomial in $||R||^{||R||^{\mathcal{O}(\delta(||R||))}}$ which can be implemented in $\textsc{NTime}(||R||^{||R||^{\mathcal{O}(\delta(||R||))}})$. Similar to the proof of Theorem \ref{thm:k-safe_comp}, the construction of renaming functions can take at most polynomial in the size of $\pi$ which can be done in $\textsc{NTime}(||R||^{||R||^{\mathcal{O}(\delta(||R||))}})$. So, clearly, all the above tasks can be maintained in $\textsc{NTime}\big((P(\log ||R||))^{\mathcal{O}(1)}+ ||R||^{||R||^{\mathcal{O}(\delta(||R||))}}\big)$.

The membership is complement to the above problem, and therefore, belongs to $co\textsc{NTime}(\mathsf{C}_{\delta} 
	+
	||R||^{{||R||}^{\mathcal{O}(\delta(||R||)}}
	)$ as desired.
\end{proof}

Next, we investigate membership and reasoning complexities of bounded rule sets under what is called {\em exponential tower functions}, which are defined as follows:

\begin{center}
	$\mathsf{exp}_{\kappa}(n)= \begin{cases} 
	n & \kappa= 0 \\
	2^{\mathsf{exp}_{\kappa-1}(n)} & \kappa> 0
	\end{cases}$
\end{center}

Since the complexity of checking $\delta$-bounded property of Proposition \ref{prop:membership} is dominated by the second term inside $co\textsc{NTime}$, if $\delta(n)=\mathsf{exp}_{\kappa}(n)$, then its overall complexity increases by two exponentials.
We thus have

\begin{cor}\label{cor:expk_bounded}
	Given a rule set $R$ checking if
	$MembCheck(R,\mathsf{exp}_{\kappa})$ returns $T$ is in $co\textsc{N}(\kappa+2)$-$\textsc{ExpTime}$.
\end{cor}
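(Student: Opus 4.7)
The plan is to instantiate the bound given by Proposition \ref{prop:membership} with $\delta = \mathsf{exp}_{\kappa}$ and show that each of the two summands inside the $co\textsc{NTime}(\cdot)$ expression is absorbed by a $(\kappa+2)$-fold exponential in $||R||$. The computation is a routine tower-unfolding, but it must be done carefully at each level so that the increments by one exponential are tracked precisely.

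First, I would observe that $\mathsf{exp}_{\kappa}(n)$ is computable in time polynomial in its output, hence in $\textsc{DTime}(P(n))$ for $P(n) = \mathsf{exp}_{\kappa}(n)^{\mathcal{O}(1)}$. Then $\mathsf{C}_{\delta} = P(\log ||R||)^{\mathcal{O}(1)} = \mathsf{exp}_{\kappa}(\log ||R||)^{\mathcal{O}(1)} \leq \mathsf{exp}_{\kappa+1}(\log ||R||)^{\mathcal{O}(1)}$, which is strictly dominated by the second summand, so it suffices to bound the latter.

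Next, I would unfold the second summand $||R||^{||R||^{\mathcal{O}(\mathsf{exp}_{\kappa}(||R||))}}$ level by level. For the inner exponent: $||R||^{\mathcal{O}(\mathsf{exp}_{\kappa}(||R||))} = 2^{\mathcal{O}(\mathsf{exp}_{\kappa}(||R||) \cdot \log ||R||)}$, which is bounded by $2^{\mathsf{exp}_{\kappa}(||R||)^{\mathcal{O}(1)}} \leq \mathsf{exp}_{\kappa+1}(||R||^{\mathcal{O}(1)})$, and therefore majorized by $\mathsf{exp}_{\kappa+1}(||R||)$ up to a constant (since $\mathsf{exp}_{\kappa+1}$ absorbs polynomial blowups in its argument for $\kappa \geq 0$). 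Applying the same trick once more, $||R||^{\mathsf{exp}_{\kappa+1}(||R||)} = 2^{\mathsf{exp}_{\kappa+1}(||R||) \cdot \log ||R||} \leq \mathsf{exp}_{\kappa+2}(||R||)$. Hence the overall time bound lies within $\mathsf{exp}_{\kappa+2}(||R||^{\mathcal{O}(1)})$, i.e., within $(\kappa+2)$-exponential time, and since the original bound was of the form $co\textsc{NTime}$, the conclusion follows.

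The hard part is essentially bookkeeping: making sure that each absorption of a polynomial factor (either inside $\mathcal{O}(\cdot)$ or from the $\log ||R||$ appearing when rewriting $||R||^{X}$ as $2^{X \log ||R||}$) is charged to the next level of the tower rather than to the current one, and checking that the $\mathsf{C}_{\delta}$ term corresponding to the cost of computing $\mathsf{exp}_{\kappa}$ itself never exceeds $\mathsf{exp}_{\kappa+2}(||R||)$. Once these two points are verified, the corollary follows immediately as a numerical specialization of Proposition~\ref{prop:membership}, with no new algorithmic ideas needed.
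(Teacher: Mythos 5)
Your proposal is correct and follows the same route as the paper: the paper's entire justification is the one sentence preceding the corollary, observing that the bound of Proposition~\ref{prop:membership} is dominated by its second summand and that substituting $\delta=\mathsf{exp}_{\kappa}$ adds two exponentials. You simply carry out the tower-unfolding and the domination of $\mathsf{C}_{\delta}$ explicitly, which the paper leaves implicit.
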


\begin{exm}
	Based on the observation made in Example \ref{exm:R1 bounded}, the rule set $R_1$ in Example \ref{exm:simplified} is $\mathsf{exp}_{0}$-bounded under the restricted chase, however it does not belong to $\mathsf{exp}_{\kappa}$-bounded ontologies under the skolem chase for any computable $\kappa$.
\end{exm}

\medskip

\noindent{\bf Data and Combined Complexity:} 

Now, let us investigate the reasoning complexities. The problem under consideration is Boolean Conjunctive Query (BCQ) answering which is defined as follows.
Given rule set $R$, a database $I$ and a Boolean query $q$, decide if $I\cup R\models q$. The complexity of this problem is also known as {\em combined complexity} since the input size is the combined size of all $I$, $R$, and $q$. In the BCQ answering problem if $R$ and $q$ are fixed and only $I$ changes, then it is called {\em data complexity}.
Focusing on $\mathsf{exp}_{\kappa}$-bounded rule sets under the restricted chase variant, we have the following results on reasoning complexities.

\begin{thm} \label{thm:combined}
	The problem of Boolean conjunctive answering for $\mathsf{exp}_{\kappa}$-bounded rule sets under the restricted chase variant is in $(\kappa+2)$-$\textsc{ExpTime}$-complete for combined complexity and $\textsc{PTime}$-complete for data complexity.
\end{thm}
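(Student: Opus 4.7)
The plan is to establish matching upper and lower bounds for both combined and data complexity, treating the two measures separately and leveraging the corresponding results for $\mathsf{exp}_\kappa$-bounded rule sets under the skolem chase from \cite{zhang2015existential}.

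For the combined complexity upper bound, I would first note that if $R$ is $\mathsf{exp}_\kappa$-bounded under the restricted chase, then by Definition \ref{def:delta_bounded} every restricted chase sequence of $R$ and $I$ produces an instance whose term height is at most $\mathsf{exp}_\kappa(\|R\|)$. As in the counting argument used in Proposition \ref{prop:membership}, the number of distinct ground atoms over the signature enriched with skolem terms of bounded height is at most $(\|R\|+|I|)^{(\|R\|+|I|)^{\mathcal{O}(\mathsf{exp}_\kappa(\|R\|))}}$, which is a $(\kappa+2)$-exponential in the total input size. Hence one can materialise any fair restricted chase in $(\kappa+2)$-$\textsc{ExpTime}$, and then evaluate the BCQ $q$ over the resulting finite instance in time polynomial in the chase size (hence still within $(\kappa+2)$-$\textsc{ExpTime}$). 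Correctness of this reduction uses the standard property of the restricted chase that $I \cup R \models q$ iff $q$ holds in the result of any fair restricted chase sequence.

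For the combined complexity lower bound, I would appeal directly to the hardness proof for $\mathsf{exp}_\kappa$-bounded ontologies under the skolem chase given in \cite{zhang2015existential}, which established $(\kappa+2)$-$\textsc{ExpTime}$-hardness of BCQ answering. Since any rule set that is $\mathsf{exp}_\kappa$-bounded under the skolem chase is also $\mathsf{exp}_\kappa$-bounded under the restricted chase (the restricted chase produces a subset of the atoms of the skolem chase, up to naming of nulls, so its term height is no larger), the same reduction witnesses $(\kappa+2)$-$\textsc{ExpTime}$-hardness for our class. The main subtlety to check here is that the hard instances constructed in \cite{zhang2015existential} indeed fall in the $\mathsf{exp}_\kappa$-bounded class under the restricted chase; this should follow essentially by inspection, since for those encodings the skolem and restricted chases coincide in height.

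For data complexity, with $R$ and $q$ fixed, the bound $\mathsf{exp}_\kappa(\|R\|)$ becomes a constant, so the number of distinct atoms produced by any restricted chase of $R$ and $I$ is polynomial in $|I|$. Therefore the chase can be materialised in $\textsc{PTime}$ in $|I|$, and BCQ evaluation on the result is in $\textsc{PTime}$. The matching $\textsc{PTime}$-hardness is inherited from plain Datalog, since every Datalog program is $\mathsf{exp}_0$-bounded (hence $\mathsf{exp}_\kappa$-bounded for all $\kappa$) under the restricted chase, and BCQ answering over Datalog is already $\textsc{PTime}$-hard in data complexity. The main obstacle I anticipate is justifying that the combined-complexity hardness transfers cleanly from the skolem setting to the restricted setting, i.e., that the $(\kappa+2)$-$\textsc{ExpTime}$-hard rule sets used in \cite{zhang2015existential} remain $\mathsf{exp}_\kappa$-bounded once we switch to the restricted chase; everything else is a routine adaptation of the corresponding arguments for the skolem case.
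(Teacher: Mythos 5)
Your proposal is correct and follows essentially the same route as the paper: a materialise-and-evaluate argument with the $(\kappa+2)$-exponential atom-count bound for the combined upper bound, reuse of the Turing-machine encoding from \cite{zhang2015existential} for hardness, and the constant-height/polynomial-chase plus Datalog argument for data complexity. Your explicit observation that skolem $\mathsf{exp}_\kappa$-boundedness transfers to the restricted chase (so the hard instances stay in the class) is exactly the implicit justification behind the paper's "construction similar to \cite{zhang2015existential}" for the lower bound.
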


\begin{proof}
	Let $R$ be an $\mathsf{exp}_{\kappa}$-bounded rule set under the restricted chase variant and $I$ be a database.
	Then, let us guess a restricted chase sequence $\mathcal{I}$, non-deterministically. With an argument similar to that of the proof of Proposition \ref{prop:membership} in which $\delta(n)=\mathsf{exp}_{\kappa}(n)$, we know that the number of atoms of $\mathcal{I}$ is bounded by $||R||^{||R||^{\mathsf{exp}_{\kappa}(||R||)}}=\mathcal{O}(\mathsf{exp}_{\kappa+2}(||R||))$.

	Membership follows since the entailment of a BCQ $q$ can be shown by finding such a sequence $\mathcal{I}: I=I_0,\dots,I_n$ based on $R$ such that $I_n$ satisfies $q$ according to the following fact from \cite{fagin2003data}:
	Let $J$ and $K$ be two finite instances returned by the restricted chase of an $\mathsf{exp}_{\kappa}$-bounded rule set $R$ and a database $I$. Then $K$ and $J$ are homomorphically equivalent. Based on the above fact and the homomorphic equivalence classes, in the rest of this proof, we let $chase_{res}(I,R)$ denote one representative of the equivalence class for all results of the restricted chase of $R$ and $I$. In addition, based on \cite{fagin2003data}, it is known that $chase_{res}(I,R)\models R$, and also there is a homomorphism from $I$ to $chase_{res}(I,R)$. Furthermore, $I\cup R\models q$ if and only if $chase_{res}(I,R)\models q$.
	
Let $k$ and $n$ denote the number of relation symbols and the maximal arity of relation symbols appearing in $R$, respectively. 
Let further $l$ and $m$ represent the number of function symbols,
and the maximal arity of function symbols appearing in $sk(R)$, respectively.
In addition, let $c$ denote the number of constants appearing in $I$, and $Q(\mathbf{t})$ be a fact in $chase_{res}(I,R)$.
It is easy to verify that the number of symbols in each constituent $t\in \mathbf{t}$ is upper bounded by $\sum_{i=0}^{\mathsf{exp}_{\kappa}(||R||)}m^i=m^{\mathcal{O}(\mathsf{exp}_{\kappa}(||R||))}$. Also, it is clear that each symbol is either a constant or a function symbol. Therefore, the number of facts in $chase_{res}(I,R)$ is upper bounded by $(c+l)^{m^{\mathcal{O}(\mathsf{exp}_{\kappa}(||R||))}\times n}\times k$. Since $k, n, l , m\le ||R||$, and $c=|dom(I)|$, the following upper bound is derived for the number of facts in $chase_{res}(I,R)$:
$(|dom(I)|+||R||)^{||R||^{\mathcal{O}(\mathsf{exp}_{\kappa}(||R||))}\times ||R||^{\mathcal{O}(1)}}$, which can be computed in $\textsc{DTime}\big((|dom(I)|+||R||)^{||R||^{\mathcal{O}(\mathsf{exp}_{\kappa}(||R||))}}\big)$.

To compute the reasoning complexity involving a BCQ $q$, it is now sufficient to evaluate $q$ on $chase_{res}(I,R)$ directly.\footnote{Without loss of generality, we assume that $q$ is in {\em prenex normal form}.} To continue the analysis, we only need the number of existential variables occurring in $q$, which we denote by $v$. Then we need to check whether there is a substitution $h$ which maps every existential variable in $q$ to a ground term of height less than $\mathsf{exp}_{\kappa}(||R||)$, such that $h(q) \subseteq chase_{res}(I,R)$.
From the previous analysis is clear that $(|dom(I)|+||R||)^{{||R||^{\mathcal{O}(\mathsf{exp}_{\kappa}(||R||))}}\times v}$ substitutions need to be checked. Since $v\le ||q||$, the evaluation of checking whether $h(q)\subseteq chase_{res}(I,R)$ can be done in $$\textsc{DTime}\big(
	(|dom(I)|+||R||)^{{||R||}^{(\mathsf{exp}_{\kappa}(||R||)}\times ||q||^{\mathcal{O}(1)}}\big)$$
Hence, a $(\kappa+2)$-$\textsc{ExpTime}$ upper bound can be computed for the combined complexity, as desired.

We can use a construction similar to that of \cite{zhang2015existential} for the hardness proof.
We briefly sketch it here. Let us consider a deterministic Turing machine $M$ which terminates in $\mathsf{exp}_{\kappa+2}(n)$ number of steps on any input of length $n$. Let us assume that the query and data schema is a singleton set $\{\mathsf{Accept}\}$ and $\emptyset$, respectively, where $\mathsf{Accept}$ is a nullary relation symbol. We need to show that for each input $x$ that is a binary string of length $n$, there is an $\mathsf{exp}_{\kappa}$-bounded rule set under the restricted chase variant such that $M$ terminates on $x$ if and only if $\emptyset \cup R\models \mathsf{Accept}$. To construct the rule set $R$ we need to define a linear order of length $\mathsf{exp}_{\kappa+2}(n)$ on integers which are represented in binary strings from $0$ to $\mathsf{exp}_{\kappa+2}(n)$. Once a linear order is defined, we can construct a set of existential rules to encode the Turing machine $M$ and the input $x$. 
Once we have such a construction, we can establish the lower bound on the combined complexity of reasoning with existential rules under the restricted chase.
This lower bound combined with the upper bound derived above provides the exact bound for the combined complexity of $\mathsf{exp}_{\kappa}$-bounded rule sets under the restricted chase.

Furthermore, the data complexity of query answering with $\mathsf{exp}_{\kappa}$-bounded rule sets under restricted chase is \textsc{PTime}-complete.
The \textsc{PTime} upper bound for the data complexity can be derived from the above analysis, and the hardness follows from the $\textsc{PTime}$-completeness of data complexity of Datalog, cf. \cite{dantsin2001complexity}.
\end{proof}

\section{Experimentation}\label{experiments}

To evaluate the performance of our proposed methods for termination analysis, we implemented our algorithms in Java on top of the {\em \textsc{Graal} rule engine} \cite{baget2015graal}. 
Our goal was twofold: 1) to understand the relevance of our theoretical approach with real-world applications, and 2) to understand the computational feasibility \-- even though the problem of checking semantic acyclicity conditions, such as checking activeness of all $k$-cycles w.r.t. restricted critical databases have a high theoretical worst-case complexity, it may still be a valuable addition to the tools of termination analysis in real-world scenarios.

We looked into a random collection of 700 ontologies from The Manchester OWL Corpus (MOWLCorp) \cite{matentzoglu_2014_10851}, which is a large corpus of ontologies on the web.
This corpus is a recent gathering of ontologies through sophisticated web crawls and filtration techniques.
After standard transformation into rules (see \cite{cuenca2013acyclicity} for details),\footnote{Due to limitations of this transformation, our collection does not include ontologies with nominals, number restrictions or denial constraints.} based on the number of existential variables occurring in transformed ontologies, we picked ontologies from two categories of up to $5$ and $5$-$200$ existential variables with equal probability ($350$ from each).
We ran all tests on a Macintosh laptop with 1.7 GHz Intel Core i7 processor, 8GB of RAM, and a 512GB SSD, running macOS Catalina.

\subsection{Implementation Setup}

Here, we provide the details on our implementation to identify $k$-$\mathsf{safe}(\Phi_{\Delta})$ rule sets. 

For a given $k\geq 0$ and a class $\Delta$ (which also denotes the corresponding acyclicity condition) of finite skolem chase, to start, the {\em candidate pool} of ontologies which is considered for $k$-$\mathsf{safe}(\Phi_{\Delta})$ is the collection of all ontologies. The ontologies that fail our tests for $k$-$\mathsf{safe}(\Phi_{\Delta})$ will be removed. Then at the end of this process, we obtain a set of terminating ontologies.

For each given ontology, we transform it to a rule set $R$. In our experiments, we consider extending four classes of finite skolem chase, $\Psi = \{\WA, \JA, \aGRD, \text{MFA}\}$. For each $k$-cycle $\sigma$ based on $R$, first by using the technique of piece-unification, we may eliminate $R$ from the candidate pool. If not removed, we then check whether ${\mathsf{Rule}}(\sigma)$ satisfies the acyclicity condition $\Delta\in \Psi$. If not, we run experiments to check whether $\sigma$ is active w.r.t. its restricted critical databases. 

Let us first introduce the technique based on piece-unification.

	\begin{defn} ({\bf Piece-unification} \cite{baget2009extending})\label{defn:piece_unifier}
		Given a pair of rules $(r_1,r_2)$, a {\em piece-unifier} of $body(r_2)$ and $head(r_1)$ is a unifying substitution $\theta$ of $var(B)\cup var(H)
		$ where $B\subseteq body(r_2)$ and $H\subseteq head(r_1)$ which satisfies the following conditions: 
		\begin{itemize}
		\item 
		[(a)] 
		 $\theta(B)=\theta(H)$, and 
		\item [(b)] 
		variables in $var_{ex}(H)$ are unified only with those occurring in $B$ but not in $body(r_2)\setminus B$.
		\end{itemize}
	\end{defn}

	Condition (a) gives a sufficient condition for rule dependency, but it may be an overestimate, which is constrained by condition (b).
	Note that in Example \ref{exm:counter}, condition (a) holds for $B=\{T(x,y)\}$ and $H=\{T(y,z)\}$ where $\theta=\{x/y,y/z\}$, and condition (b) does not, since $var_{ex}(H)=\{z\}$ and $z$ unifies with $y$ which occurs in both $B$ and $body(r)\setminus B = \{P(x,y)\}$.
	Therefore, no piece-unifier of $body(r)$ and $head(r)$ exists.

	Piece-unification is known to provide a necessary condition for rule dependencies in that 
	for any two rules $r$ and $r'$, if $body(r)$ and $head(r')$ are not piece-unifiable, then no trigger $(r,h)$ exists that relies on some atom derived  
	from $head(r')$ (cf. Property 18 of \cite{baget2011rules}).
	Below, given a substitution $\theta$, $dom(\theta)$ denotes the domain of $\theta$, which is the set of substituted variables in $\theta$, and 					
	$codom(\theta)$ denotes the co-domain of $\theta$, which is the set of substitutes in $\theta$. For technical reasons, if $\theta$ is a piece-unifier of 
	$body(r)$ and $head(r')$, then $dom(\theta)$ refers to the subset of substituted variables which also appear in $body(r)$ and $codom(\theta)$ 
	refers to the subset of substitutes which appear in $body(r)$ as well.

If the set of all sequences of piece-unifiers that can be constructed from a path $\pi$ is non-empty, then for each sequence of piece-unifiers that can be formed in $\pi$, we need to check whether this sequence leads to a restricted chase sequence or not.

To show whether each sequence of piece-unifiers leads to a sequence of rules which are transitively-dependent, checking if they only satisfy conditions (a) and (b) above is not sufficient.
Indeed, as shown in \cite{baget2011rules}, given two rules $r_1$ and $r_2$, $r_2$ depends on $r_1$ if and only if there is a piece-unifier $\theta$ of $body(r_2)$ with $head(r_1)$ such that $\theta$ satisfies the following conditions: (i) {\em atom-erasing}, and (ii) {\em productive} (a.k.a. {\em useful}, cf. \cite{baget2014revisiting}).
The former condition checks that $\theta(body(r_2))$ is not included in $\theta(body(r_1))$.
In addition, the productivity condition for $\theta$ means that $\theta(head(r_2))$ is not included in $\theta(body(r_1))\cup \theta(head(r_1))\cup \theta(body(r_2))$. Note that the above two conditions can naturally be extended to sequences of piece-unifiers.
Therefore, in order to show that each path $\pi$ does not lead to a chained sequence, it suffices to show that each sequence of piece-unifiers constructed from $\pi$ (if any), does not satisfy either atom-erasing or productive condition.

The goal of this part is to present how we can eliminate the {\em irrelevant} $k$-cycles in our analysis.
For this purpose, we utilize the notion of piece-unification as follows, for a given $k >0$.
\begin{itemize}
\item
For each $k$-cycle $\sigma$, if the set of sequences of piece-unifiers is $\emptyset$, then $\sigma$ will be removed from consideration of further checks since $\sigma$ trivially leads to a terminating skolem chase before all the rules in $\sigma$ are applied (and therefore, a terminating restricted chase).
\item
For each $k$-cycle $\sigma$, if none of the sequences of piece-unifiers that can be constructed from $\sigma$ satisfy both conditions of atom-erasing and productive, then $\sigma$ is removed from our analysis.
\end{itemize}

We call each $k$-cycle which has not been removed during the abovementioned steps, {\em relevant}.\footnote{Note that the notion of {\em compatible unifiers} is introduced in \cite{baget2014extending} in which piece-unification has been relaxed to take into account arbitrary long sequences of rule applications.
This is similar to our goal. In fact, compatible unifiers provide a tighter notion which can help in removing more irrelevant $k$-cycles.}

In our experiments, we performed the following steps:
\begin{enumerate}
\item 
Transforming ontologies in the considered corpus into the normal form using standard normalization techniques (cf. \cite{carral2014mathcal}). This will ensure that concepts do not occur nested in other concepts and also each functional symbol introduced during normalization depends on as few variables in the rule as possible. It takes an input ontology path that can be parsed by the OWL API (which is in OWL/XML, OWL Functional Syntax, OBO, RDF/RDFS or Turtle format) (cf. \cite{horridge2011owl}) and produces a normalized ontology;
Note that we filter out the following axioms of input ontology: those that are not logical axioms and those containing datatypes, datatype properties, or built-in atoms as the conventional normalization methods are unable to handle them;
\item
Rewriting axioms to get first-order logic rules and writing them in the {\em dlgp format} (for ``Datalog+" \cite{baget2015datalog+});
\item 
Forming all relevant $k$-cycles $\Sigma$ constructed from each transformed rule set and for each $\sigma \in \Sigma$, where $\sigma=(r_1,\dots,r_n)$, we check if $\mathsf{Rule}(\sigma)\in \Delta$, for each $\Delta$ from \{WA, JA, aGRD, MFA\};
	\item
	 For each $\Delta$ from \{WA, JA, aGRD, MFA\} and for each relevant $k$-cycle $\sigma$ such that $\mathsf{Rule}(\sigma)\notin \Delta$,  we check the activeness of $\sigma$ w.r.t.\,$I^{\sigma}$, i.e., we check if there exists a chained tuple of homomorphisms $H=(h_1,\dots,h_n)$ for $\sigma$ at each step ($1 \leq i \leq n$). We implemented a chained homomorphism checker to accomplish this task;
\begin{itemize}
\item
During the above check, whenever a relevant $k$-cycle $\sigma$ is determined to be active w.r.t.\,$I^\sigma$, the rule set $R$ is removed from the candidate pool;
\item
If every $k$-cycle $\sigma$ is not active w.r.t.\,$I^\sigma$, we check the reason for the failure, say for rule $r_i~(1\leq i \leq n)$. If the failure is due to lack of a trigger which is caused by mapping multiple occurrences of a body variable of $r_i$ to distinct indexed constants, then we know, by Theorem \ref{key0}, that for some minimal renaming function $rn$ for $I^\sigma$, a trigger exists so that there is a chained restricted chase sequence from 
$rn(I^\sigma)$ up to (and including) $r_i$. However, we examined all the cases of failure and did not find any failure was caused this way. Therefore, there is no need to continue experiments using the updated restricted critical database as laid out in Theorem \ref{key0}. This is to say that the phenomenon illustrated in Example \ref{exm:counter3} did not show up in our collection of practical ontologies. 
\end{itemize}
\item
Ontologies in the remaining candidate pool are decided to be terminating.
	\end{enumerate}

\subsection{Experimental Results}
\label{sub}
For each ontology, we allowed 2.5 hours to complete all of these tasks. In case of running out of time or memory, we report no terminating result.
For the first experiment, we considered $k$-$\mathsf{safe}(\Phi_{\Delta})$ rule sets for four different cycle functions $\Phi_{\Delta}$ based on WA, JA, aGRD and MFA conditions, respectively, for different values of $k$.

	We consider WA since its acyclicity condition is the easiest to check. We consider three popular syntactic acyclicity conditions WA, JA, and aGRD because the main cost of checking $k$-$\mathsf{safe}(\Phi_{\Delta})$ is then on the extension provided in this paper. 
	Additionally, we consider MFA, a well-known semantic condition for checking the skolem chase termination, which is based on forbidding cyclic functional terms in the chase.
Note that all other (syntactic) conditions considered in this paper are subsets of MFA. 
Besides, it is known that $\WA\subset \JA$ and aGRD is not comparable to either WA or JA.
We are interested to know whether the high worst-case complexity of our extension prohibits applications in the real-world.\footnote{For both MFA and RMFA, the complexity of membership checking is already higher than that of Algorithm 1 (assuming checking $\Delta$ is in \textsc{PTime}, cf. Remark \ref{remark1} 
in Section \ref{k-safe}).}

In Table \ref{table:1}, the results of these experiments are summarized where the values of columns 2-5 denote numbers of ontologies with properties provided in their first row.

\begin{table} []
	\caption{Membership among 700 ontologies in the collected corpora}\label{newcommands}
	\programmath
	\begin{tabular} {| c | c | c | c | c |}
	\cline{1-5}
		k    &   $k$-$\mathsf{safe}(\Phi_{aGRD})$    &    $k$-$\mathsf{safe}(\Phi_{\WA})$    &    $k$-$\mathsf{safe}(\Phi_{\JA})$    &   $k$-$\mathsf{safe}(\Phi_\text{MFA})$
		\\
		\cline{1-5}
 		 $k=0$  &
		163 &
		248 &
		299 &
		483
		 \\ 
		\cline{1-5}
		$k=1$   &
		171 &
		258 &
		310 &
		495
		\\
		\cline{1-5}
		$k=2$   &
		177 &
		264 &
		316 &
		501 
	    \\
		\cline{1-5}
		$k=3$   &
		182 &
		269 &
		321 &
		506
		\\
		\cline{1-5}
		$k=4$   &
		187 &
		274 &
		326 &
		511
		\\
		\cline{1-5}
		$k=5$   & 
		190 &
		277 &
		329 &
		514
		 \\
    	\cline{1-5}
		$k=6$  &
		192 &
		279 &
		331 &
		516
		\\
		\cline{1-5}
	\end{tabular}
	\unprogrammath
	\label{table:1}
\end{table}

\begin{table} [!hbt]
	\programmath
	\begin{tabular}{| c c c c |}
		\cline{1-4}
		\multicolumn{4}{| c |}
		{\multirow{2}{*}
		{\textbf{Average time analysis for $k=6$}}}\\
		& & & 		
		\\
		\cline{1-4}
		Classes  &  Avg. time (s)  &  T.W.A.T. (\#)  &  Terminating (\%)
		\\
		\cline{1-4}
		$6$-$\mathsf{safe}(\Phi_\text{aGRD})$  &  4139  &  125   &   27.4 \\
		\cline{1-4}
		$6$-$\mathsf{safe}(\Phi_\text{WA})$  &  3556  &  164  &  39.8 \\
		\cline{1-4}
		$6$-$\mathsf{safe}(\Phi_\text{JA})$  &  3231  &  183  &  47.2 \\
		\cline{1-4}
		$6$-$\mathsf{safe}(\Phi_\text{MFA})$  &  4923  &  282  &  73.7 \\
		\cline{1-4}
	\end{tabular}
	\unprogrammath
	\caption{Average time analysis for membership testing of terminating ontologies}
	\label{table:2}
\end{table}

Consider the case $k=0$. This is the case where we identify rule sets that are skolem chase terminating under three acyclicity conditions aGRD, WA, and JA as well as under the MFA condition. First, it is not surprising to observe that among 700 ontologies, the first three syntactic conditions identify only a small subset of terminating ontologies. However, when considering the MFA condition, we are able to capture many more rule sets as terminating in this collection.
Second, for our collection of practical ontologies, the gap between the terminating classes under aGRD and WA conditions is indeed non-trivial.
Interestingly, this appears to be the first time that these three syntactic classes of terminating rule sets are compared for practical ontologies. This shows that the theoretical advance from aGRD to WA may have significant practical implications.

As can be seen in Table \ref{table:1}, in all of the considered classes, by increasing $k$, the number of terminating ontologies increases. This is consistent with Theorem \ref{thm:kSAFE}. 
Our experiments stopped at $k=6$ as we did not find more terminating rule sets by testing $k=7$.

We considered some optimizations in our implementation.
Before proceeding further, let us define some notions.
Given a rule set $R$, consider the graph of rule dependencies $\mathcal{G}_R$ of $R$ in which the set of nodes is $R$, and there is an edge from some node $r_i$ to a node $r_i$ if $r_j$ depends on $r_i$.
If there is a path from some rule $r_i$ to a rule $r_j$, then $r_i$ is called to be {\em reachable} from $r_j$.
If each node in $\mathcal{G}_R$ is reachable from each other node, then $\mathcal{G}_R$ is {\em connected}.
A component of $\mathcal{G}_R$ is a {\em maximal connected subgraph} of $\mathcal{G}_R$ (i.e., a connected subgraph of $\mathcal{G}_R$ with node set $X$ for which no larger set $Y$ containing $X$ is connected).

For each rule set $R$, we find the maximal connected subgraphs of $\mathcal{G}_R$ defined as above. Given an acyclicity condition $\Delta$, for each maximal connected subgraph $\mathcal{S}$ of $\mathcal{G}_R$, we check whether $\mathcal{S}\in \Delta$ returns true. If that is the case, then we do not need to check any path based on any non-empty subset of $\mathcal{S}$ for activeness.
The reason is that if $\mathcal{S}\in \Delta$, then any subset $\mathcal{S'}$ of $\mathcal{S}$ also satisfies $\Delta$. Therefore, any cycle $\sigma$ based on $\mathcal{S'}$ is safe.
This helped us remove irrelevant subsets of rules in 83 (11.8\%) of ontologies in our collection.

Given an acyclicity condition $\Delta$, by $\Phi_\Delta$ let us denote the cycle function constructed from $\Delta$. Then a notable subclass of $1$-$\mathsf{safe}(\Phi_\Delta)$ rules is called $\Delta^{\prec}$ introduced in \cite{grau2013acyclicity} which is defined as the set of rules $R$ in which each simple cycle in the graph of rule dependencies $\mathcal{G}_R$ of $R$ belongs to $\Delta$. In our collection, it can be seen that only one ontology is $\WA^{\prec}$ (and therefore, $\JA^{\prec}$), which belongs to $1$-$\mathsf{safe}(\Phi_\WA)$ but not in $1$-$\mathsf{safe}(\Phi_\aGRD)$ in Table \ref{table:1}. 


When $k$ grows from $0$, an interesting observation is that for each pair of acyclicity conditions $\Delta_1$ and $\Delta_2$ such that $\Delta_1\subset \Delta_2$, the rate of increase in the number of terminating rules under $\Phi_{\Delta_2}$ is faster than that of terminating rules under $\Phi_{\Delta_1}$ when $k$ grows from $k=0$ to $k=1$.
Then, for all $k>1$ the increase of terminating rules from $(k-1)$-$\mathsf{safe}(\Phi_{\Delta_i})$ to $k$-$\mathsf{safe}(\Phi_{\Delta_i})$ returns the same result for the case when either $i=1$ or $i=2$.

%

Let us see what happens for $k=1$. Let $R$ be a rule set and $\Delta_1$ and $\Delta_2$ be any pair of acyclicity conditions where $\Delta_1\subset \Delta_2$.
Then there may be some active $k$-cycles $\sigma$ based on $R$ such that $\mathsf{Rule}(\sigma)\in \Delta_2\setminus \Delta_1$. If for all such cycles based on $R$ the above condition holds, then $R$ is in $1$-$\mathsf{safe}(\Phi_{\Delta_2})$ but not in $1$-$\mathsf{safe}(\Phi_{\Delta_1})$.
Hence, for different columns we see some differences between the numbers added to the first row of Table \ref{table:1} in a way that for any pair of acyclicity conditions $\Delta_1$ and $\Delta_2$ such that $\Delta_1\subset \Delta_2$, more rules are added as terminating to $k$-$\mathsf{safe}(\Phi_{\Delta_2})$ compared to $k$-$\mathsf{safe}(\Phi_{\Delta_1})$, when $k$ increases from $0$ to $1$.
%

When $k>1$, we observe an interesting phenomenon  \-- the same number of terminating rule sets, in fact, the same rule sets, are added. Consider any pair of acyclicity conditions $\Delta_1$ and $\Delta_2$ such that $\Delta_1\subset\Delta_2$. For any rule set $R$, assume it is determined to be terminating by $\Delta_2$.
In general, more $k$-cycles based on $R$ need to be checked for the analysis of Algorithm \ref{alg:kSafe} in the case of $\Delta_1$ than $\Delta_2$, due to the weaker acyclicity condition in $\Delta_1$.
For each such $k$-cycle $\sigma$, 
since $\mathrm{Rule}(\sigma)\in \Delta_2$, it cannot be active w.r.t. $rn^{\ast}(I^{\sigma})$ for any renaming function $rn^{\ast}$ (otherwise it would contract Theorem \ref{key0}). 
Therefore, it must pass the activeness checking of Line 7 in Algorithm \ref{alg:kSafe}.
Consequently, just like how $R$ is determined to be terminating by $\Delta_2$, $R$ is determined to be terminating by $\Delta_1$. Conversely, since the set of $k$-cycles tested for $\Delta_1$ is a superset of those tested for $\Delta_2$, a rule set $R$ which is determined to be terminating by $\Delta_1$ must also be determined to be terminating by $\Delta_2$. This explains why the number of increases of terminating rule sets for different acyclicity conditions is a constant. 

This observation leads to a choice of strategy for testing for $k$-$\mathsf{safe}(\Phi_{\Delta})$ for an expensive acyclicity condition $\Delta$.  After failing the check of $1$-$\mathsf{safe}(\Phi_{\Delta})$, we can check 
 $k$-$\mathsf{safe}(\Phi_{\Delta'})$ for $k>1$, for a weaker by each-to-check acyclicity condition $\Delta'$ in the understanding that the same rule sets will be added, with likely more $k$-cycles to be tested.

Also, it is clear that if $\Delta_1\subset \Delta_2$ 
then for all $k\ge 0$, $k$-$\mathsf{safe}(\Phi_{\Delta_1})\subset k$-$\mathsf{safe}(\Phi_{\Delta_2})$.
In Table \ref{table:1}, since $\WA\subset \JA\subset \text{MFA}$, the same inclusion relation holds for $k$-$\mathsf{safe}(\Phi_{\Delta})$ rules constructed from each acyclicity condition $\Delta$ for all integers $k\ge 0$.

In order to compare our results with those of \cite{carral2017restricted}, we checked the set of terminating ontologies under our conditions for membership in RMFA introduced therein.
As a result, it was observed that all the tested rule sets, except for $2$ of them already belong to RMFA.
In fact, those $2$ rule sets belong to $6$-$\mathsf{safe}(\Phi_{\text{MFA}})$.

Additionally, we checked the tested corpora for membership in RMFC (cf. \cite{carral2017restricted}). It was observed that $165$ (23.6\%) of the ontologies belong to RMFC (i.e., for each rule in this category, there is a database $I_0$ for which the restricted chase of $R$ and $I_0$ is infinite).
Based on the above results we find 
that the termination status of $19$ (2.71\%) ontologies in the collection is open (i.e., they do not belong to $6$-$\mathsf{safe}(\Phi_{\text{MFA}})$ or RMFC or RMFA).
We conducted our tests for membership in RMFA and RMFC using VLog \cite{carral2019vlog}.

For the second experiment, we performed time analysis for the tested ontologies for different cycle functions by fixing $k$ to $6$. The results are reported in Table \ref{table:2}, where the average running time, as well as the number of ontologies {\em terminating within the average running time} (abbreviated as T.W.A.T.) for that particular cycle function, are reported.
It can be seen that in all tested conditions more than half of the terminating ontologies can be determined within the average time. Note that the average times of the table are in seconds.

From our experiments we can see that there is no {\em one-number-fits-all} $k$ for which any ontology belongs to $k$-$\mathsf{safe}(\Phi_{\Delta})$. However, as observed in our experiments, for real-world ontologies, this number can be indeed small.

\noindent
\textbf{TGD generator:} For adequately evaluating our approach and also for scalability testing, we implemented a TGD generator on top of \cite{benedikt2017benchmarking}. Our goal was to check the performance of implemented classes on large instances with large sets of chase atoms. 
Our generator can generate custom TGDs while controlling their complexity. It supports an arbitrary number of body atoms. Also, a parameterized total number of predicates and arity of atoms is defined. 
Furthermore, a parameter is used to control the maximum number of repeated relations in the formula.
Each TGD is generated by creating conjunctions and then selecting the subset of atoms that form the head of each TGD.

In our experiments, we generated $500$ linear source-to-target TGDs, and $200$ linear target TGDs. The reason that we picked linear rules was to control one parameter at a time and also to take the complexities of membership checking under control by focusing on the head atoms to have a better analysis on the restricted chase, as checking activeness of paths is the key here.

For the generated scenarios we precomputed restricted critical databases for the source instance generated by our TGD generator and then, to manage the structure of our TGDs, we tested $2$ different forms of TGD heads: 1) those that have three relations joined in a chain (i.e., the last variable of an atom is joined with the first variable of the next atom which we refer to as {\em chained TGDs}); 2) those in which three relations of the head do not share variables (which we refer to as {\em discrete TGDs}).

In all experiments, each atom has arity $4$ and each TGD can have up to $3$ repeated relations. The $3$ head predicates and the body predicate have been chosen randomly out of a space of $20$ predicates.
After the generation of each TGD, we check its membership in $k$-$\mathsf{safe}(\Phi_{\WA})$ for $k=\{0,1,2\}$; keep only those TGDs for which this test returns true and discard the rest. Results of different properties in the tested TGDs have been recorded in Tables \ref{table:3} and \ref{table:4}.

\begin{table}[!hbt]
	\centering
	\begin{tabular}{| c c c c c |}
		\cline{1-5}
		\multicolumn{5}{| c |}
		{\multirow{2}{*}
				{\textbf{Statistics of chained TGD generator for membership in $k$-$\mathsf{safe}(\Phi_{\WA})$}}}\\
				& & & & \\
		\cline{1-5}
		$k$ &
		Avg. time (s) &
		Terminating (\%) &
		Timeout failure (\%) &
		Memory failure (\%)
		\\
		\cline{1-5}
		$k=0$ &
		54 &
		81
		 & 
		 4 & 
		 0 
		 \\
		\cline{1-5}
		$k=1$ & 
		135 &
		83.3
		 & 
		6 & 
		0.2\\
		\cline{1-5}
		$k=2$ &
		474  &
		86.2
		& 7 &
		0.3\\
		\cline{1-5}
	\end{tabular}
	\vspace{2mm}
	\caption{Statistical results of chained TGD generator for $k$-$\mathsf{safe}(\Phi_{\WA})$ membership
	}
	\label{table:3}
\end{table}

\begin{table}[!hbt]
	\centering
	\begin{tabular}{| c c c c c |}
		\cline{1-5}
		\multicolumn{5}{| c |}
		{\multirow{2}{*}
		{\textbf{Statistics of discrete TGD generator for membership in $k$-$\mathsf{safe}(\Phi_{\WA})$}}}\\
		& & & & \\	
		\cline{1-5}
		$k$ & 
		Avg. time (s) &
		Terminating (\%) &
		Timeout failure (\%) &
		Memory failure (\%)
		\\
		\cline{1-5}
		$k=0$ & 
		43 &
		89 & 
		2 & 
		0 \\
		\cline{1-5}
		$k=1$ & 
		94 &
		92 &  
		4 & 
		0.11\\
		\cline{1-5}
		$k=2$ & 
		341  &
		92  &
		4.2  & 
		0.16\\
		\cline{1-5}
	\end{tabular}
	\caption{Statistical results of discrete TGD generator for $k$-$\mathsf{safe}(\Phi_{\WA})$ membership
	}
	\label{table:4}
\end{table}

The results of Tables \ref{table:3} and \ref{table:4} demonstrate that the average running times of membership checking in $k$-$\mathsf{safe}(\Phi_{\WA})$ for chained TGD generator is more than that of discrete TGD generator. The reason could be in the activeness checking module which takes more time in the rules in which (derived) atoms share variables. In addition, in both TGD generators, there are far lesser memory failures than timeout failures.

We performed the same check as detailed in the previous subsection regarding the need for utilizing updated restricted critical databases for rules outputted from our TGD generator, and similar to ontologies in the considered corpora in that section, we did not find any ontology for which we need to run experiments to check activeness with updated restricted critical databases.

\section{Discussion}
\label{8}
In this section, we introduce some more recent papers in this area and then show how to leverage them to extend our proposed $k$-$\mathsf{safe}(\Phi)$ classes uniformly.
In \cite{krotzsch2019power} it is shown that there are examples of TGDs for which the data complexity of the restricted chase can reach non-elementary upper bounds. Note, however, that as shown in \cite{zhang2015existential}, given any $\kappa>0$, for any $\mathsf{exp}_{\kappa}$-bounded rule set $R$ under the skolem chase variant, the Boolean query answering problem is \textsc{PTime}-complete for the data complexity. Therefore, the restricted chase can {\em realize} queries which are out of the reach for the skolem chase variant.

Let us define the notion of a {\em strategy} as a plan of choosing paths based on a given rule set. 
Utilizing this notion allows us to focus on a {\em concrete plan for path selection} in the course of our termination analysis for the restricted chase to extend the set of terminating rules under the restricted chase.
Exploiting the above terminology, $\mathsf{CT}_{\forall\forall}^\mathsf{res}$ can be alternatively defined to be the set of rules with {\em terminating restricted chase for all strategies} and all instances.

On the other hand, from \cite{onet2013chase} it is known that $\mathsf{CT}_{\forall\forall}^\mathsf{res}\subset \mathsf{CT}_{\forall\exists}^\mathsf{res}$, where $\mathsf{CT}_{\forall\exists}^\mathsf{res}$ denotes the class of rule sets $R$ such that for all instances $I$ there exists at least one restricted chase sequence of $I$ and $R$ that is finite.
Similarly, we can define $\mathsf{CT}_{\forall\exists}^\mathsf{res}$ to be the set of rules with {\em terminating restricted chase for some strategy} and all instances.

Recently, a chase variant known as the {\em Datalog-first chase} has been introduced in \cite{carral2017restricted} and subsequently in \cite{krotzsch2019power}, which extends all-path restricted chase by focusing on a particular class of strategies that priorities the application of non-generating (Datalog) rules in any considered restricted chase sequence.
Let $\mathsf{CT}_{\forall\forall}^\mathsf{dlf}$ denote the set of rules with a terminating Datalog-first 
chase for all strategies (paths) and all instances.\footnote{Note that by strategy in the Datalog-first chase we mean a plan for choosing the application order of Datalog rules which must always occur before the application of generating rules (i.e., non-full TGDs).}
Then we have $\mathsf{CT}_{\forall\forall}^\mathsf{res}\subset\mathsf{CT}_{\forall\forall}^\mathsf{dlf}\subseteq \mathsf{CT}_{\forall\exists}^\mathsf{res}$.
Note that although the first inclusion is strict, at the time of writing this paper it is not known whether the second inclusion above is also strict or not.

Based on what was discussed above, we can extend the set of $\delta$-bounded rules under the restricted chase variant as well as $k$-$\mathsf{safe}(\Phi)$ rules for a given bound function $\delta$, integer $k$ and cycle function $\Phi$ by considering the Datalog-first chase.
For this purpose, we only need to focus on cycles in which Datalog rules appear before non-Datalog rules (except for the last rule of each path). 
Given a bound function $\delta$, let us call any rule set $R$ that is $\delta$-bounded under the above condition {\em $\delta$-bounded under the Datalog-first chase}. We can define {\em $k$-$\mathsf{safe}(\Phi)$ rules under the Datalog-first chase} similarly.\footnote{In this case, $R$ is said to be in $k$-$\mathsf{safe}(\Phi)$ under the Datalog-first chase, or to belong to $k$-$\mathsf{safe}(\Phi)$ under the Datalog-first chase (given a cycle function $\Phi$ and an integer $k$), if for every $k$-cycle $\sigma$ which prioritises Datalog rules (except the last rule of the cycle), and is mapped to $F$ under $\Phi^R$, $\sigma$ is safe.}

\begin{exm} \label{exm:fairness}
Let $R=\{r_1,r_2\}$ (adopted from \cite{gogacz2019all}), where
	$$
	\begin{array}{ll}
	r_1\!:  ~ Q(x,y,y) \rightarrow \exists u Q(x,u,y), Q(u,y,y)\\
	r_2\!:  ~ Q(x,y,z) \rightarrow Q(z,z,z)
	\end{array}
	$$

Note that in this rule set the fairness condition requires application of $r_2$ in any (fair) sequence of the restricted chase and after $r_2$ is applied, the next application of $r_1$ is not active and therefore, any (fair) restricted chase sequence terminates. The following derivation starting from $\{Q(a,b,b)\}$ demonstrates such a sequence in which fresh nulls $z_i$ are used to instantiate the existential variable $u$:
	$$
	\begin{array}{ll}
	I_0 = \big\{Q(a,b,b)\big\} \xlongrightarrow{\langle r_1,\{x/a, y/b\}\rangle}\\
	I_1 = I_0 \cup \big\{Q(a,z_1,b),Q(z_1,b,b)\big\} \xlongrightarrow{\langle r_1, \{x/z_1, y/b\}\rangle}\\
	I_2 = I_1 \cup \big\{Q(z_1,z_2,b),Q(z_2,b,b)\big\}
	\xlongrightarrow{\langle r_1, \{x/z_2, y/b\}\rangle}\\
	I_3 = I_2 \cup \big\{Q(z_2,z_3,b),Q(z_3,b,b)\}\xlongrightarrow{\langle r_1, \{x/z_3, y/b\}\rangle}\\
	\dots\\
	I_{j-1} = I_{j-2} \cup \big\{Q(z_{j-2},z_{j-1},b),Q(z_{j-1},b,b)\}
	\xlongrightarrow{\langle r_2, \{x/a, y/b, z/b\}\rangle}\\
	I_j = I_{j-1} \cup \big\{Q(b,b,b)\}\\
	\end{array}
	\vspace{.1in}
	$$

Note that in the above sequence of derivations, the following step: $I_j\langle r_1, \{x/z_{j-1}, y/b\}\rangle I_{j+1}$ does not exist, and any valid restricted chase sequence terminates. However, the fairness condition needs the existence of some $j$ to apply some active trigger involving $r_2$ (i.e., $\langle r_2, \{x/a, y/b\}\rangle$ in this example). But due to the non-deterministic nature of this process, $j$ can be chosen anywhere in the sequence.

As discussed above, rule set $R$ in this example is not $\delta$-bounded under the restricted chase for any computable bound function $\delta$. However, starting from any database $I$, no (fair) infinite restricted chase sequence can be constructed from $R$ and $I$.

On the other hand, it is not hard to see that $R$ belongs to $1$-$\mathsf{safe}(\Phi_\text{WA})$ under the Datalog-first chase. The reason is that this chase variant requires the application of $r_2$ before $r_1$ in any valid chase sequence. Therefore, all $1$-cycles in which the application of Datalog rules are prioritised (i.e., $(r_2,r_1,r_2)$) are safe. The following sequence of derivations shows why this is the case.
	$$
	\begin{array}{ll}
	I'_0 = \big\{Q(a,b,c)\big\} \xlongrightarrow{\langle r_2,\{x/a, y/b, z/c\}\rangle}\\
	I'_1 = I'_0 \cup \big\{Q(b,b,b)\big\} \xlongrightarrow{\langle r_1, \{x/b, y/b\}\rangle}\\
	I'_2 = I'_1 \cup \big\{Q(b,z_1,b),Q(z_1,b,b)\big\}
	\xLongrightarrow{\theta=\{z_1/b\}} \theta(I'_2)\subseteq I'_1
	\end{array}
	\vspace{.1in}
	$$
\end{exm}


Notice that as recently shown in \cite{gogacz2019all}, if the given rule set is single-head (i.e., all rules in it are single-head), then the fairness for the restricted chase termination is irrelevant. 
However, unlike the skolem chase variant for which there is a straightforward termination-preserving translation from any rule set to a single-head rule set (cf. \cite{baget2011rules}), no such termination-preserving translation exists for the restricted chase.

Clearly, given an integer $k$ and a cycle function $\Phi$, any rule set that is $k$-$\mathsf{safe}(\Phi)$ under the restricted chase is also $k$-$\mathsf{safe}(\Phi)$ under the Datalog-first chase. 
Example \ref{exm:fairness} shows that this inclusion relation is indeed strict. 
The same argument holds for $\delta$-bounded rules under the restricted vs. the Datalog-first chase using the same example to demonstrate that the inclusion is strict.

\section{Conclusion and Future Work}\label{conclusion}
	In this work, we introduced a general framework to extend classes of chase terminating rule sets. We formulated a technique to characterize finite restricted chase which can be applied to extend any class of finite skolem chase identified by a condition of acyclicity. The main strength of our work, which is also the main distinction from almost all previous work on chase termination, is its generality.
	Then, we showed how to apply our techniques to extend $\delta$-bounded rule sets.
	Our theoretical results for complexity analyses showed that in general this extension indeed increases the complexities of membership checking and the complexity of combined reasoning tasks for $\delta$-bounded rule sets under the restricted chase compared to the skolem chase.
	However, by implementation and experimentation, we showed the relevance of our work in real-world ontologies.  
	Our experimental results discovered a growing number of practical ontologies with finite restricted chase by increasing computational cost as well as changing the underlying cycle function. Our experimentation also showed evidence that existential rules provide a suitable modeling language for ontological reasoning.

We will next investigate conditions for subclasses with a reduction of cost for membership testing. One idea is to find syntactic conditions under which triggers to a rule are necessarily active.

The current implementation of our system is relatively slow, particularly for non-linear rules. It often requires long chase times to check membership in $k$-$\mathsf{safe}(\Phi)$ even for small values of $k$. 
In order to tackle this problem, we can consider two options which can also be combined towards a more efficient implementation. 
The first option is considering $k$-$\mathsf{safe}(\Phi)$ under a particular path selection strategy such as the Datalog-first approach. This way, we can filter out a subset of paths in our membership analysis.
Alternatively, we can conduct our implementations in {\em MapReduce} model.

\bibliographystyle{acmtrans}
\bibliography{bibfile_ext}

\begin{thebibliography}{}

\bibitem[\protect\citeauthoryear{Abiteboul, Hull, and Vianu}{Abiteboul
  et~al\mbox{.}}{1995}]{abiteboul1995foundations}
{\sc Abiteboul, S.}, {\sc Hull, R.}, {\sc and} {\sc Vianu, V.} 1995.
\newblock {\em Foundations of databases: the logical level}.
\newblock Addison-Wesley Longman Publishing Co., Inc.

\bibitem[\protect\citeauthoryear{Baget}{Baget}{2004}]{baget2004improving}
{\sc Baget, J.-F.} 2004.
\newblock Improving the forward chaining algorithm for conceptual graphs rules.
\newblock In {\em Proceedings of the Ninth International Conference on
  Principles of Knowledge Representation and Reasoning}. {AAAI} Press,
  407--414.

\bibitem[\protect\citeauthoryear{Baget, Garreau, Mugnier, and Rocher}{Baget
  et~al\mbox{.}}{2014a}]{baget2014extending}
{\sc Baget, J.-F.}, {\sc Garreau, F.}, {\sc Mugnier, M.-L.}, {\sc and} {\sc
  Rocher, S.} 2014a.
\newblock Extending acyclicity notions for existential rules.
\newblock In {\em Proceedings of the 21st European Conference on Artificial
  Intelligence}. {IOS} Press, 39--44.

\bibitem[\protect\citeauthoryear{Baget, Garreau, Mugnier, and Rocher}{Baget
  et~al\mbox{.}}{2014b}]{baget2014revisiting}
{\sc Baget, J.-F.}, {\sc Garreau, F.}, {\sc Mugnier, M.-L.}, {\sc and} {\sc
  Rocher, S.} 2014b.
\newblock Revisiting chase termination for existential rules and their
  extension to nonmonotonic negation.
\newblock In {\em Proceedings of the 15th International Workshop on
  Non-Monotonic Reasoning (NMR 2014)}.

\bibitem[\protect\citeauthoryear{Baget, Gutierrez, Leclere, Mugnier, Rocher,
  and Sipieter}{Baget et~al\mbox{.}}{2015}]{baget2015datalog+}
{\sc Baget, J.-F.}, {\sc Gutierrez, A.}, {\sc Leclere, M.}, {\sc Mugnier,
  M.-L.}, {\sc Rocher, S.}, {\sc and} {\sc Sipieter, C.} 2015.
\newblock {Datalog+, RuleML and OWL 2}: Formats and translations for
  existential rules.
\newblock In {\em Proceedings of the RuleML 2015 Challenge (Challenge+ DC@
  RuleML)}. {CEUR} Workshop Proceedings, vol. 1417. CEUR-WS.org.

\bibitem[\protect\citeauthoryear{Baget, Lecl{\`e}re, Mugnier, Rocher, and
  Sipieter}{Baget et~al\mbox{.}}{2015}]{baget2015graal}
{\sc Baget, J.-F.}, {\sc Lecl{\`e}re, M.}, {\sc Mugnier, M.-L.}, {\sc Rocher,
  S.}, {\sc and} {\sc Sipieter, C.} 2015.
\newblock Graal: A toolkit for query answering with existential rules.
\newblock In {\em Proceedings of the International Symposium on Rules and Rule
  Markup Languages for the Semantic Web}. LNCS, vol. 9202. Springer, 328--344.

\bibitem[\protect\citeauthoryear{Baget, Lecl{\`e}re, Mugnier, and Salvat}{Baget
  et~al\mbox{.}}{2009}]{baget2009extending}
{\sc Baget, J.-F.}, {\sc Lecl{\`e}re, M.}, {\sc Mugnier, M.-L.}, {\sc and} {\sc
  Salvat, E.} 2009.
\newblock Extending decidable cases for rules with existential variables.
\newblock In {\em Proceedings of the Twenty-First International Joint
  Conference on Artificial Intelligence}. 677--682.

\bibitem[\protect\citeauthoryear{Baget, Lecl{\`e}re, Mugnier, and Salvat}{Baget
  et~al\mbox{.}}{2011}]{baget2011rules}
{\sc Baget, J.-F.}, {\sc Lecl{\`e}re, M.}, {\sc Mugnier, M.-L.}, {\sc and} {\sc
  Salvat, E.} 2011.
\newblock On rules with existential variables: Walking the decidability line.
\newblock {\em Artificial Intelligence\/}~{\em 175,\/}~9, 1620--1654.

\bibitem[\protect\citeauthoryear{Beeri and Vardi}{Beeri and Vardi}{1981}]{BV81}
{\sc Beeri, C.} {\sc and} {\sc Vardi, M.~Y.} 1981.
\newblock The implication problem for data dependencies.
\newblock In {\em Proceedings of the8th International Colloquium on Automata,
  Languages, and Programming}. LNCS, vol. 115. Springer, 73--85.

\bibitem[\protect\citeauthoryear{Beeri and Vardi}{Beeri and
  Vardi}{1984}]{beeri1984proof}
{\sc Beeri, C.} {\sc and} {\sc Vardi, M.~Y.} 1984.
\newblock A proof procedure for data dependencies.
\newblock {\em Journal of the ACM\/}~{\em 31,\/}~4, 718--741.

\bibitem[\protect\citeauthoryear{Benedikt, Konstantinidis, Mecca, Motik,
  Papotti, Santoro, and Tsamoura}{Benedikt
  et~al\mbox{.}}{2017}]{benedikt2017benchmarking}
{\sc Benedikt, M.}, {\sc Konstantinidis, G.}, {\sc Mecca, G.}, {\sc Motik, B.},
  {\sc Papotti, P.}, {\sc Santoro, D.}, {\sc and} {\sc Tsamoura, E.} 2017.
\newblock Benchmarking the chase.
\newblock In {\em Proceedings of the 36th ACM Symposium on Principles of
  Database Systems}. {ACM}, 37--52.

\bibitem[\protect\citeauthoryear{Bourhis, Manna, Morak, and Pieris}{Bourhis
  et~al\mbox{.}}{2016}]{bourhis2016guarded}
{\sc Bourhis, P.}, {\sc Manna, M.}, {\sc Morak, M.}, {\sc and} {\sc Pieris, A.}
  2016.
\newblock Guarded-based disjunctive tuple-generating dependencies.
\newblock {\em ACM Transactions on Database Systems\/}~{\em 41,\/}~4,
  27:1--27:45.

\bibitem[\protect\citeauthoryear{Calautti, Gottlob, and Pieris}{Calautti
  et~al\mbox{.}}{2015}]{calautti2015chase}
{\sc Calautti, M.}, {\sc Gottlob, G.}, {\sc and} {\sc Pieris, A.} 2015.
\newblock Chase termination for guarded existential rules.
\newblock In {\em Proceedings of the 34th ACM Symposium on Principles of
  Database Systems}. ACM, 91--103.

\bibitem[\protect\citeauthoryear{Calautti and Pieris}{Calautti and
  Pieris}{2019}]{calautti2019oblivious}
{\sc Calautti, M.} {\sc and} {\sc Pieris, A.} 2019.
\newblock Oblivious chase termination: The sticky case.
\newblock In {\em Proceedings of the Twenty-Second International Conference on
  Database Theory}. LIPIcs, vol. 127. Schloss Dagstuhl - Leibniz-Zentrum
  f{\"{u}}r Informatik, 17:1--17:18.

\bibitem[\protect\citeauthoryear{Cali, Gottlob, Lukasiewicz, Marnette, and
  Pieris}{Cali et~al\mbox{.}}{2010}]{cali-lics10}
{\sc Cali, A.}, {\sc Gottlob, G.}, {\sc Lukasiewicz, T.}, {\sc Marnette, B.},
  {\sc and} {\sc Pieris, A.} 2010.
\newblock Datalog+/-: A family of logical knowledge representation and query
  languages for new applications.
\newblock In {\em Proceedings of the Twenty-Fifth Annual IEEE Symposium on
  Logic in Computer Science}. IEEE, 228--242.

\bibitem[\protect\citeauthoryear{Carral, Dragoste, Gonz{\'a}lez, Jacobs,
  Kr{\"o}tzsch, and Urbani}{Carral et~al\mbox{.}}{2019}]{carral2019vlog}
{\sc Carral, D.}, {\sc Dragoste, I.}, {\sc Gonz{\'a}lez, L.}, {\sc Jacobs, C.},
  {\sc Kr{\"o}tzsch, M.}, {\sc and} {\sc Urbani, J.} 2019.
\newblock Vlog: A rule engine for knowledge graphs.
\newblock In {\em Proceedings of the16th International Semantic Web
  Conference}. LNCS, vol. 11779. Springer, 19--35.

\bibitem[\protect\citeauthoryear{Carral, Dragoste, and Kr{\"o}tzsch}{Carral
  et~al\mbox{.}}{2017}]{carral2017restricted}
{\sc Carral, D.}, {\sc Dragoste, I.}, {\sc and} {\sc Kr{\"o}tzsch, M.} 2017.
\newblock Restricted chase (non) termination for existential rules with
  disjunctions.
\newblock In {\em Proceedings of the Twenty-Sixth International Joint
  Conference on Artificial Intelligence}. ijcai.org, 922--928.

\bibitem[\protect\citeauthoryear{Carral, Feier, Grau, Hitzler, and
  Horrocks}{Carral et~al\mbox{.}}{2014}]{carral2014mathcal}
{\sc Carral, D.}, {\sc Feier, C.}, {\sc Grau, B.~C.}, {\sc Hitzler, P.}, {\sc
  and} {\sc Horrocks, I.} 2014.
\newblock {EL}-ifying ontologies.
\newblock In {\em Proceedings of the7th International Joint Conference on
  Automated Reasoning}. LNCS, vol. 8562. Springer, 464--479.

\bibitem[\protect\citeauthoryear{Cuenca~Grau, Horrocks, Kr{\"o}tzsch, Kupke,
  Magka, Motik, and Wang}{Cuenca~Grau
  et~al\mbox{.}}{2013}]{cuenca2013acyclicity}
{\sc Cuenca~Grau, B.}, {\sc Horrocks, I.}, {\sc Kr{\"o}tzsch, M.}, {\sc Kupke,
  C.}, {\sc Magka, D.}, {\sc Motik, B.}, {\sc and} {\sc Wang, Z.} 2013.
\newblock Acyclicity notions for existential rules and their application to
  query answering in ontologies.
\newblock {\em Journal of Artificial Intelligence Research\/}~{\em 47},
  741--808.

\bibitem[\protect\citeauthoryear{Dantsin, Eiter, Gottlob, and Voronkov}{Dantsin
  et~al\mbox{.}}{2001}]{dantsin2001complexity}
{\sc Dantsin, E.}, {\sc Eiter, T.}, {\sc Gottlob, G.}, {\sc and} {\sc Voronkov,
  A.} 2001.
\newblock Complexity and expressive power of logic programming.
\newblock {\em ACM Computing Surveys\/}~{\em 33,\/}~3, 374--425.

\bibitem[\protect\citeauthoryear{Delivorias, Lecl{\`e}re, Mugnier, and
  Ulliana}{Delivorias et~al\mbox{.}}{2018}]{delivorias2018k}
{\sc Delivorias, S.}, {\sc Lecl{\`e}re, M.}, {\sc Mugnier, M.-L.}, {\sc and}
  {\sc Ulliana, F.} 2018.
\newblock On the k-boundedness for existential rules.
\newblock In {\em Proceedings of the 2nd International Joint Conference on
  Rules and Reasoning}. LNCS, vol. 11092. Springer, 48--64.

\bibitem[\protect\citeauthoryear{Deutsch, Nash, and Remmel}{Deutsch
  et~al\mbox{.}}{2008}]{deutsch2008chase}
{\sc Deutsch, A.}, {\sc Nash, A.}, {\sc and} {\sc Remmel, J.} 2008.
\newblock The chase revisited.
\newblock In {\em Proceedings of the Twenty-Seventh ACM Symposium on Principles
  of Database Systems}. ACM, 149--158.

\bibitem[\protect\citeauthoryear{Fagin, Kolaitis, Miller, and Popa}{Fagin
  et~al\mbox{.}}{2003}]{fagin2003data}
{\sc Fagin, R.}, {\sc Kolaitis, P.~G.}, {\sc Miller, R.~J.}, {\sc and} {\sc
  Popa, L.} 2003.
\newblock Data exchange: Semantics and query answering.
\newblock In {\em Proceedings of the 9th International Conference on Database
  Theory}. LNCS, vol. 2572. Springer, 207--224.

\bibitem[\protect\citeauthoryear{Fagin, Kolaitis, Miller, and Popa}{Fagin
  et~al\mbox{.}}{2005}]{fagin2005data}
{\sc Fagin, R.}, {\sc Kolaitis, P.~G.}, {\sc Miller, R.~J.}, {\sc and} {\sc
  Popa, L.} 2005.
\newblock Data exchange: Semantics and query answering.
\newblock {\em Theoretical Computer Science\/}~{\em 336,\/}~1, 89--124.

\bibitem[\protect\citeauthoryear{Gogacz, Marcinkowski, and Pieris}{Gogacz
  et~al\mbox{.}}{2019}]{gogacz2019all}
{\sc Gogacz, T.}, {\sc Marcinkowski, J.}, {\sc and} {\sc Pieris, A.} 2019.
\newblock All-instances restricted chase termination: The guarded case.
\newblock {\em arXiv preprint arXiv:1901.03897\/}.

\bibitem[\protect\citeauthoryear{Grahne and Onet}{Grahne and
  Onet}{2018}]{DBLP:journals/fuin/GrahneO18}
{\sc Grahne, G.} {\sc and} {\sc Onet, A.} 2018.
\newblock Anatomy of the chase.
\newblock {\em Fundamenta Informaticae\/}~{\em 157,\/}~3, 221--270.

\bibitem[\protect\citeauthoryear{Grau, Horrocks, Kr{\"o}tzsch, Kupke, Magka,
  Motik, and Wang}{Grau et~al\mbox{.}}{2013}]{grau2013acyclicity}
{\sc Grau, B.~C.}, {\sc Horrocks, I.}, {\sc Kr{\"o}tzsch, M.}, {\sc Kupke, C.},
  {\sc Magka, D.}, {\sc Motik, B.}, {\sc and} {\sc Wang, Z.} 2013.
\newblock Acyclicity notions for existential rules and their application to
  query answering in ontologies.
\newblock {\em Journal of Artificial Intelligence Research\/}~{\em 47},
  741--808.

\bibitem[\protect\citeauthoryear{Hell and Ne{\v{s}}et{\v{r}}il}{Hell and
  Ne{\v{s}}et{\v{r}}il}{1992}]{hell1992core}
{\sc Hell, P.} {\sc and} {\sc Ne{\v{s}}et{\v{r}}il, J.} 1992.
\newblock The core of a graph.
\newblock {\em Discrete Mathematics\/}~{\em 109,\/}~1-3, 117--126.

\bibitem[\protect\citeauthoryear{Hillebrand, Kanellakis, Mairson, and
  Vardi}{Hillebrand et~al\mbox{.}}{1995}]{hillebrand1995undecidable}
{\sc Hillebrand, G.~G.}, {\sc Kanellakis, P.~C.}, {\sc Mairson, H.~G.}, {\sc
  and} {\sc Vardi, M.~Y.} 1995.
\newblock Undecidable boundedness problems for datalog programs.
\newblock {\em Journal of Logic Programming\/}~{\em 25,\/}~2, 163--190.

\bibitem[\protect\citeauthoryear{Horridge and Bechhofer}{Horridge and
  Bechhofer}{2011}]{horridge2011owl}
{\sc Horridge, M.} {\sc and} {\sc Bechhofer, S.} 2011.
\newblock The {OWL API}: A java {API} for {OWL} ontologies.
\newblock {\em Semantic Web\/}~{\em 2,\/}~1, 11--21.

\bibitem[\protect\citeauthoryear{Immerman}{Immerman}{1988}]{immerman1988nondeterministic}
{\sc Immerman, N.} 1988.
\newblock Nondeterministic space is closed under complementation.
\newblock {\em SIAM Journal on Computing\/}~{\em 17,\/}~5, 935--938.

\bibitem[\protect\citeauthoryear{Karimi, Zhang, and You}{Karimi
  et~al\mbox{.}}{2018}]{karimi2018restricted}
{\sc Karimi, A.}, {\sc Zhang, H.}, {\sc and} {\sc You, J.-H.} 2018.
\newblock Restricted chase termination: A hierarchical approach and
  experimentation.
\newblock In {\em Proceedings of the 2nd International Joint Conference on
  Rules and Reasoning}. LNCS, vol. 11092. Springer, 98--114.

\bibitem[\protect\citeauthoryear{Kr{\"o}tzsch, Marx, and Rudolph}{Kr{\"o}tzsch
  et~al\mbox{.}}{2019}]{krotzsch2019power}
{\sc Kr{\"o}tzsch, M.}, {\sc Marx, M.}, {\sc and} {\sc Rudolph, S.} 2019.
\newblock The power of the terminating chase.
\newblock In {\em Proceedings of the Twenty-Second International Conference on
  Database Theory}. LIPIcs, vol. 127. Schloss Dagstuhl - Leibniz-Zentrum
  f{\"{u}}r Informatik, 3:1--3:17.

\bibitem[\protect\citeauthoryear{Kr{\"o}tzsch and Rudolph}{Kr{\"o}tzsch and
  Rudolph}{2011}]{krotzsch2011extending}
{\sc Kr{\"o}tzsch, M.} {\sc and} {\sc Rudolph, S.} 2011.
\newblock Extending decidable existential rules by joining acyclicity and
  guardedness.
\newblock In {\em Proceedings of the Twenty-Second International Joint
  Conference on Artificial Intelligence}. {IJCAI/AAAI}, 963--968.

\bibitem[\protect\citeauthoryear{Lecl{\`e}re, Mugnier, Thomazo, and
  Ulliana}{Lecl{\`e}re et~al\mbox{.}}{2019}]{leclere2019single}
{\sc Lecl{\`e}re, M.}, {\sc Mugnier, M.-L.}, {\sc Thomazo, M.}, {\sc and} {\sc
  Ulliana, F.} 2019.
\newblock A single approach to decide chase termination on linear existential
  rules.
\newblock In {\em Proceedings of Twenty-Second International Conference on
  Database Theory}. LIPIcs, vol. 127. Schloss Dagstuhl - Leibniz-Zentrum
  f{\"{u}}r Informatik, 18:1--18:19.

\bibitem[\protect\citeauthoryear{Marcinkowski}{Marcinkowski}{1999}]{marcinkowski1999achilles}
{\sc Marcinkowski, J.} 1999.
\newblock Achilles, turtle, and undecidable boundedness problems for small
  datalog programs.
\newblock {\em SIAM Journal on Computing\/}~{\em 29,\/}~1, 231--257.

\bibitem[\protect\citeauthoryear{Marnette}{Marnette}{2009}]{marnette2009generalized}
{\sc Marnette, B.} 2009.
\newblock Generalized schema-mappings: from termination to tractability.
\newblock In {\em Proceedings of the Twenty-Eighth ACM Symposium on Principles
  of Database Systems}. {ACM}, 13--22.

\bibitem[\protect\citeauthoryear{Matentzoglu and Parsia}{Matentzoglu and
  Parsia}{2014}]{matentzoglu_2014_10851}
{\sc Matentzoglu, N.} {\sc and} {\sc Parsia, B.} 2014.
\newblock {The Manchester OWL Corpus (MOWLCorp), original serialisation}.

\bibitem[\protect\citeauthoryear{Onet}{Onet}{2013}]{onet2013chase}
{\sc Onet, A.} 2013.
\newblock The chase procedure and its applications in data exchange.
\newblock In {\em Data Exchange, Integration, and Streams}. Dagstuhl
  Follow-Ups, vol.~5. Schloss Dagstuhl - Leibniz-Zentrum f{\"{u}}r Informatik,
  1--37.

\bibitem[\protect\citeauthoryear{Rutenburg}{Rutenburg}{1986}]{rutenburg1986complexity}
{\sc Rutenburg, V.} 1986.
\newblock Complexity of generalized graph coloring.
\newblock In {\em Proceedings of the International Symposium on Mathematical
  Foundations of Computer Science}. LNCS, vol. 233. Springer, 573--581.

\bibitem[\protect\citeauthoryear{Savitch}{Savitch}{1970}]{savitch1970relationships}
{\sc Savitch, W.~J.} 1970.
\newblock Relationships between nondeterministic and deterministic tape
  complexities.
\newblock {\em Journal of Computer and System Sciences\/}~{\em 4,\/}~2,
  177--192.

\bibitem[\protect\citeauthoryear{Zhang, Zhang, and You}{Zhang
  et~al\mbox{.}}{2015}]{zhang2015existential}
{\sc Zhang, H.}, {\sc Zhang, Y.}, {\sc and} {\sc You, J.-H.} 2015.
\newblock Existential rule languages with finite chase: Complexity and
  expressiveness.
\newblock In {\em Proceedings of the Twenty-Ninth AAAI Conference on Artificial
  Intelligence}. {AAAI} Press, 1678--1685.

\end{thebibliography}

\label{lastpage}
\end{document}